\documentclass{theoretics}


\usepackage{float}
\floatstyle{boxed}
\restylefloat{figure}
\usepackage{graphbox}
\usepackage{subcaption}
\usepackage{breakcites}

\ThCSnewtheostd{assumption}

\numberwithin{theorem}{section}
\numberwithin{equation}{section}


\newcommand{\Z}{\mathbb{Z}} 
 
\newcommand{\R}{\mathbb{R}} 
\newcommand{\C}{\mathbb{C}} 
\newcommand{\G}{\mathbb{G}}


\newcommand{\ev}{{\mathbf{e}}}

\newcommand{\gv}{{\mathbf{g}}}
\newcommand{\hv}{{\mathbf{h}}}

\newcommand{\rv}{{\mathbf{r}}}
\newcommand{\sv}{{\mathbf{s}}}
\newcommand{\tv}{{\mathbf{t}}}
\newcommand{\uv}{{\mathbf{u}}}
\renewcommand{\vv}{{\mathbf{v}}}
\newcommand{\wv}{{\mathbf{w}}}
\newcommand{\xv}{{\mathbf{x}}}
\newcommand{\yv}{{\mathbf{y}}}


\newcommand{\Am}{{\mathbf{A}}}

\newcommand{\As}{{\mathcal{A}}}
\newcommand{\Bs}{{\mathcal{B}}}
\newcommand{\Cs}{{\mathcal{C}}}
\newcommand{\Ds}{{\mathcal{D}}}
\newcommand{\Es}{{\mathcal{E}}}
\newcommand{\Fs}{{\mathcal{F}}}
\newcommand{\Gs}{{\mathcal{G}}}
\newcommand{\Hs}{{\mathcal{H}}}

\newcommand{\Ms}{{\mathcal{M}}}

\newcommand{\Os}{{\mathcal{O}}}

\newcommand{\Ss}{{\mathcal{S}}}
\newcommand{\Ts}{{\mathcal{T}}}

\newcommand{\Xs}{{\mathcal{X}}}
\newcommand{\Ys}{{\mathcal{Y}}}

\newcommand{\negl}{{\sf negl}}

\newcommand{\setupprotocol}[2]{%
	\expandafter\newcommand\csname protocol#1\endcsname{{\sf \Pi_{#2}}}%
	\expandafter\newcommand\csname gen#1\endcsname{{\sf Gen_{#2}}}%
	\expandafter\newcommand\csname genlossy#1\endcsname{{\sf GenLossy_{#2}}}%
	\expandafter\newcommand\csname setup#1\endcsname{{\sf Setup_{#2}}}%
	\expandafter\newcommand\csname enc#1\endcsname{{\sf Enc_{#2}}}%
	\expandafter\newcommand\csname dec#1\endcsname{{\sf Dec_{#2}}}%
	\expandafter\newcommand\csname extract#1\endcsname{{\sf Extract_{#2}}}%
	\expandafter\newcommand\csname derive#1\endcsname{{\sf Derive_{#2}}}%
	\expandafter\newcommand\csname embed#1\endcsname{{\sf Embed_{#2}}}%
	\expandafter\newcommand\csname rerand#1\endcsname{{\sf ReRand_{#2}}}%
	\expandafter\newcommand\csname trace#1\endcsname{{\sf Trace_{#2}}}%
	\expandafter\newcommand\csname goodtrace#1\endcsname{{\sf GoodTr_{#2}}}%
	\expandafter\newcommand\csname badtrace#1\endcsname{{\sf BadTr_{#2}}}%
	\expandafter\newcommand\csname gooddecoder#1\endcsname{{\sf GoodDec_{#2}}}%
	\expandafter\newcommand\csname goodf#1\endcsname{{\sf Goodfunc_{#2}}}%
	\expandafter\newcommand\csname badextract#1\endcsname{{\sf BadExtr_{#2}}}%
	\expandafter\newcommand\csname adv#1\endcsname{{\As_{#2}}}%
	\expandafter\newcommand\csname sample#1\endcsname{{\sf Sample_{#2}}}%
	\expandafter\newcommand\csname diff#1\endcsname{{\sf Diff_{#2}}}%
	\expandafter\newcommand\csname identify#1\endcsname{{\sf Indentify_{#2}}}%
	\expandafter\newcommand\csname findtags#1\endcsname{{\sf FindTags_{#2}}}%
	\expandafter\newcommand\csname confirmtags#1\endcsname{{\sf ConfirmTags_{#2}}}%
	\expandafter\newcommand\csname eval#1\endcsname{{\sf Eval_{#2}}}%
	\expandafter\newcommand\csname sign#1\endcsname{{\sf Sign_{#2}}}%
	\expandafter\newcommand\csname ver#1\endcsname{{\sf Ver_{#2}}}%
	\expandafter\newcommand\csname prf#1\endcsname{{\sf F_{#2}}}%
	\expandafter\newcommand\csname Sim#1\endcsname{{\sf Sim_{#2}}}%

	\expandafter\newcommand\csname msk#1\endcsname{{\sf msk_{#2}}}%
	\expandafter\newcommand\csname sk#1\endcsname{{\sf sk_{#2}}}%
	\expandafter\newcommand\csname pk#1\endcsname{{\sf pk_{#2}}}%
}

\setupprotocol{}{}

\newcommand{\QFT}{{\sf QFT}}
\newcommand{\GGAM}{{\sf GGAM}}

\newcommand{\ignore}[1]{}

\floatstyle{plain}
\restylefloat{figure}

\title{Quantum Money from Abelian Group Actions}
\ThCSauthor[aff1]{Mark Zhandry}{mzhandry@stanford.edu}[0000-0001-7071-6272]
\ThCSaffil[aff1]{ NTT Research \& Stanford University }
\ThCSthanks{This article was invited from ITCS 2024. \cite{DBLP:conf/innovations/Zhandry24}}
\ThCSshortnames{M. Zhandry}  
\ThCSshorttitle{Quantum Money from Abelian Group Actions}
\ThCSyear{2025}
\ThCSarticlenum{18}
\ThCSreceived{Apr 6, 2024} 
\ThCSrevised{Apr 17, 2025}
\ThCSaccepted{Jun 29, 2025}
\ThCSpublished{Aug 19, 2025}
\ThCSdoicreatedtrue
\ThCSkeywords{Quantum money, group actions, isogenies}

\addbibresource{extracted.bib}

\begin{document}

\maketitle

\begin{abstract}We give a construction of public key quantum money, and even a strengthened version called quantum lightning, from abelian group actions, which can in turn be constructed from suitable isogenies over elliptic curves. We prove security in the generic group model for group actions under a plausible computational assumption, and develop a general toolkit for proving quantum security in this model. Along the way, we explore knowledge assumptions and algebraic group actions in the quantum setting, finding significant limitations of these assumptions/models compared to generic group actions. 
\end{abstract}

\section{Introduction}\label{sec:intro}

Quantum money, first envisioned by Wiesner~\cite{Wiesner83}, is a system of money where banknotes are quantum states. By the no-cloning theorem, such banknotes cannot be copied, leading to un-counterfeitable currency. A critical goal for quantum money, identified by~\cite{CCC:Aaronson09}, is \emph{public verification}, allowing anyone to verify while only the mint can create new banknotes. Such public key quantum money is an important central object in the study of quantum protocols, but unfortunately convincing constructions have remained elusive. See Section~\ref{sec:related} for a more thorough discussion of prior work in the area.

\paragraph{This Work.} We construct public key quantum money from abelian group actions, which can be instantiated by suitable isogenies over ordinary elliptic curves. Group actions, and the isogenies they abstract, are one of the leading contenders for post-quantum secure cryptosystems. Our construction could plausibly even be quantum lightning, a strengthening of quantum money with additional applications. Our construction is arguably the first time group actions have been used to solve a classically-impossible cryptographic task that could not already be solved using other standard tools like LWE. Our construction is sketched in Section~\ref{sec:constroverview} below, and given in detail in Section~\ref{sec:constr}.

While our main construction can be instantiated on a clean abelian group action --- often referred to as an ``effective group action'' (EGA) --- many isogeny-based group actions diverge from this convenient abstraction. We therefore provide an alternative candidate scheme which can be instantiated on so-called ``restricted effective group actions'' (REGAs); see Section~\ref{sec:alternate} for details. We prove the quantum lightning security of our protocols in the generic group action model --- a black box model for group actions --- assuming a new but natural strengthening of the discrete log assumption on group actions. Note that generic group actions cannot be used to give unconditional quantum hardness results, so some additional computational assumption is necessary. In order to prove our result, we develop a new toolkit for quantum generic group action proofs; see Section~\ref{sec:ggam}. We believe ours is the first proof of security in the quantum generic group action model. 

Along the way, we explore knowledge assumptions and algebraic group actions in the quantum setting, finding significant limitations of these assumptions/models compared to generic group actions. Specifically, unlike the classical setting where knowledge assumptions typically hold unconditionally against generic attacks, we explain why such statements likely do not hold quantumly. In the specific case of group actions, we indeed show an efficient generic attack on an analog of the ``knowledge of exponent'' assumption. This potentially casts doubt on quantum knowledge assumptions in general. We do give a more complex definition that avoids our attack, but it is unclear if the assumption is sound and more analysis is needed. For completeness, we give an alternative proof of security for our construction under this new knowledge assumption, which avoids generic group actions. 

We also discuss an algebraic model for group actions, which can be seen as a variant of the knowledge of exponent assumption. Unlike the classical setting where algebraic models live ``between'' the fully generic and standard models, we find that the algebraic group action model is likely incomparable to the generic group action model, and security proofs in the model are potentially problematic. As these issues do not appear for generic group actions, we therefore propose that generic group actions are the preferred quantum idealized model for analyzing cryptosystems, instead of the algebraic group action model as argued for in~\cite{PKC:DHKKLR23}. See Section~\ref{sec:knowledge} for details.

We conclude in Section~\ref{sec:discuss} with a discussion of possible generalizations. In particular, we propose the notion of a \emph{quantum} group action where the set elements are quantum states instead of bit strings. We discuss how instantiating our scheme on quantum group actions is closely related to failed approaches for building quantum money from LWE, but different in key ways that seem to allow our scheme to remain secure while the related LWE approaches failed.

\subsection{Our Construction}\label{sec:constroverview}

\paragraph{Abelian Group Actions.} We will use additive group notation for abelian groups. An abelian group action consists of an abelian group $\G$ and a set $\Xs$, such that $\G$ ``acts'' on $\Xs$ through the efficiently computable binary relation $*:\G\times\Xs\rightarrow\Xs$ with the property that $g*(h*x)=(g+h)*x$ for all $g,h\in\G,x\in\Xs$. We will also assume a \emph{regular} group action, which means that for every $x\in\Xs$, the map $g\mapsto g*x$ is a bijection. 

The main group actions used in cryptography are those arising from isogenies over elliptic curves. For example, see~\cite{EPRINT:Couveignes06,EPRINT:RosSto06,AC:CLMPR18,AC:BeuKleVer19,PKC:DFKLMP23}. 
Group action cryptosystems rely at a minimum on the assumed hardness of discrete logarithms: given $x,y=g*x\in\Xs$, finding $g$. In other words, while the map $g\mapsto g*x$ is efficiently computable and has an inverse, the inverse is not efficiently computable. For isogeny-based actions, this corresponds to the hard problem of computing isogenies between elliptic curves. Other hard problems on group actions are possible to consider, such as analogs of computational/decisional Diffie-Hellman, and more.

\paragraph{The QFT.} Our quantum money scheme will utilize the quantum Fourier transform (QFT) over general abelian groups. This is a quantum procedure that maps
\[|g\rangle\mapsto\frac{1}{\sqrt{|\G|}}\sum_{h\in\G}\chi(g,h)|h\rangle\enspace .\]
Here, $\chi$ is some potentially complex phase term. In the case of $\G$ being the additive group $\Z_N$, $\chi(g,h)$ is defined as $e^{i2\pi gh/N}$, with a slightly more complicated definition for non-cyclic groups\footnote{Remember that the group operation is $+$, so $gh$ in the exponent is not the group operation, but instead multiplication in the ring $\Z_N$.}. The main property we utilize from $\chi$ (besides making the QFT unitary) is that it is \emph{bilinear}, in the sense that $\chi(g,h_1+ h_2)=\chi(g,h_1)\cdot\chi(g,h_2)$. It is also symmetric: $\chi(g,h)=\chi(h,g)$.

\paragraph{Our Quantum Money Scheme.} Our quantum money scheme is as follows; see Section~\ref{sec:constr} for additional details.

\begin{itemize}
	\item $\gen$: initialize a register in the state $\frac{1}{\sqrt{|\G|}}\sum_{g\in\G}|g\rangle$, which can be computed by applying the QFT to $|0\rangle$. Let $x\in\Xs$ be arbitrary. Then by computing the group action in superposition, compute $\frac{1}{\sqrt{|\G|}}\sum_{h\in\G}|g\rangle|g*x\rangle$. Next, apply the QFT over $\G$ to the first register. The result is:
	\[\frac{1}{|\G|}\sum_{g,h\in\G} \chi(g,h)|h\rangle|g*x\rangle=\frac{1}{\sqrt{|\G|}}\sum_h |h\rangle|\G^h*x\rangle\]
	Here, $|\G^h*x\rangle$ is the state $\frac{1}{\sqrt{|\G|}}\sum_{g\in\G}\chi(g,h)|g*x\rangle$. Note that $|\G^h*x\rangle$ is, up to an overall phase, independent of $x$. 
	
	Now measure $h$, in which case the second register collapses to $|\G^h*x\rangle$. Output $h$ as the serial number, and $|\G^h*x\rangle$ as the money state.
	\item To verify a banknote $\$$, we do the following\footnote{In an initial version of this work, we had a more complicated verification. The simplified version here was pointed out to us by Jake Doliskani.}: Initialize a new register in the state $|\phi\rangle:=\frac{1}{\sqrt{|\G|}}\sum_{u\in\G}|u\rangle$. Then apply the map $(u,y)\mapsto(u,(-u)*y)$ to the joint system $|\phi\rangle\times \$$\enspace\footnote{Note that we used the ``minimal'' oracle here for the group action computation, having $(-u)*y$ replace $y$, instead of being written to a response register as in the standard quantum oracle. However, since the computation $y\mapsto (-u)*y$ is efficiently reversible given $u$ (by $y\mapsto u*y$), we can easily implement the minimal oracle efficiently by first computing $|(-u)*y\rangle$, then uncomputing $|y\rangle$ using the efficient inverse, and finally swapping in $|(-u)*y\rangle$.}. In the case where $\$$ is the honest banknote $|\G^h*x\rangle$, the result is
	\begin{align*}\frac{1}{|\G|}\sum_{u\in\G}|u\rangle \sum_{g\in\G}\chi(g,h)|(g-u)*x\rangle&=\frac{1}{|\G|}\sum_{u\in\G}|u\rangle \sum_{g'\in\G}\chi(g'+u,h)|g'*x\rangle\\&=\frac{1}{|\G|}\sum_{u\in\G}\chi(u,h)|u\rangle\sum_{g\in\G}\chi(g',h)|g'*x\rangle\\
		&=\left(\frac{1}{\sqrt{|\G|}}\sum_{u\in\G}\chi(u,h)|u\rangle\right)|\G^h*x\rangle\end{align*}
	where we used the substitution $g'=g-u$. Thus we see that this process preserves the honest banknote state $|\G^h*x\rangle$. Moreover, if we apply the inverse QFT to the first register, the result for honest banknotes is $|h\rangle$, and for any state orthogonal to the honest banknote, the result of the inverse QFT will be something orthogonal to $|h\rangle$. Thus by measuring this register and checking if the result is $h$, we can distinguish the honest banknote state from any other state.
\end{itemize}

\paragraph{An instantiation using REGAs.} In some isogeny-based group actions such as CSIDH~\cite{AC:CLMPR18}, the operation $*$ is only efficiently computable for a very small set $S\subseteq\G$ of group elements. Such group actions are called ``restricted effective group actions'' (REGAs)~\cite{AC:ADMP20}. Above, however, we see that we need to compute the group action on all possible elements in $\G$, both for minting and for verification. We therefore give a variant of the construction above which only uses the ability to compute $*$ for elements in $S$. We show that we are still able to sample $|\G^h*x\rangle$, but now the serial number has the form $\Am^T h+\ev\bmod N$ for a known matrix $\Am$ and a ``small'' $\ev\in\Z^n$\enspace\footnote{Here, we are interpreting $h$ a vector in $\Z_N^n$ for some $n,N$, which is possible since $\G$ is abelian.}. Under plausible assumptions, the serial number actually hides $h$\enspace\footnote{This is the search Learning with Errors (search LWE) problem~\cite{DBLP:journals/jacm/Regev09} which is widely believed to be hard for \emph{random} $\Am$. In our case, $\Am$ is a fixed matrix that depends on the group action, and LWE may or may not be hard for this $\Am$. However, if LWE is easy for this $\Am$, then we in fact have a plain group action. Indeed, a variant of Regev's quantum reduction between LWE and Short Integer Solution (SIS)~\cite{DBLP:journals/jacm/Regev09}, outlined by~\cite{DBLP:journals/joc/Zhandry21}, shows that if LWE can be solved relative to $\Am$, then SIS can be solved for $\Am$ as well. It is straightforward to adapt this reduction to solve the Inhomogeneous SIS (ISIS) problem, which then allows for computing the group action for all of $\G$. In this case we would have a clean group action and would not need this alternate construction.}. We nevertheless show that we can use such a noisy serial number for verification. For details, see Section~\ref{sec:alternate}. The security of our alternate scheme is essentially equivalent to the main scheme.

\subsection{The security of our scheme}\label{sec:introsec}

We do not know how to base the security of our schemes on any standard assumptions on isogenies. However, we are able to prove the security of our scheme in a black box model for group actions called the generic group action model (GGAM), an analog of the generic group model~\cite{EC:Shoup97,IMA:Maurer05} adapted to group actions. Generic models for group actions have been considered previously~\cite{DBLP:journals/joc/MontgomeryZ24,EC:BonGuaZha23,EPRINT:OrsZan23,PKC:DHKKLR23}. While the model is motivated by post-quantum security, to the best of our knowledge ours is the first time the model has been used to actually prove security against quantum attacks.

The challenge with the quantum GGAM is that the query complexity of computing discrete logarithms is actually polynomial (this follows from~\cite{EttHoy00}; see Section~\ref{sec:ggam} for an explanation). This means we cannot rely on query complexity alone to justify hardness, and must additionally make computational assumptions. This is in contrast to the classical setting, where the generic group (action) model allows for unconditional proofs of security by analyzing query complexity alone. In fact, most if not all generic group model proofs from the classical setting are unconditional query complexity proofs. This means that proofs in the quantum GGAM will look very different than classical proofs in the GGM/GGAM; in particular, proofs will still require a reduction from an underlying hard computational problem. At the same time, in order to take advantage of the generic oracle setting, it would seem that quantum query complexity arguments are still needed. But a priori, it may not be obvious how to leverage query complexity in any useful way, given the preceding discussion.

\paragraph{Our Framework.} In Section~\ref{sec:ggam}, we develop a new framework to help in the task of proving quantum hardness results relative to generic group actions. To illustrate our ideas, we consider the following warm-up task. An important feature in some isogeny-based group actions are twists, which allow for computing ``negations'': computing $(-g)*x$ from $g*x$. An interesting question is whether this additional structure makes computing discrete logarithms easier. Here, we show that for generic group actions, such negations are unlikely to make discrete logarithms any easier than in group actions without negations. Concretely, we will show that discrete logarithms are generically hard, assuming a plausible computational assumption on some group action where such negation queries are \emph{not} permitted. 

Suppose toward contradiction that there was a generic adversary which could utilize negation queries to solve discrete logarithms. Let $(*,\G,\Xs)$ be a plain group action where negation queries are not allowed. We will define a new group action $(\star,\G,\Xs')$ as follows. First sample a random injection $\Pi:\Xs^2\rightarrow\{0,1\}^m$ whose inputs are \emph{pairs} of set elements. Then define $\Xs'$ as the image under $\Pi$ of pairs of the form $(g*x,(-g)*x)$. $\star$ acts in the natural way: $g\star \Pi(y,z)=\Pi(g*y,(-g)*z)$.

Our reduction will sample a $\Pi$\enspace\footnote{A random injection is exponentially large and cannot be sampled efficiently. Instead, the reduction will actually efficiently simulate a random injection $\Pi$ using known techniques. For the purposes of our discussion here, we can ignore this issue.} and run the generic adversary on the new group action, using its knowledge of $\Pi$ and its inverse to implement the action $\star$. Notice now that our reduction also has the ability to compute negations: given $\Pi(y,z)$ where $y=g*x$ and $z=(-g)*x$, the negation of $\Pi(y,z)$ is exactly the element $\Pi(z,y)$ obtained by swapping $y$ and $z$. Thus, our reduction is able to simulate the negation queries, even though the underlying group action does not support efficient negations. This is our main idea, though there are a couple lingering issues to sort out:
\begin{itemize}
	\item The reduction cannot perfectly simulate $(\star,\G,\Xs')$. The issue is that there are elements $\Pi(y,z)$ where $y,z$ do not have the form $y=g*x,z=(-g)*x$ for some $g$. In the group action $(\star,\G,\Xs')$, these elements will be identified as invalid set elements. On the other hand, while our reduction can carry out the correct computation on $y,z$ of the correct form, it will be unable to distinguish such $y,z$ from ones of the incorrect form, and will act on these elements even though they are incorrect. As such, there will be elements that are not in $\Xs'$ that the reduction will nevertheless falsely identify as valid set elements. We resolve this problem by choosing the images of $\Pi$ to be somewhat sparse, by setting the output length $m$ sufficiently large. Our reduction only provides the adversary elements corresponding to valid $y,z$, and we can show, roughly, that the adversary has a negligible chance of computing elements in the image of $\Pi$ that correspond to invalid $y,z$. This follows from standard query complexity arguments. Thus, we are able to simulate with negligible error the correct group action $(\star,\G,\Xs')$.
	\item We have not yet specified what problem the reduction actually solves. The problem we would like to solve is the plain discrete logarithm on $(*,\G,\Xs)$, where the reduction is given $g*x$, and must compute $g$. However, it is unclear what challenge the reduction should give to the adversary. The natural approach is to try to give the adversary $\Pi(g*x,(-g)*x)$, which is just the discrete log instance relative to $(\star,\G,\Xs')$ with the same solution $g$; the reduction can then simply output whatever the adversary outputs. However, this requires the reduction to know $(-g)*x$, which is presumably hard to compute given just $g*x$ (remember that negation queries are not allowed on $(*,\G,\Xs)$). Our solution is to simply use a slight strengthening of discrete logarithms, where the adversary is given $(g*x,(-g)*x)$ and must compute $g$. Under the assumed hardness of this strengthened discrete log problem (again, in ordinary group actions where negations are presumed hard), we can complete the reduction and prove the generic hardness of discrete logarithms in the presence of negation queries.
\end{itemize}

\paragraph{The security of our money scheme.} We now turn to using our framework to prove the security of our quantum money scheme in the GGAM. Inspired by our negation example above, we will simulate a generic group action $(\star,\G,\Xs')$ using an injection $\Pi$ applied to a vector of set elements. Our goal will be to use two banknotes with the same serial number relative to $(\star,\G,\Xs')$ in order to break some distinguishing problem relative to $(*,\G,\Xs)$. Any quantum money adversary yields such a pair of banknotes, and so if the distinguishing problem is hard, then there can be no such efficient quantum money adversary. This argument in fact shows the scheme attains the stronger notion of quantum lightning~\cite{DBLP:journals/joc/Zhandry21}, which has additional applications.

Concretely, our starting assumption gives the adversary $y=u*x$ for a random $u$, and then allows the adversary a single quantum query to $z\mapsto v*z$ for an unknown $v$, where either $v$ is random or $v=2u$. The adversary then has to tell whether $v=2u$ or not. It is straightforward to prove this assumption is true in the classical GGAM. In fact, it is a quantum analog of the classical group-based problem of distinguishing $g^a,g^b$ from $g^a,g^{a^2}$ for a group generator $g$, a widely used Diffie-Hellman-like assumption. Under this analogy, $g$ plays the role of $x$, $a$ plays the role of $u$, and $b$ plays the role of $v$. The main difference from the classical assumption (besides being over group actions instead of groups) is that, instead of receiving $g^b$ or $g^{a^2}$, the adversary receives $h^b$ or $h^{a^2}$ for an adversarially chosen $h$, and we allow the adversary's $h$ to be in superposition.

Our idea is to have $\Xs'$ be elements of the form $\Pi(g*x,g*y)$ where $y=u*x$ is the challenge given by the assumption. Let $X=\Pi(x,y)\in\Xs'$. Now consider the output of a successful adversary, which is two copies of the banknote $|\G^h\star X\rangle$ relative to $(\star,\G,\Xs')$ for some serial number $h$. Now consider applying the following process to, say, the first copy: map any element $\Pi(z_1,z_2)$ in the range of $\Pi$ to $\Pi(z_2,v*z_1)$, where we compute $v*z_1$ from $z_1$ using the challenge oracle. We then observe that if $v=2u$, this process preserves the banknote:
\begin{align*}
	|\G^h\star X'\rangle&=\frac{1}{\sqrt{|\G|}}\sum_{g\in\G}\chi(g,h)|g\star \Pi(x,y)\rangle=\frac{1}{\sqrt{|\G|}}\sum_{g\in\G}\chi(g,h)|\Pi(g*x,g*y)\rangle\\
	&\mapsto\frac{1}{\sqrt{|\G|}}\sum_{g\in\G}\chi(g,h)|\Pi(g*y,(g+2u)*x)\rangle\\
	&=\frac{1}{\sqrt{|\G|}}\sum_{g\in\G}\chi(g,h)|\Pi((g+u)*x,(g+2u)*x)\rangle\\
	&=\chi(-u,h)\frac{1}{\sqrt{|\G|}}\sum_{g'\in\G}\chi(g',h)|\Pi(g'*x,g'*y)\rangle=\chi(-u,h)|\G^h\star X'\rangle\enspace .
\end{align*}
Above, we used the substitution $g'=g+u$.

On the other hand, if $v\neq 2u$, then the transformation will produce a state whose support is not even on $\Xs'$. In particular, the transformed state would be orthogonal to the original state. So our reduction will apply the above transformation to one copy of $|\G^h\star X\rangle$, leaving the other as is. Then it will perform the SWAP test on the two states. If $v=2u$, the states will be identical and the SWAP test will accept. If $v\neq 2u$, the states will be orthogonal, and the swap test will accept only with probability $1/2$. Thus, we achieve a distinguishing advantage between the two cases, contradicting the assumption.

\medskip

We believe our proof gives convincing evidence that our scheme should be secure on a suitable group action, perhaps even those based on isogenies over elliptic curves. However, our underlying assumption is new, and needs further cryptanalysis. One limitation of our assumption is that it is interactive, requiring a (quantum) oracle query to the challenger. One may hope instead to use a non-interactive assumption. We do not know how to make non-interactive assumptions work, in general. In particular, if we do not have an oracle that can transform the input for us, it seems like we are limited to strategies that only permute the inputs to $\Pi$, like in our negation-query example. But since the scheme has to be efficient, the inputs to $\Pi$ can only consist of polynomial-length vectors of set elements. Any permutation on a polynomial-length set must have smooth order. On the other hand, the only permutations on $\Xs'$ which preserve $|\G^h\star X\rangle$ seem to have order that divides $|\G|$. Thus, if, say, the order of $\G$ were a large prime, it does not seem that permuting the inputs to $\Pi$ alone will be able to preserve $|\G^h\star X\rangle$.

\subsection{On Knowledge Assumptions and Algebraic Group Actions} 

In Section~\ref{sec:knowledge}, we show a different approach to justifying the security of our scheme, by adapting certain knowledge assumptions~\cite{EC:LiuMonZha23} to the setting of group actions. Despite some high-level similarities to~\cite{EC:LiuMonZha23}, the underlying details are somewhat different. The advantage of this route is that it gives a standard-model security proof (albeit, using a non-standard knowledge definition) rather than a generic model proof.

However, we find significant issues with using knowledge assumptions quantumly, that appear not to have been observed before. In particular, the straightforward way to adapt the knowledge assumptions of~\cite{EC:LiuMonZha23} to group actions actually results in \emph{false} assumptions, as we demonstrate. Interestingly, our attack on the assumption is entirely generic. This is quite surprising, as in the classical setting, knowledge assumptions generally trivially hold against generic attacks.

Concretely, we show how to construct a superposition over $\Xs$ where the underlying discrete logarithms are hidden, even to the algorithm creating the superposition. To accomplish this, we observe that any set element $x$ can be seen as a superposition over all possible banknotes $|\G^h*x\rangle$; the superposition is uniform up to individual phases. Then we show a procedure to compute, given $|\G^h*x\rangle$, the serial number $h$. This allows us to apply individual phases to the various banknotes in the superposition. Certain phases will simply map $x$ to another set element $y$. But other phases will map $x$ to a uniform superposition (up to phases) over $\Xs$. Call this state $|\psi\rangle$.

Any meaningful knowledge assumption, and in particular the result of adapting~\cite{EC:LiuMonZha23} to group actions, would imply that if we were to measure $|\psi\rangle$ to get a set element $y$, then we must also ``know'' $g$ such that $y=g*x$. However, measuring $|\psi\rangle$ simply gives a uniform set element, importantly without any side information about $y$. As such, under the discrete log assumption, computing such a $g$ is hard.

We resolve this particular problem by re-framing knowledge assumptions as follows: instead of saying that any algorithm $A$ which produces a set element $y$ must know $g$ such that $y=g*x$, we say that for any such $A$ solving some task $T$, there is another algorithm $B$ that also solves $T$ such that $B$ knows $g$, even if $A$ would not. Thus, even if the original $A$ is constructed in such a way that it does not know $g$, at least $B$ does, and we can apply any security arguments to $B$ instead of $A$. We demonstrate that this assumption, together with an appropriate generalization of the discrete log assumption, are enough to prove the security of our scheme. However, we are somewhat skeptical of our new knowledge assumption, and it certainly needs more cryptanalysis.

\paragraph{Algebraic Group Actions.} The Algebraic Group Model (AGM)~\cite{C:FucKilLos18} is an important classical model for studying group-based cryptosystems. It is considered a refinement of the generic group model, meaning that a proof in the model is ``at least as'' convincing as a proof in the generic group model\footnote{There are some caveats to this classical claim; see~\cite{C:Zhandry22b} for discussion.}, potentially even more convincing. A couple of recent works~\cite{PKC:DHKKLR23,EPRINT:OrsZan23} have considered the group action analog, the Algebraic Group Action Model (AGAM). Here, any time an adversary outputs a set element $y$, it must ``explain'' $y$ in terms of one of its input set elements $x_1,\dots,x_n$ by providing a group element $g$ such that $y=g*x_i$.

The AGM can be seen as an idealized model version of the knowledge of exponent assumption, and likewise the AGAM can be seen as an idealized model version of an appropriate knowledge assumption on group actions. After all, a knowledge assumption would say that any time the adversary outputs a $y$, it must ``know'' how it derived $y$ from its inputs. The AGM/AGAM simply require the adversary to actually output this knowledge.

In Section~\ref{sec:knowledge}, we explore the AGAM in the presence of quantum attackers. We do not prove any formal results, but discuss why, unfortunately, the quantum AGAM appears problematic. For starters, given our attack on quantum knowledge assumptions, we are skeptical about the soundness of the quantum AGAM. In particular, our attack indicates that it is unlikely that the AGAM is a refinement of the generic group action model; rather they are likely incomparable.

Another problem we observe with the AGAM is that it requires the adversary to both solve some task, and also produce some extra information, namely the explanation $g$ of any output element $y$. Classically, if the adversary is able to both solve the task and produce this extra information (which would follow from an appropriate knowledge assumption), then the adversary can do both simultaneously, as required by the AGM/AGAM. However, quantumly, even if we believe the adversary can separately solve the task \emph{or} produce the extra information (provided we believe the knowledge assumption), it may be impossible to do both simultaneously, as required by the AGAM.

This issue manifests in the following way: suppose the output is actually a superposition. Then the information $g$ will be entangled with the superposition, meaning the AGAM adversary's output will actually be a different state than if it did not output $g$. For example, if an AGAM adversary had to output a banknote $|\G^h*x\rangle$ (say, as part of the quantum money/lightning experiment), then if it also ``explained'' the banknote by outputting a group element $g$, the entanglement with $g$ would actually cause the banknote state to fail verification. It therefore unclear how to interpret such an adversary. Does it actually break the scheme, even if it does not pass verification? In Section~\ref{sec:knowledge}, we go into more details about this issue as well as pointing out several other issues with the AGAM.

We note that these issues are not present in the generic group action model. Thus, despite classically being a ``worse'' model than the algebraic model, we propose for the quantum setting that the generic group action model is actually \emph{preferred} to the AGAM.

\subsection{Further Discussion}

In Section~\ref{sec:discuss}, we generalize group actions to \emph{quantum} group actions, which replace classical set elements with quantum states, but otherwise behave mostly the same as standard group actions. We give a simple quantum group action based on the Learning with Errors (LWE) problem~\cite{DBLP:journals/jacm/Regev09}, where we can actually prove that the discrete log problem is hard under LWE. Despite this promising result, we expect that the LWE-based quantum group action will be of limited use. In particular, if we instantiate our quantum money construction over this group, the construction is \emph{insecure}. The reason is that, in this group action, it is impossible to recognize the quantum states of the set. Our security proof crucially relies on such recognition in order to characterize states accepted by the verifier. Moreover, without recognition, there is an attack which fools the verifier with dishonest --- and importantly, clonable --- banknotes that are different from the honest ones, breaking security.

Interestingly, we explain that this failed instantiation is actually \emph{equivalent} to a folklore approach toward building quantum money from lattices, an approach that has been more-or-less shown impossible to make secure~\cite{C:LiuZha19,EC:LiuMonZha23}. The \emph{only} missing piece in the folklore approach has been how to efficiently verify honest banknotes. Under our equivalence, this missing piece exactly maps to the problem of recognizing set elements in our quantum group action. For details, see Section~\ref{sec:discuss}. We believe this adds to the confidence of our proposal, since in group actions based on isogenies it is possible to recognize set elements, presumably without otherwise compromising hardness.

\subsection{Related Work}\label{sec:related}

\paragraph{Public key quantum money.} In Wiesner's original scheme, the mint is required to verify banknotes, meaning the mint must be involved in any transaction. The involvement of the mint also leads to potential attacks~\cite{Lutomirski10}. Some partial solutions have been proposed, e.g.~\cite{EPRINT:BehSat20,AC:RobZha21}. The dream solution, however, is known as \emph{public key} quantum money~\cite{CCC:Aaronson09}. Here, anyone can verify the banknote, while only the mint can create them.

Unlike Wiesner's scheme, which is well-understood, secure public key quantum money has remained elusive. While there have been many proposals for public key quantum money~\cite{CCC:Aaronson09,DBLP:journals/toc/AaronsonC13,ITCS:FGHLS12,Kane18,DBLP:journals/joc/Zhandry21,EPRINT:KanShaSil21,KLS22,EC:LiuMonZha23}, they mostly either (1) have been subsequently broken (e.g.~\cite{CCC:Aaronson09,DBLP:journals/toc/AaronsonC13,DBLP:journals/joc/Zhandry21,KLS22} which were broken by~\cite{ITCS:LAFGKH10,PDFHP19,EC:Roberts21,EC:LiuMonZha23}), or (2) rely on new cryptographic building blocks that have received little attention from the cryptographic community (e.g.~\cite{ITCS:FGHLS12,Kane18,EPRINT:KanShaSil21} from problems on knots or quaternion algebras). The two exceptions are:
\begin{itemize}
	\item Building on a suggestion of~\cite{BenSat16}, \cite{DBLP:journals/joc/Zhandry21} proved that quantum money can be built from post-quantum indistinguishability obfuscation (iO). iO has received considerable attention and even has a convincing \emph{pre-quantum} instantiation~\cite{DBLP:journals/cacm/JainLS24}. Yet the post-quantum study of iO is much less thorough. While some post-quantum proposals have been made~\cite{TCC:GenGorHal15,TCC:BGMZ18,EPRINT:BDGM20b,EC:WeeWic21}, their post-quantum hardness is not well-understood.
	\item \cite{EC:LiuMonZha23} construct quantum money from isogenies over super-singular elliptic curves. While isogenies have garnered significant attention from cryptographers, there is a crucial missing piece to their proposal: generating uniform superpositions over super-singular curves, which is currently unknown how to do. This is closely related to the major open question of obliviously sampling super-singular elliptic curves.
\end{itemize}
In light of the above, the existence of public key quantum money is largely considered open.

\paragraph{Cryptography from group actions and isogenies.} Isogenies were first proposed for use in post-quantum cryptography by Couveignes~\cite{EPRINT:Couveignes06} and Rostovtsev and Stolbunov~\cite{EPRINT:RosSto06}. Isogenies give a Diffie-Hellman-like structure, but importantly are immune to Shor's algorithm for discrete logarithms~\cite{FOCS:Shor94} due to a more restricted structure. This restricted structure, while helping preserve security against quantum attacks, also makes the design of cryptosystems based on them more complex. Thus, significant effort has gone into building secure classical cryptosystems from isogenies and understanding their post-quantum security (e.g.~\cite{CJS14,DJP14,AC:CLMPR18,AC:BeuKleVer19,CD20,PKC:DeFMey20,EC:Peikert20,EC:BonSch20,AC:ADMP20,TCC:AlaMalRah22,DBLP:journals/joc/MontgomeryZ24,EPRINT:MaiMar22,EC:CasDec23,EC:BonGuaZha23,EC:Robert23}).

Certain isogenies such as the original proposals of~\cite{EPRINT:Couveignes06,EPRINT:RosSto06} as well as CSIDH and its variants~\cite{AC:CLMPR18,PKC:DFKLMP23} can be abstracted as abelian group actions. However, many other isogenies (such as SIDH~\cite{DJP14}  and OSIDH~\cite{CD20}) cannot be abstracted as abelian group actions. Even among abelian group actions, we must distinguish between ``effective group actions'' (EGAs) and \emph{restricted} EGAs (REGAs). The former satisfies the notion of a clean group action, whereas in the latter, the group action can only be efficiently computed for a certain small set of group elements. CSIDH could plausibly be a EGA at certain concrete security parameters, though asymptotically it only achieves quasi-polynomial security\footnote{With the state-of-the-art, evaluating CSIDH as an EGA would require time approximately $2^{\sqrt[3]{n}}$ on a quantum computer, while the best quantum attack is time $2^{\sqrt{n}}$. For a thorough discussion, see~\cite{Panny23}. By setting $n=\log^3(\lambda)$, one gets polynomial-time evaluation and the best attack taking time $\lambda^{\sqrt{\log(\lambda)}}$.}. Our alternate construction also works on REGAs, which can plausibly be instantiated even asymptotically by CSIDH using a quantum computer\footnote{In order for CSIDH to be a REGA, one needs to compute the structure of the group. While this is hard classically, it is easy with a quantum computer using Shor's algorithm~\cite{FOCS:Shor94}. Since we always assume a quantum computer in this work, we can therefore treat CSIDH as a REGA.}.

While some non-isogeny abelian group actions have been proposed (e.g.~\cite{Stickel05}), currently all such examples have been broken (e.g.~\cite{Shpilrain08}). For this reason, (abelian) group actions are largely considered synonymous with isogenies in the cryptography literature, though this may change if more secure group actions are found.

The vast majority of the isogeny and group action literature has focused on post-quantum cryptography --- classical protocols that are immune to quantum attacks. To the best of our knowledge, only two prior works have used isogenies/group actions to build quantum protocols for tasks that are \emph{impossible} classically. The first is~\cite{TCC:AlaMalRah22}, who build a proof of quantumness~\cite{DBLP:journals/jacm/BrakerskiCMVV21}. We note that proofs of quantumness can also be achieved under several ``standard'' cryptographic tools, such as LWE~\cite{DBLP:journals/jacm/BrakerskiCMVV21} or under certain assumptions on hash functions~\cite{DBLP:journals/jacm/YamakawaZ24}. In contrast, no prior quantum money protocol could be based on similar standard building blocks. We also note that ~\cite{TCC:AlaMalRah22} currently has no known asymptotic instantiation with better-than-quasi-polynomial security, as it requires a clean group action (EGA). The second quantum protocol based on isogenies is that of~\cite{EC:LiuMonZha23}, who build quantum money from walkable invariants, and propose an instantiation using isogenies over super-singular elliptic curves. However, such isogenies cannot be described as abelian group actions, and even more importantly their proposal is incomplete, as discussed above. Thus, ours is arguably the first application of group actions or isogenies to obtain classically impossible tasks that could not already be achieved under standard tools.

\paragraph{Relation to~\cite{EC:LiuMonZha23}.} Aside from using isogenies, our construction has some conceptual similarities to~\cite{EC:LiuMonZha23}, though also crucial differences that allow us to specify a complete protocol, and our idealized-model analysis is completely new. Here, we give a brief overview of the similarities and differences.

The walkable invariant framework of~\cite{EC:LiuMonZha23} is very general, but here we describe a special case of it that would apply to certain group actions, in order to illustrate the differences with our scheme. Consider a group action that is \emph{not} regular, so that the set $\Xs$ is partitioned into many distinct orbits. For $x,y$ in the same orbit there will exist a unique $g$ such that $y=g*x$, but for $x,y$ in different orbits, there will not exist any group element mapping between them. We will also assume the ability to generate a uniform superposition over $\Xs$. We finally assume an ``invariant'', a unique label for each orbit which can be efficiently computed from any element in the orbit.

The minting process generates the uniform superposition over $\Xs$, and then measures the invariant, which becomes the serial number. The state then collapses to a uniform superposition over a single orbit, which becomes the banknote. This superposition can then be verified as follows. First check that the banknote has support on the right orbit by re-computing the invariant. Then check that the state is in uniform superposition by checking that the state is preserved under action by random group elements; this is accomplished using an analog of the swap test.~\cite{EC:LiuMonZha23} prove the security of their scheme under the certain assumptions which, when mapped to the group action setting above, correspond to the discrete log assumption and a knowledge assumption very similar to ours.

Unfortunately, when~\cite{EC:LiuMonZha23} was first published, there were no known instantiations of their scheme from isogenies. One possibility is to use the set of ordinary elliptic curves as the set, the number of points on the curve as the invariant, and orbits being sets of curves with the same number of points. Isogenies between curves are then the action\footnote{It is not a proper group action since different orbits will be acted on by different groups.}, which do not change the number of points on the curve. The problem is that in general curves, it is not possible to efficiently compute the action, since the degree of the isogenies will be too high. The action \emph{can} be computed on smooth-degree isogenies, but these are rare and there is no known way to compute a uniform superposition over curves supporting smooth-degree isogenies. For reasons we will not get into here,~\cite{EC:LiuMonZha23} propose using instead supersingular curves with non-smooth order, but again these are rare and there is no known way to generate a uniform superposition over such curves.

We resolve the issues with instantiating~\cite{EC:LiuMonZha23}, without needing the ability to compute uniform superpositions over the set. Our key insight is that, if we can compute the group action efficiently (say because we are using isogenies of smooth degree), then this is enough to sample states that \emph{are} uniform over a given orbit, except for certain phase terms: namely the states $|\G^h*x\rangle$ for uniform $h$. Then, rather than the serial number indicating which orbit we are in (which is now useless since we are in a single orbit), the serial number is a description of the phase terms, namely $h$. Note that subsequent to our work,~\cite{AC:MonSha24} successfully instantiate the walkable invariant approach using isogenies by developing new algorithms for working with isogenies.

\paragraph{Subsequent work.} In~\cite{C:MutZha25}, quantum state group actions are further explored. Most relevant to our work, they show a quantum state group action based on hash functions such that, when used to instantiate our quantum lightning scheme, one recovers exactly the quantum lightning scheme from non-collapsing hash functions explained in~\cite{DBLP:journals/joc/Zhandry21}. This generalizes our observations regarding the equivalence between lattice-based quantum money attempts discussed above.

In the case the group has a smooth order (no large prime factors), a very recent preprint~\cite{EPRINT:Doliskani25} shows that our scheme is a secure quantum money scheme, in the generic model but under the standard discrete-log assumption. Interestingly, this argument is fundamentally limited to proving quantum money, and does not extend to the stronger notion of quantum lightning. In another recent preprint,~\cite{DolMirMou25} generalize our scheme to work with the Hartley transform instead of the QFT.

In~\cite{STOC:BosNehZha25}, our scheme is generalized to work with non-abelian group actions, where a non-abelian group $\G$ (written multiplicatively) acts on a set $\Xs$ satisfying $g*(h*x)=(gh)*x$. They give a candidate instantiation based on the McEliece cryptosystem. Additionally, they consider a new concrete assumption related to the hardness of computing ``pre-actions'': that is, computing $(hg)*x)$ from $g$ and $h*x$. Note the order $hg$ makes computing pre-actions non-trivial in non-abelian groups, whereas computing $(gh)*x$ from $g$ and $h*x$ follows from the functionality of the group action. On the other hand, pre-actions and actions coincide for abelian group actions and so both are easy, meaning pre-action hardness only makes sense in the non-abelian setting. Under such a pre-action hardness assumption, they can prove the security of their scheme in the standard model.

\subsection*{Acknowledgments} We thank Hart Montgomery for many helpful discussions about isogenies. We thank Jake Doliskani for pointing out a simpler procedure for verifying banknotes and computing the serial number of banknotes.

\section{Preliminaries}\label{sec:prelim}

Here we give our notation and definitions. We assume the reader is familiar with the basics of quantum computation.

\subsection{Quantum Fourier Transform over Abelian Groups} 

Let $\G$ be an abelian group, which we will denote additively. We here define our notation for the quantum Fourier transform over $\G$. Write $\G=\Z_{n_1}\times \Z_{n_2}\times\cdots\times\Z_{n_k}$ where $\Z_{n_j}$ are the additive cyclic groups on $n_j$ elements, and associate elements $g\in\G$ with tuples $g=(g_1,\dots,g_k)$ where $g_j\in\Z_{n_j}$. Then define $\chi:\G^2\rightarrow\C$ by
\[\chi_\G(g,h)=\prod_{j=1}^k e^{i2\pi g_j h_j/n_j}\]
Observe the following:
\begin{align*}
	\chi_\G(g,h)&=\chi_\G(h,g)&
	\chi_\G(g_1+g_2,h)&=\chi_\G(g_1,h)\times\chi_\G(g_2,h)\\
	\chi_\G(-g,h)&=\chi_\G(g,h)^{-1}&
	\sum_{g\in\G}\chi_\G(g,h)&=\begin{cases}|\G|&\text{ if }h=0_\G\\0&\text{ if }h\neq 0_\G\end{cases}
\end{align*}

The quantum Fourier transform (QFT) over $\G$ is the unitary $\QFT_\G$ defined as 
\[\QFT_\G|g\rangle=\frac{1}{\sqrt{|\G|}}\sum_{h\in\G}\chi(g,h)|h\rangle\enspace.\]
Observe that $\QFT_\G=\QFT_{\Z_{n_1}}\otimes\cdots\otimes\QFT_{\Z_{n_k}}$. Therefore, since the standard QFT corresponds to $\QFT_{\Z_{n_j}}$ and can be implemented efficiently, so can $\QFT_\G$.

From this point on, we will only work with a single group, so we will drop the sub-script and simply write $\chi(g,h),\QFT$, etc.

\subsection{Public Key Quantum Money and Quantum Lightning} 

Here we define public key quantum money and quantum lightning. In the case of quantum money, we focus on public key \emph{mini-schemes}~\cite{DBLP:journals/toc/AaronsonC13}, which are essentially the setting where there is only ever a single valid banknote produced by the mint. As shown in~\cite{DBLP:journals/toc/AaronsonC13}, such mini-schemes can be upgraded generically to full quantum money schemes using digital signatures. We will generally drop the term ``public key'', since we exclusively consider this variant of quantum money. 

\paragraph{Syntax.} Both quantum money mini-schemes and quantum lightning share the same syntax:
\begin{itemize}
	\item $\gen(1^\lambda)$ is a quantum polynomial-time (QPT) algorithm that takes as input the security parameter (written in unary) which samples a classical serial number $\sigma$ and quantum banknote $\$$. 
	\item $\ver(\sigma,\$)$ takes as input the serial number and a supposed banknote, and either accepts or rejects, denoted by $1$ and $0$ respectively.
\end{itemize}

\paragraph{Correctness.} Both quantum money mini-schemes and quantum lightning have the same correctness requirement, namely that valid banknotes produced by $\gen$ are accepted by $\ver$. Concretely, there exists a negligible function $\negl(\lambda)$ such that
\[\Pr[\ver(\sigma,\$)=1:(\sigma,\$)\gets\gen(1^\lambda)]\geq 1-\negl(\lambda)\enspace.\]

\paragraph{Security.} We now discuss the security requirements, which differ between quantum money and quantum lightning.

\begin{definition}Consider a QPT adversary $\As$, which takes as input a serial number $\sigma$ and banknote $\$$, and outputs two potentially entangled states $\$_1,\$_2$, which it tries to pass off as two banknotes. $(\gen,\ver)$ is a secure \emph{quantum money mini-scheme} if, for all such $\As$, there exists a negligible $\negl(\lambda)$ such that the following holds:
	\[\Pr\left[\ver(\sigma,\$_1)=\ver(\sigma,\$_2)=1:\substack{(\sigma,\$)\gets\gen(1^\lambda)\\(\$_1,\$_2)\gets\As(\sigma,\$)}\right]\leq\negl(\lambda)\enspace .\]
\end{definition}
\begin{definition}Consider a QPT adversary $\Bs$, which takes as input the security parameter $\lambda$, and outputs a serial number $\sigma$ and two potentially entangled states $\$_1,\$_2$, which it tries to pass off as two banknotes. $(\gen,\ver)$ is a secure \emph{quantum lightning} scheme if, for all such $\Bs$, there exists a negligible $\negl(\lambda)$ such that the following holds:
	\[\Pr\left[\ver(\sigma,\$_1)=\ver(\sigma,\$_2)=1:(\sigma,\$_1,\$_2)\gets\Bs(1^\lambda)\right]\leq\negl(\lambda)\enspace .\]
\end{definition}
Quantum lightning trivially implies quantum money: any quantum money adversary $\As$ can be converted into a quantum lightning adversary $\Bs$ by having $\Bs$ run both $\gen$ and $\As$. But quantum lightning is potentially stronger, as it means that even if the serial number is chosen adversarially, it remains hard to devise two valid banknotes. This in particular means there is some security against the mint, which yields a number of additional applications, as discussed by~\cite{DBLP:journals/joc/Zhandry21}.

\begin{remark}One limitation of quantum lightning as defined above is that it cannot be secure against non-uniform attackers with quantum advice, as such attackers could have $\sigma,\$_1,\$_2$ hard-coded in their advice. The situation is analogous to the case of collision resistance, where unkeyed hash functions cannot be secure against non-uniform attackers. This limitation can be remedied by either insisting on only uniform attackers or attackers with classical advice. Alternatively, one can work in a trusted setup model, where a trusted third party generates a common reference string that is then inputted into $\gen,\ver$. A third option is to use the ``human ignorance'' approach~\cite{VIETCRYPT:Rogaway06}, in which we would formalize security proofs as explicitly transforming a quantum lightning adversary into an adversary for some other task, the latter adversary existing but is presumably unknown to human knowledge. We will largely ignore these issues throughout this work, but occasionally make brief remarks about what the various approaches would look like.
\end{remark}

\subsection{Group Actions}\label{sec:groupactions}

An (abelian) group action consists of a family of (abelian) groups $\G=(\G_\lambda)_\lambda$ (written additively), a family of sets $\Xs=(\Xs_\lambda)_\lambda$, and a binary operation $*:\G_\lambda\times \Xs_\lambda\rightarrow \Xs_\lambda$ satisfying the following properties:
\begin{itemize}
	\item {\bf Identity:} If $0\in \G_\lambda$ is the identity element, then $0*x=x$ for any $x\in \Xs_\lambda$.
	\item {\bf Compatibility:} For all $g,h\in \G_\lambda$ and $x\in \Xs_\lambda$, $(g+h)*x=g*(h*x)$.
\end{itemize}
We will additionally require the following properties:
\begin{itemize}
	\item {\bf Efficient Group Operation:} Each group $\G_\lambda$ has a polynomial-sized (in $\lambda$) description $\langle \G_\lambda\rangle$, which consists of a polynomial-sized set of generators $g_1,\cdots$ as well as a polynomial-sized (potentially quantum) circuit ${\sf Add}_\lambda$ such that ${\sf Add}_\lambda(g,h)=g+h$ for $g,h\in\G_\lambda$. 
	\item {\bf Efficient Group Action:} For each $\lambda$, there is a polynomial-sized (potentially quantum) circuit ${\sf Act}_\lambda$ such that ${\sf Act}_\lambda(g,x)=g*x$ for $g\in\G_\lambda,x\in\Xs_\lambda$.
	\item {\bf Efficiently Recognizable Set:} For each $\lambda$, there is a polynomial-sized (potentially quantum) circuit ${\sf Recog}_\lambda$ which recognizes elements in $\Xs_\lambda$. That is, for any $\lambda$ and any string $y$ (not necessarily in $\Xs_\lambda$), ${\sf Recog}_\lambda(y)$ accepts $y$ with overwhelming probability if $y\in\Xs_\lambda$, and rejects with overwhelming probability if $y\notin\Xs_\lambda$.
	\item {\bf Efficient Setup:} There is a QPT procedure ${\sf Construct}$ which, on input $1^\lambda$, outputs the description $\langle \G_\lambda\rangle$, an element $x_\lambda\in\Xs_\lambda$, the circuit ${\sf Act}_\lambda$, and the circuit ${\sf Recog}_\lambda$. We denote this collection by $\langle\G_\lambda,\Xs_\lambda\rangle$, which we call the description of the group action. We will assume ${\sf Construct}$ is \emph{pseudo-deterministic}, meaning that repeated runs of ${\sf Construct}(1^\lambda)$ will output the same values with overwhelming probability.
	\item {\bf Regular:} For every $y\in\Xs_\lambda$, there is exactly one $g\in\G_\lambda$ such that $y=g*x_\lambda$. 
\end{itemize}

\paragraph{Structure of the group $\G_\lambda$.} Given a quantum computer, we can always assume without loss of generality that $\G_\lambda$ has the form $\Z_{n_1}\times \Z_{n_2}\times\cdots\Z_{n_k}$ for some known integers $n_1,n_2,\cdots,n_k$. What we describe here is at least folklore: this is mostly worked out in~\cite{CheMos01} except for a minor detail. While we do not know of any reference for a full description, the following is well-known.

The key observation is that $\G_\lambda$ is isomorphic so such a group (since it is abelian), and moreover, the bijection with the group is efficiently computable using standard quantum period-finding techniques. The basic idea is that, for element $g_i$ of the generating set of the group $\G_\lambda$, we can compute its order, $o_i$, using Shor's period-finding algorithm~\cite{FOCS:Shor94}. Then the order of the group $\G_\lambda$ divides $o=\prod_i o_i$. We can also compute the prime-factorization of $o$ using Shor's algorithm. 

Now we use period-finding to find the periods of the function $\vv\in\Z_o^k\mapsto\sum_i v_i g_i =: \vv\cdot \gv$, where $\gv$ is the vector of generators and $v_i g_i$ means to add $g_i$ to itself $v_i$ times. The period-finding algorithm produces the (generators of the) subgroup $H$ of $\Z_o^k$ of vectors $\vv$ such that $\vv\cdot\gv=0$. Then $\G_\lambda$ is isomorphic to $\Z_o^k/H$, which can be easily decomposed as $\Z_{n_1}\times \Z_{n_2}\times\cdots\Z_{n_k}$ using linear algebra and knowledge of the factorization of $o$. 

The isomorphism itself is computed in one direction as $\vv\in \Z_o^k/H\mapsto \vv\cdot\gv$. Up until this point, this describes the algorithm of ~\cite{CheMos01}. To finish off the isomorphism, we explain how to compute it in the other direction: given $h\in\G_\lambda$, we use period-finding to find the period of the function $(\vv,w)\mapsto \vv\cdot\gv+w h$. This produces the (description of the) subgroup $H'$ of $\Z_o^{k+1}$ of $(\vv,w)$  such that $\vv\cdot\gv+w h=0$; in other words, $wh=-\vv\cdot\gv$. We then simply find a vector in this space with $w=-1$, so that $h=\vv\cdot \gv$ using linear algebra. Such a vector is guaranteed to exist since $\gv$ generates the entire group. We then finally output $\vv\bmod H$, again computed using linear algebra.

Once one has this isomorphism, one can derive a group action with $\Z_o^k/H$ acting on $\Xs_\lambda$ defined as $\vv\star x=(\vv\cdot\gv)*x$. As a consequence, throughout this work, we will typically assume that the group $\G_\lambda$ is explicitly given as $\Z_{n_1}\times \Z_{n_2}\times\cdots\Z_{n_k}$. In particular, this allows us to efficiently compute the QFT over $\G_\lambda$.

\paragraph{Cryptographic group actions.} A cryptographic group action is one for which some task is computationally intractable, and that this hardness is useful for cryptography. Here, we briefly define three standard assumptions that can be made on group actions; we will ultimately not use these assumptions, but they are useful as a point of comparison.
always
First, at a minimum, a cryptographically useful group action will always satisfy the following discrete log assumption:
\begin{assumption}\label{def:dlog} The \emph{discrete log assumption} (DLog) holds on a group action $(\G,\Xs,*)$ if, for all QPT adversaries $\As$, there exists a negligible $\lambda$ such that 
	\[\Pr[\As(\langle\G_\lambda,\Xs_\lambda\rangle,g*x_\lambda)=g:\langle\G_\lambda,\Xs_\lambda\rangle\gets{\sf Construct}(1^\lambda), g\gets\G_\lambda]\leq\negl(\lambda)\enspace .\]
\end{assumption}
\noindent Another pair of standard assumptions for group actions are the analogs of CDH and DDH:
\begin{assumption}\label{def:cdh} The \emph{computational Diffie-Hellman assumption} (CDH) holds on a group action $(\G,\Xs,*)$ if, for all QPT adversaries $\As$, there exists a negligible $\lambda$ such that 
 	\begin{align*}&\Pr\left[\As(\langle\G_\lambda,\Xs_\lambda\rangle,a*x_\lambda,b*x_\lambda)=(a+b)*x_\lambda: \langle\G_\lambda,\Xs_\lambda\rangle\gets{\sf Construct}(1^\lambda),a,b\gets\G_\lambda\right]\leq\negl(\lambda)\enspace .\end{align*}
\end{assumption}
\begin{assumption}\label{def:ddh} The \emph{decisional Diffie-Hellman assumption} (DDH) holds on a group action $(\G,\Xs,*)$ if, for all QPT adversaries $\As$, there exists a negligible $\lambda$ such that 
	\begin{align*}&\Big|\;\Pr\left[\As(\langle\G_\lambda,\Xs_\lambda\rangle,a*x_\lambda,b*x_\lambda,c*x_\lambda)=1: \substack{\langle\G_\lambda,\Xs_\lambda\rangle\gets{\sf Construct}(1^\lambda)\\a,b,c\gets\G_\lambda}\right]\\&\;\;\;\;\;-\Pr\left[\As(\langle\G_\lambda,\Xs_\lambda\rangle,a*x_\lambda,b*x_\lambda,(a+b)*x_\lambda)=1: \substack{\langle\G_\lambda,\Xs_\lambda\rangle\gets{\sf Construct}(1^\lambda)\\a,b\gets\G_\lambda}\right]\;\Big|\leq\negl(\lambda)\enspace .\end{align*}
\end{assumption}

\begin{remark}For simplicity, we model the group actions as being pseudo-deterministically computed from the security parameter. In the assumptions above, this means we can actually forgo giving $\langle\G_\lambda,\Xs_\lambda\rangle$ to $\As$ since $\As$ can compute them for itself using ${\sf Construct}$. This is the modeling we will use throughout the paper, so we will typically drop explicit mentions of ${\sf Construct}$ and also drop $\langle\G_\lambda,\Xs_\lambda\rangle$ from the inputs to $\As$. We could alternatively imagine the group actions being probabilistic, in which for each security parameter $\lambda$ there is a family of possible descriptions of groups and sets $\langle\G_\lambda,\Xs_\lambda\rangle$, and ${\sf Construct}(1^\lambda)$ samples from this family according to some distribution. In this case, we imagine ${\sf Construct}$ being a global setup procedure that is run to obtain a single instance $\langle\G_\lambda,\Xs_\lambda\rangle$, which is then supplied to all parties including the adversary via a common reference string. In this case, we must model $\langle\G_\lambda,\Xs_\lambda\rangle$ as being given to the adversary, as in the assumptions above.
\end{remark}

\section{Our Quantum Lightning Scheme}\label{sec:constr}

Here, we give our basic quantum lightning construction, which assumes a cryptographic group action.

\begin{construction}\label{constr:main} Let $\gen,\ver$ be the following QPT procedures:
	\begin{itemize}
		\item $\gen(1^\lambda)$: Initialize quantum registers $\Ss$ (for serial number) and $\Ms$ (for money) to states $|0\rangle_\Ss$ and $|0\rangle_\Ms$, respectively. Then do the following:
		\begin{itemize}
			\item Apply $\QFT_{\G_\lambda}$ to $\Ss$, yielding the joint state $\frac{1}{\sqrt{|\G_\lambda|}}\sum_{g\in\G_\lambda}|g\rangle_\Ss|0\rangle_\Ms$.
			\item Apply in superposition the map $|g\rangle_\Ss|y\rangle_\Ms\mapsto |g\rangle_\Ss|y\oplus (g*x_\lambda)\rangle_\Ms$. Here, $x_\lambda$ is an arbitrary set element. The joint state of the system $\Ss\otimes\Ms$ is then $\frac{1}{\sqrt{|\G_\lambda|}}\sum_{g\in\G_\lambda}|g\rangle_\Ss|g*x_\lambda\rangle_\Ms$.
			\item Apply $\QFT_{\G_\lambda}$ to $\Ss$ again, yielding $\frac{1}{|\G_\lambda|}\sum_{g,h\in\G_\lambda}\chi(g,h)|h\rangle_\Ss|g*x_\lambda\rangle_\Ms$
			\item Measure $\Ss$, giving the serial number $\sigma:=h$. The $\Ms$ register then collapses to the banknote $\$=|\G_\lambda^h*x_\lambda\rangle:=\frac{1}{\sqrt{|\Gs_\lambda|}}\sum_{g\in\G_\lambda}\chi(g,h)|g*x_\lambda\rangle_\Ms$. Output $(\sigma,\$)$. 
		\end{itemize}
		\item $\ver(\sigma,\$):$ First verify that the support of $\$$ is contained in $\Xs_\lambda$, by applying the assumed algorithm for recognizing $\Xs_\lambda$ in superposition. Then do the following:
		\begin{itemize}
			\item Initialize a new register $\Hs$ to $\frac{1}{\sqrt{|\G_\lambda|}}\sum_{u\in\G_\lambda}|u\rangle_\Hs$
			\item Apply in superposition the map $|u\rangle_\Hs|y\rangle_\Ms\mapsto |u\rangle_\Hs|(-u)*y\rangle_\Ms$\enspace \footnote{Note that we used the ``minimal'' oracle here for the group action computation, having $(-u)*y$ replace $y$, instead of being written to a response register as in the standard quantum oracle. However, since the computation $y\mapsto (-u)*y$ is efficiently reversible given $u$ (by $y\mapsto u*y$), we can easily implement the minimal oracle efficiently by first computing $|(-u)*y\rangle_{\Ms'}$ in a new register $\Ms'$, then uncomputing $|y\rangle_\Ms$ using the efficient inverse (so it now contains $|0\rangle_\Ms$), and finally swapping $\Ms'$ with $\Ms$.}. 
			\item Apply $\QFT^{-1}_{\G_\lambda}$ to $\Hs$.
			\item Measure $\Hs$, obtaining a group element $h'$. Accept if and only if $h'=h$.
		\end{itemize}
	\end{itemize}
\end{construction}

\begin{remark}If using a probabilistic setup of the group action, there are two options. The first is to have $\gen$ set up the group action, and have the parameters be included in the serial number. The second is to have a trusted third party set up the group action, and publish the parameters in a common reference string (CRS). If the goal is only quantum money security, then the former option is always possible, since the security experiment uses an honestly generated serial number. If the goal is quantum lightning security, the former option may not be possible, as the adversary computes the serial number; it may be that there are bad choices of parameters for the group action (and hence the CRS inside the serial number) which make it easy to forge banknotes. Therefore, for quantum lightning security, we would expect using a trusted setup to generate a CRS containing the group action parameters.
\end{remark}

\subsection{Accepting States of the Verifier}

Here we prove that honest banknote states are accepted by the verifier, and roughly that they are the \emph{only} states accepted by the verifier.

\begin{theorem}\label{thm:reject} Let $|\psi\rangle$ be a state over $\Ms$. Then $\Pr[\ver(h,|\psi\rangle)=1]=\|\langle\psi |\G_\lambda^h*x_\lambda\rangle \|^2$. Moreover, if verification accepts, the resulting state is exactly $|\G_\lambda^h*x_\lambda\rangle$.
\end{theorem}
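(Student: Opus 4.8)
The plan is to show that the part of $\ver$ after the recognition step implements, on the span of $\{|y\rangle_\Ms : y\in\Xs_\lambda\}$, exactly the two-outcome projective measurement $\{\,|\G_\lambda^h*x_\lambda\rangle\langle\G_\lambda^h*x_\lambda|\,,\ \Id-|\G_\lambda^h*x_\lambda\rangle\langle\G_\lambda^h*x_\lambda|\,\}$, with the first outcome being ``accept''. I would begin by reducing to the case where $|\psi\rangle$ is supported on $\Xs_\lambda$: treating the recognition subroutine as the projective measurement with projector $P$ onto $\mathrm{span}\{|y\rangle_\Ms:y\in\Xs_\lambda\}$ (which it equals up to negligible error), $\ver$ first applies $P$, rejects on failure, and otherwise proceeds on $P|\psi\rangle/\|P|\psi\rangle\|$. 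Since $|\G_\lambda^h*x_\lambda\rangle$ lies in the image of $P$, one has $\langle\psi|\G_\lambda^h*x_\lambda\rangle=\langle P\psi|\G_\lambda^h*x_\lambda\rangle$, so the general statement follows from the supported case by multiplying the two acceptance probabilities. By the \textbf{regular} axiom, $g\mapsto g*x_\lambda$ is a bijection $\G_\lambda\to\Xs_\lambda$, so $\{|g*x_\lambda\rangle\}_{g\in\G_\lambda}$ is an orthonormal basis of the image of $P$ and I may write $|\psi\rangle=\sum_{g\in\G_\lambda}\alpha_g\,|g*x_\lambda\rangle$.

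Next I would push this state through the remaining steps, exactly as in the honest-banknote calculation of Section~\ref{sec:constroverview} but with generic coefficients. After adjoining $\Hs$ and applying $|u\rangle_\Hs|y\rangle_\Ms\mapsto|u\rangle_\Hs|(-u)*y\rangle_\Ms$, the state is $\tfrac{1}{\sqrt{|\G_\lambda|}}\sum_{u,g}\alpha_g|u\rangle_\Hs|(g-u)*x_\lambda\rangle_\Ms$; substituting $g'=g-u$ turns this into $\tfrac{1}{\sqrt{|\G_\lambda|}}\sum_{u,g'}\alpha_{g'+u}|u\rangle_\Hs|g'*x_\lambda\rangle_\Ms$. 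Applying $\QFT^{-1}_{\G_\lambda}$ to $\Hs$ and extracting the coefficient of the basis states $|h\rangle_\Hs|g'*x_\lambda\rangle_\Ms$ — the only branch of the final measurement that is accepted — I would use bilinearity, $\chi(g-g',h)^{-1}=\chi(g,h)^{-1}\chi(g',h)$, together with the re-substitution $g=g'+u$, to identify this branch (unnormalized) as
\[ \beta\,\sqrt{|\G_\lambda|}\;|h\rangle_\Hs\,|\G_\lambda^h*x_\lambda\rangle_\Ms\ ,\qquad \beta:=\tfrac{1}{|\G_\lambda|}\sum_{g\in\G_\lambda}\alpha_g\,\chi(g,h)^{-1}\ . \]

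Finally I would read off both claims. The probability that the measurement of $\Hs$ returns $h$ equals the squared norm $|\beta|^2|\G_\lambda|=\tfrac{1}{|\G_\lambda|}\big|\sum_g\alpha_g\chi(g,h)^{-1}\big|^2$; expanding $\langle\psi|\G_\lambda^h*x_\lambda\rangle=\tfrac{1}{\sqrt{|\G_\lambda|}}\sum_g\overline{\alpha_g}\,\chi(g,h)$ and using $\overline{\chi(g,h)}=\chi(g,h)^{-1}$ shows this is exactly $\|\langle\psi|\G_\lambda^h*x_\lambda\rangle\|^2$. Moreover, conditioned on this outcome the $\Ms$-register, once renormalized, is $(\beta/|\beta|)|\G_\lambda^h*x_\lambda\rangle$, i.e.\ $|\G_\lambda^h*x_\lambda\rangle$ up to an irrelevant global phase (and if $\beta=0$ the outcome has probability $0$ and there is nothing to prove). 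Combining with the first-paragraph reduction — and noting $\|\langle\psi|\G_\lambda^h*x_\lambda\rangle\|^2$ depends only on $P|\psi\rangle$ — completes the proof.

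The computation is short, so the only places needing care are (i) the first paragraph, where I am replacing the approximate recognition routine by an exact projector (so the stated equality should really be read up to the negligible recognition error, or with an idealized ${\sf Recog}$), and (ii) checking that the ``minimal oracle'' in the second step of $\ver$ is genuinely the unitary $|u\rangle|y\rangle\mapsto|u\rangle|(-u)*y\rangle$ on $\mathrm{span}\{|y\rangle:y\in\Xs_\lambda\}$, which the construction's footnote already justifies via reversibility of $y\mapsto(-u)*y$. I expect the Fourier bookkeeping — the two index substitutions and the bilinearity identity — to be the only spot where an inverse or phase could slip, and I would cross-check it against the already-worked special case $\alpha_g=\chi(g,h)/\sqrt{|\G_\lambda|}$, which must return the state unchanged and accept with probability $1$.
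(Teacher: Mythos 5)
Your proof is correct, and it differs from the paper's in the choice of basis used to expand $|\psi\rangle$. The paper first proves a short orthogonality lemma showing $\langle\G_\lambda^{h'}*x_\lambda|\G_\lambda^{h}*x_\lambda\rangle=0$ for $h'\neq h$, then writes $|\psi\rangle=\sum_{h'}\alpha_{h'}|\G_\lambda^{h'}*x_\lambda\rangle$ in the \emph{Fourier} basis of banknote states. Pushing this through $\ver$ then shows the pre-measurement state is exactly $\sum_{h'}\alpha_{h'}|h'\rangle_\Hs|\G_\lambda^{h'}*x_\lambda\rangle_\Ms$, so the measurement is manifestly the projective measurement onto the Fourier basis and both claims (probability $\|\alpha_h\|^2$ and collapse to $|\G_\lambda^h*x_\lambda\rangle$) fall out immediately without any further algebra. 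You instead expand in the \emph{computational} basis $\{|g*x_\lambda\rangle\}$, track the amplitude $\beta$ of the accept branch directly, and then re-identify $|\beta|^2|\G_\lambda|$ with $\|\langle\psi|\G_\lambda^h*x_\lambda\rangle\|^2$ by hand. The two computations are Fourier duals of each other and yield identical conclusions; the paper's choice diagonalizes the verification map, making it transparent that $\ver$ implements the full projective measurement $\{|\G_\lambda^{h'}*x_\lambda\rangle\langle\G_\lambda^{h'}*x_\lambda|\}_{h'}$ (a fact reused later for the ${\sf Findh}$ algorithm), whereas yours is a somewhat more pedestrian amplitude computation that nevertheless lands in the same place and is careful to flag the global-phase and $\beta=0$ edge cases. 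Your first-paragraph treatment of the recognition step is also slightly more explicit than the paper's, which simply restricts attention to states supported on $\Xs_\lambda$ and implicitly treats ${\sf Recog}$ as exact; your caveat that the stated equality holds up to negligible recognition error is fair and worth noting.
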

In other words, we can treat $\ver(h,|\psi\rangle)$ as projecting exactly onto $|\G_\lambda^h*x_\lambda\rangle$. In particular, honest banknotes are accepted with probability 1. The remainder of this subsection is devoted to proving Theorem~\ref{thm:reject}.

\begin{lemma}For $h'\neq h$, $\langle \G_\lambda^{h'}*x_\lambda|\G_\lambda^h*x_\lambda\rangle=0$.
\end{lemma}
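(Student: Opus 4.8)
The plan is to expand both states in the orthonormal basis $\{|g*x_\lambda\rangle : g\in\G_\lambda\}$ and reduce to the character-orthogonality relation for $\chi$. The first thing to note is that regularity of the group action means $g\mapsto g*x_\lambda$ is a bijection $\G_\lambda\to\Xs_\lambda$, so the kets $|g*x_\lambda\rangle$ are pairwise orthogonal (and unit norm), hence $\langle g'*x_\lambda|g*x_\lambda\rangle=\delta_{g,g'}$.

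Next I would write out
\[
\langle \G_\lambda^{h'}*x_\lambda|\G_\lambda^{h}*x_\lambda\rangle
=\frac{1}{|\G_\lambda|}\sum_{g,g'\in\G_\lambda}\overline{\chi(g',h')}\,\chi(g,h)\,\langle g'*x_\lambda|g*x_\lambda\rangle
=\frac{1}{|\G_\lambda|}\sum_{g\in\G_\lambda}\overline{\chi(g,h')}\,\chi(g,h),
\]
using the orthonormality just established to collapse the double sum to the diagonal $g=g'$.

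Then I would simplify the summand using the identities recorded for $\chi$ in the preliminaries: $\overline{\chi(g,h')}=\chi(g,h')^{-1}=\chi(-g,h')=\chi(g,-h')$ (complex conjugate equals inverse for a unit-modulus character, together with the $\chi(-g,h')=\chi(g,h')^{-1}$ identity and symmetry), and then bilinearity gives $\chi(g,-h')\chi(g,h)=\chi(g,h-h')$. Hence the inner product equals $\frac{1}{|\G_\lambda|}\sum_{g\in\G_\lambda}\chi(g,h-h')$, which by the orthogonality relation $\sum_{g\in\G}\chi(g,k)=|\G|\cdot[k=0_\G]$ is $0$ whenever $h-h'\neq 0_\G$, i.e. whenever $h'\neq h$.

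\textbf{Main obstacle.} Honestly there is no substantive obstacle here: the statement is exactly the orthonormality of the Fourier basis transported through the bijection $g\mapsto g*x_\lambda$. The only point requiring a moment's care is invoking regularity to justify that the $|g*x_\lambda\rangle$ are orthonormal (rather than merely spanning $\Xs_\lambda$) — everything else is a direct application of the bilinearity and orthogonality properties of $\chi$ listed at the start of Section~\ref{sec:prelim}.
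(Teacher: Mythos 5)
Your proof is correct and follows exactly the same route as the paper's: expand both states in the orthonormal basis $\{|g*x_\lambda\rangle\}$ (orthonormality following from regularity), collapse the double sum to the diagonal, combine the characters via bilinearity to get $\chi(g,h-h')$, and finish with the character-sum orthogonality relation. The only difference is that you spell out the intermediate identities on $\chi$ more explicitly than the paper, which compresses them into a single line.
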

\begin{proof} 
	\begin{align*}
		\langle \G_\lambda^{h'}*x_\lambda|\G_\lambda^h*x_\lambda\rangle&=\frac{1}{|\G_\lambda|}\sum_{g,g'\in\G_\lambda}\chi(g',h')^{-1}\chi(g,h)\langle g'*x_\lambda|g*x_\lambda\rangle\\
		&=\frac{1}{|\G_\lambda|}\sum_{g\in\G_\lambda}\chi(g,h')^{-1}\chi(g,h)=\frac{1}{|\G_\lambda|}\sum_{g\in\G_\lambda}\chi(g,h-h')=0\enspace .\qedhere
	\end{align*}
\end{proof}

\begin{proof}[Proof of Theorem~\ref{thm:reject}]
Let $|\psi\rangle$ be a state with support on $\Xs$. Since the $|\G^{h'}*x_\lambda\rangle$ are orthogonal and the number of $h'$ equals the size of $\Xs$, the set $\{|\G_\lambda^{h'}*x_\lambda\rangle\}_{h'}$ forms an orthonormal basis for the set of states with support on $\Xs$. We can then write $|\psi\rangle=\sum_{h'}\alpha_{h'}|\G_\lambda^{h'}*x_\lambda\rangle$ where $\sum_{h'} \|\alpha_{h'}\|^2=1$. We then have $\|\alpha_h\|^2=\|\langle\psi |\G_\lambda^h*x_\lambda\rangle\|^2$.

We now consider applying the verification algorithm to the state $|\psi\rangle_\Ms$. After initializing $\Hs$ to $\frac{1}{\sqrt{|\G_\lambda|}}\sum_{u\in\G_\lambda}|u\rangle_\Hs$ and applying the map $(u,y)\mapsto (u,(-u)*y)$, the state of $\Hs,\Ms$ is:
\begin{align*}&\frac{1}{\sqrt{|\G_\lambda|}}\sum_{u}|u\rangle_\Hs \sum_{h'}\alpha_{h'} \frac{1}{\sqrt{|G_\lambda|}}\sum_g \chi(g,h') |(g-u)*x_\lambda\rangle_\Ms\\
	&=\frac{1}{\sqrt{|\G_\lambda|}}\sum_{u}|u\rangle_\Hs \sum_{h'}\alpha_{h'}\frac{1}{\sqrt{|G_\lambda|}}\sum_{g'} \chi(g'+u,h') |g'*x_\lambda\rangle_\Ms\\
	&=\sum_{h'}\alpha_{h'} \frac{1}{\sqrt{|\G_\lambda|}}\sum_u \chi(u,h')|u\rangle_\Hs \frac{1}{\sqrt{|G_\lambda|}} \sum_{g'}\chi(g',h')|g'*x_\lambda\rangle_\Ms\\
	&=\sum_{h'}\alpha_{h'} \frac{1}{\sqrt{|\G_\lambda|}}\sum_u \chi(u,h')|u\rangle_\Hs |\G^{h'}*x_\lambda\rangle_\Ms
\end{align*}
where above we used the substitution $g'=g-u$. Now when we apply the inverse $\QFT$ to $\Hs$, the resulting state is:
\[\sum_{h'}\alpha_{h'} |h'\rangle_\Hs |\G^{h'}*x_\lambda\rangle_\Ms\enspace .\]
We now measure $\Hs$, which produces outcome $h'$ with probability $|\alpha_{h'}|^2$, and the register $\Ms$ collapses to the state $|\G^{h'}*x_\lambda\rangle$. We have that the verifier accepts if $h'=h$, which occurs with probability $\|\alpha_h\|^2=\|\langle\psi |\G_\lambda^h*x_\lambda\rangle\|^2$ as desired. In this case, the register $\Ms$ collapses to $|\G^{h}*x_\lambda\rangle$, as desired. This completes the proof of Theorem~\ref{thm:reject}.
\end{proof}

\subsection{Computing the Serial Number}\label{sec:compserial}

Here, we show that, given a valid banknote $\$=|\G_\lambda^h*x_\lambda\rangle$ with unknown serial number $h$, it is possible to efficiently compute $h$. This result is not needed for understanding the construction or its security, but will be used in Section~\ref{sec:knowledge} to break a certain natural knowledge assumption.

\begin{theorem}\label{thm:computeh} There exists a QPT algorithm ${\sf Findh}$ such that, on input $|\G_\lambda^h*x_\lambda\rangle$, outputs $h$ with probability $1$.
\end{theorem}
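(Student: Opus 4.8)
The plan is to essentially run the verification procedure $\ver$ "in reverse" of how the serial number is produced, or rather to recognize that the map used inside $\ver$ already extracts $h$ deterministically from $|\G_\lambda^h*x_\lambda\rangle$. Recall from the proof of Theorem~\ref{thm:reject} that, starting from an arbitrary state $|\psi\rangle = \sum_{h'}\alpha_{h'}|\G_\lambda^{h'}*x_\lambda\rangle$ on $\Ms$, if we adjoin a register $\Hs$ in the uniform superposition $\frac{1}{\sqrt{|\G_\lambda|}}\sum_u |u\rangle_\Hs$, apply the map $(u,y)\mapsto(u,(-u)*y)$, and then apply $\QFT^{-1}_{\G_\lambda}$ to $\Hs$, the joint state becomes exactly $\sum_{h'}\alpha_{h'}|h'\rangle_\Hs|\G_\lambda^{h'}*x_\lambda\rangle_\Ms$. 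In particular, when the input on $\Ms$ is the honest banknote $|\G_\lambda^h*x_\lambda\rangle$ (so $\alpha_{h'}=\delta_{h',h}$), this process produces the unentangled state $|h\rangle_\Hs\otimes|\G_\lambda^h*x_\lambda\rangle_\Ms$.

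So ${\sf Findh}$ is defined as follows: on input $\$ = |\G_\lambda^h*x_\lambda\rangle$ on register $\Ms$, initialize $\Hs$ to $\frac{1}{\sqrt{|\G_\lambda|}}\sum_{u\in\G_\lambda}|u\rangle_\Hs$ (by applying $\QFT_{\G_\lambda}$ to $|0\rangle$), apply in superposition the minimal-oracle map $|u\rangle_\Hs|y\rangle_\Ms\mapsto|u\rangle_\Hs|(-u)*y\rangle_\Ms$ (implementable efficiently exactly as in $\ver$, since $y\mapsto(-u)*y$ is efficiently invertible given $u$), then apply $\QFT^{-1}_{\G_\lambda}$ to $\Hs$, and finally measure $\Hs$ in the computational basis and output the outcome. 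All of these steps are QPT since $\QFT_{\G_\lambda}$ is efficient and $*$ is efficiently computable.

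For correctness, the computation in the proof of Theorem~\ref{thm:reject} specialized to $|\psi\rangle=|\G_\lambda^h*x_\lambda\rangle$ shows the pre-measurement state is precisely $|h\rangle_\Hs\otimes|\G_\lambda^h*x_\lambda\rangle_\Ms$, so the measurement of $\Hs$ returns $h$ with probability $1$ (and as a side benefit, leaves the banknote intact on $\Ms$). This is really just a restatement of the intermediate computation already carried out, so there is no new obstacle; the only thing to note is that we do not measure and compare to a fixed $h$ as $\ver$ does, but rather simply read off the measurement outcome as the answer. If one wants to be careful, one can also observe that the bilinearity and symmetry properties of $\chi$ recorded in Section~\ref{sec:prelim} are exactly what make the substitution $g'=g-u$ and the collapse $\frac{1}{|\G_\lambda|}\sum_u\chi(u,h-h')\cdots$ go through, but all of that is already in hand from the earlier proof.
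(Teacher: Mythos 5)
Your proposal is correct and matches the paper's proof: both observe that running the $\ver$ procedure but outputting the measured $h'$ (rather than comparing it to a given $h$) yields ${\sf Findh}$, with correctness following directly from the computation already done in the proof of Theorem~\ref{thm:reject}. You have simply spelled out the steps a bit more explicitly than the paper does.
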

\begin{proof} We originally had a much more complicated algorithm ${\sf Findh}$ (and one that had a negligible correctness error). We thank Jake Doliskani for pointing out a much simpler version using phase kickback. 
	
	Indeed, by simply modifying $\ver$ to output the measurement result $h'$ instead of testing whether or not $h'=h$, we immediately obtain such a {\sf Findh}. The proof of Theorem~\ref{thm:reject} shows that {\sf Findh} indeed outputs $h$ with probability 1 for the input state $|\G_\lambda^h*x_\lambda\rangle$.
\end{proof}

\section{A Quantum Toolkit for Generic Group Actions}\label{sec:ggam}

Here, we recall a definition of the generic group action model (GGAM), and show how to use it to give quantum security proofs.

\paragraph{A Shoup-style generic group action.} There have been several different proposals for how to define generic group actions~\cite{DBLP:journals/joc/MontgomeryZ24,PKC:DHKKLR23,EC:BonGuaZha23,EPRINT:OrsZan23}. Here, we give a definition in the style of Shoup~\cite{EC:Shoup97}. To help disambiguate between the different models, we will adapt terminology from~\cite{C:Zhandry22b} and refer to ours as the \emph{Random Set Representation} model. Roughly, in this model the group elements will be described as a standard-model group in the usual sense, and all parties can perform group operations for themselves. However, the set elements will be given as random strings, and the only way to perform the group action $*$ is using an oracle.

We first fix a (family of) groups $\G=(\G_\lambda)_\lambda$. This family is provided as a family standard model groups, meaning all algorithms have complete knowledge of the groups.  In particular, this means that all algorithms can perform the group operations for themselves and there is no oracle for group operations. Moreover, in this work we always consider abelian groups, meaning we can model each $\G_\lambda$ as $\Z_{n_1}\times \Z_{n_2}\times\cdots\Z_{n_k}$. This in particular means that $\G$ admits an efficient QFT. 

We also fix a length function $m:\Z\rightarrow\Z$ with the property that $m(\lambda)\geq\log_2|\G_\lambda|$. We call $m$ the \emph{label length}. In this model, for a given security parameter $\lambda$, a random injection $L:\G_\lambda\rightarrow\{0,1\}^{m}$ is chosen, where $m=m(\lambda)$. Think of $L(g)$ as representing $g*x_\lambda$; we call $L$ the labeling function. $\Xs_\lambda$ will then be the image of $\G_\lambda$ under $L$. All parties -- both the algorithms in the cryptosystem and also the adversary -- are then given the following:
\begin{itemize}
	\item As input, all parties receive the string $L(0)$, where $0\in\G_\lambda$ is the identity. $L(0)$ represents $x_\lambda$.
	\item All parties can then make ``group action'' queries. For classical algorithms, such a query takes the form $(\ell,g)\in\{0,1\}^m\times\G_\lambda$. The response to the query is $L(g+L^{-1}(\ell))$; if $\ell$ is not in the image of $L$, then the response to the query is $\bot$. For quantum algorithms, we follow the ``standard oracle'' convention for modeling superposition queries to classical functions, and have the query perform the map:
	\[\sum_{\ell,g,\ell'}\alpha_{\ell,g,\ell'}|\ell,g,\ell'\rangle\mapsto\sum_{\ell,g,\ell'}\alpha_{\ell,g,\ell'}|\ell,g,\ell'\oplus L(g+L^{-1}(\ell))\rangle\enspace .\]
	The set $\Xs_\lambda$ will be interpreted as the image of $\G_\lambda$ under $L$. Note that group action queries allow for testing membership in $\Xs_\lambda$: $\Xs_\lambda$ are exactly the set of strings where the group action query does not output $\bot$.
\end{itemize}
We call the oracle above $\GGAM_{\G,m}$. 

\paragraph{Cost Metrics.}In the classical setting, we usually consider queries to the oracle to have unit cost while computation outside the oracle queries is free. While this is technically an overly-conservative modeling, it tends to reflect the cost of actual known generic attacks. Moreover, lower-bounds (e.g.~\cite{EC:Shoup97}) in the classical setting exclusively work by lower-bounding the query complexity, and cannot say anything about the computational cost outside of the queries, so this model corresponds exactly to what the lower-bounds can show. If following this convention, our model essentially corresponds to the model considered in~\cite{PKC:DHKKLR23}. 

However, in the quantum setting, considering the query complexity alone is insufficient, as discrete logarithms can be solved in polynomial query complexity~\cite{EttHoy00}. In slightly more detail,~\cite{EttHoy00} show that the hidden subgroup problem has polynomial (in the bit-length of group elements) query complexity. This includes as a special case the dihedral hidden subgroup problem, which is known to be equivalent to the abelian hidden shift problem. Recall that in the abelian hidden shift problem one is given access to two functions $f_0$ and $f_1$ with the promise that $f_1(h)=f_0(g+h)$ for some secret $g$, where the domains of $f_0,f_1$ is an abelian group with group operation $+$. The goal is to find $g$. We see that the (abelian) group action discrete log problem is an example of an (abelian) hidden shift problem as follows. On discrete log instance $y=g*x$, let $f_0(h)=h*x$ and $f_1(h)=h*y=(g+h)*x=f_0(g+h)$. Applying the hidden shift solver to $f_0,f_1$ recovers the discrete log $g$ using polynomially-many queries.

While the above shows that the query complexity of discrete logs is polynomial, the computational cost of~\cite{EttHoy00} is exponential in the bit-length of group elements. As such, in contrast to the classical setting, it makes sense to consider the total cost of an algorithm as including both the queries (unit cost per query) and the computation outside the queries. This is the convention we follow in this work.

\paragraph{Minimal Oracles.} Note that we can equivalently model group action queries using the ``minimal'' oracle

\[\sum_{\ell,g}\alpha_{\ell,g}|\ell,g\rangle\mapsto\sum_{\ell,g}\alpha_{\ell,g}|L(g+L^{-1}(\ell)),g\rangle\enspace .\]

Observe that we can perform a standard oracle query using two calls to the minimal oracle, and a minimal oracle using two calls to a standard oracle. Indeed:
\begin{align*}
	\sum_{\ell,g,\ell'}\alpha_{\ell,g,\ell'}|\ell,g,\ell'\rangle&\mapsto \sum_{\ell,g,\ell'}\alpha_{\ell,g,\ell'}|L(g+L^{-1}(\ell)),g,\ell'\rangle&\text{(Minimal oracle)}\\
	&\mapsto \sum_{\ell,g,\ell'}\alpha_{\ell,g,\ell'}|L(g+L^{-1}(\ell)),g,\ell'\oplus L(g+L^{-1}(\ell))\rangle&\text{(CNOT)}\\
	&\mapsto \sum_{\ell,g,\ell'}\alpha_{\ell,g,\ell'}|L(g+L^{-1}(\ell)),-g,\ell'\oplus L(g+L^{-1}(\ell))\rangle&\text{(Group inversion)}\\
	&\mapsto \sum_{\ell,g,\ell'}\alpha_{\ell,g,\ell'}|L((-g)+g+L^{-1}(\ell)),-g,\ell'\oplus L(g+L^{-1}(\ell))\rangle&\text{(Minimal oracle)}\\
	&=\sum_{\ell,g,\ell'}\alpha_{\ell,g,\ell'}|\ell,-g,\ell'\oplus L(g+L^{-1}(\ell))\rangle&\\
	&\mapsto \sum_{\ell,g,\ell'}\alpha_{\ell,g,\ell'}|\ell,g,\ell'\oplus L(g+L^{-1}(\ell))\rangle\enspace ,&\text{(Group inversion)}
\end{align*}
\begin{align*}
	\sum_{\ell,g}\alpha_{\ell,g}|\ell,g\rangle&\mapsto \sum_{\ell,g}\alpha_{\ell,g,\ell'}|\ell,g,0\rangle&\text{(Initialize new register)}\\
	&\mapsto \sum_{\ell,g}\alpha_{\ell,g,\ell'}|\ell,g,L(g+L^{-1}(\ell))\rangle&\text{(Standard Oracle)}\\
	&\mapsto \sum_{\ell,g}\alpha_{\ell,g,\ell'}|L(g+L^{-1}(\ell)),g,\ell\rangle&\text{(Swap registers)}\\
	&\mapsto \sum_{\ell,g}\alpha_{\ell,g,\ell'}|L(g+L^{-1}(\ell)),-g,\ell\rangle&\text{(Group inversion)}\\
	&\mapsto \sum_{\ell,g}\alpha_{\ell,g,\ell'}|L(g+L^{-1}(\ell)),-g,\ell\oplus L((-g)+g+L^{-1}(\ell))\rangle&\text{(Standard Oracle)}\\
	&=\sum_{\ell,g}\alpha_{\ell,g,\ell'}|L(g+L^{-1}(\ell)),-g,\ell\oplus \ell\rangle&\\
	&=\sum_{\ell,g}\alpha_{\ell,g,\ell'}|L(g+L^{-1}(\ell)),-g,0\rangle&\\
	&\sum_{\ell,g}\alpha_{\ell,g,\ell'}|L(g+L^{-1}(\ell)),g,0\rangle&\text{(Group inversion)}\\
	&\sum_{\ell,g}\alpha_{\ell,g,\ell'}|L(g+L^{-1}(\ell)),g\rangle\enspace .&\text{(Discard register)}
\end{align*}
Therefore, we will allow either the minimal oracle or standard oracle when making queries to $\GGAM_{\G,m}$.

\paragraph{Verifying Membership in $\Xs_\lambda$.} Given a label $\ell\in\{0,1\}^m$, we can determine if $\ell\in\Xs_\lambda$, the image of $L$. To do so, pick an arbitrary element in $\Gs_\lambda$, say $0$, and query $\GGAM_{\G,m}(\ell,0)$. Output 1 if the result is $\ell$, and 0 if the result is $\bot$. Observe that if $\ell=L(g)\in\Xs_\lambda$, then the result of the query is $L(0+g)=L(g)=\ell$. On the other hand, if $\ell\notin\Xs_\lambda$, then the query gives $\bot$.

We can perform this membership test on superpositions of elements by implementing this classical procedure in superposition, and making a superposition query to the generic group action.

\paragraph{Our Quantum Lightning Construction in the GGAM.} For completeness, we explain how our quantum lightning construction (Construction~\ref{constr:main}) works in the GGAM.
\begin{itemize}
	\item $\gen(1^\lambda)$: Initialize quantum registers $\Ss$ (for serial number) and $\Ms$ (for money) to states $|0\rangle_\Ss$ and $|L(0)\rangle_\Ms$, respectively. Then do the following:
	\begin{itemize}
		\item Apply $\QFT_{\G_\lambda}$ to $\Ss$, yielding the joint state $\frac{1}{\sqrt{|\G_\lambda|}}\sum_{g\in\G_\lambda}|g\rangle_\Ss|L(0)\rangle_\Ms$.
		\item Apply in superposition the minimal oracle for $\GGAM$. The joint state of the system $\Ss\otimes\Ms$ is then $\frac{1}{\sqrt{|\G_\lambda|}}\sum_{g\in\G_\lambda}|g\rangle_\Ss|L(g)\rangle_\Ms$.
		\item Apply $\QFT_{\G_\lambda}$ to $\Ss$ again, yielding $\frac{1}{|\G_\lambda|}\sum_{g,h\in\G_\lambda}\chi(g,h)|h\rangle_\Ss|L(g)\rangle_\Ms$
		\item Measure $\Ss$, giving the serial number $\sigma:=h$. The $\Ms$ register then collapses to the banknote $\$=\frac{1}{\sqrt{|\Gs_\lambda|}}\sum_{g\in\G_\lambda}\chi(g,h)|L(g)\rangle_\Ms$. Output $(\sigma,\$)$. 
	\end{itemize}
	\item $\ver(\sigma,\$):$ First verify that the support of $\$$ is contained in $\Xs_\lambda$, by applying the membership testing procedure above in superposition. Then do the following:
	\begin{itemize}
		\item Initialize a new register $\Hs$ to $\frac{1}{\sqrt{|\G_\lambda|}}\sum_{u\in\G_\lambda}|u\rangle_\Hs$
		\item Apply in superposition the map $|u\rangle_\Hs|L(g)\rangle_\Ms\mapsto |u\rangle_\Hs|L((-u)+g)\rangle_\Ms$ using a query to the minimal oracle for $\GGAM$.
		\item Apply $\QFT^{-1}_{\G_\lambda}$ to $\Hs$.
		\item Measure $\Hs$, obtaining a group element $h'$. Accept if and only if $h'=h$.
	\end{itemize}
\end{itemize}

\paragraph{Assumptions in the GGAM.} We can likewise frame essentially any assumption or computational problem on group actions as an assumption/computational problem in the GGAM model, by similarly replacing the group action operation $*$ with queries to $\GGAM$ and the set elements $g*x_\lambda$ with the labels $L(g)$. For example, the discrete logarithm assumption (Assumption~\ref{def:dlog}) translates to the following:

\begin{assumption}\label{def:dlogGGAM} The \emph{discrete log assumption} (DLog) holds in $\GGAM_{\G,m}$ if, for all QPT adversaries $\As$ making polynomially-many queries, there exists a negligible $\lambda$ such that
	\[\Pr[\As^{\GGAM_{\G,m}}(L(g))=g]\leq\negl(\lambda)\enspace .\]
	Where the probability is over the choice of $g\gets\G_\lambda$ and the randomness of the labeling function $L$.
\end{assumption}
It is straightforward to adapt other standard-model assumptions to the GGAM. In our proofs below, we will consider both the standard-model assumptions such as Assumption~\ref{def:dlog} and assumptions in the GGAM such as Assumption~\ref{def:dlogGGAM}. Note that GGAM assumptions are independent of any particular group action, whereas standard-model assumptions are always with respect to a specific group action $(\G,\Xs,*)$.

\subsection{On other Styles of Generic Group Actions}

Other styles of generic group action are possible. For example,~\cite{DBLP:journals/joc/MontgomeryZ24} consider a similar model except where the group $\G$ itself is also hidden behind an oracle, meaning that group elements are random labels and an additional oracle is provided for performing group operations. We might call this the \emph{Random Group, Random Set Representation} model. Based on the discussion in Section~\ref{sec:groupactions}, in the case of abelian groups $\G$ this alternative model is equivalent to the Random Set Representation model defined above.

It is also possible to consider a version that is akin to Maurer's~\cite{IMA:Maurer05} generic group model, where instead of random labels for every element one only receives handles. This is the kind of model considered in~\cite{EC:BonGuaZha23,EPRINT:OrsZan23}. Following the terminology of~\cite{C:Zhandry22b}, this can be called the Type Safe model. We note that it does not make much sense to consider the group as an idealized object while allowing complete access to the set. Indeed, the discussion in Section~\ref{sec:groupactions} shows that any such ``Random Group Representation'' effectively is just a standard-model group action, defeating the purpose of considering an idealized model.

Here, we discuss why these alternate models come with limitations. First we observe that hiding the group behind an oracle puts more idealized constraints on the adversary. In the case of abelian groups this makes ends up making no difference, but in the case of non-abelian groups results in a model that is potentially less reflective of the real world.

Worse is the case of Type Safe models. In the classical generic group setting, as first proved in~\cite{PROVSEC:JagSch08} and clarified in~\cite{C:Zhandry22b}, when it comes to proving security, the Type Safe and Random Representation models can usually be treated as equivalent\footnote{This is not the case when using the models to prove \emph{impossibility} results, where even classically there is a major difference between the two models.}. This equivalence would carry over to the classical setting for generic group actions. However, we observe that the equivalence proved in~\cite{PROVSEC:JagSch08,C:Zhandry22b} does \emph{not} hold in the quantum setting. This observation was first made, but not elaborated on, by~\cite{DBLP:conf/crypto/HhanYY24}.

In more detail, one direction of the equivalence --- converting an adversary in the Type Safe model into one in the Random Representation model --- is trivial, both classically and quantumly. We just use the random labels from the random representation model as the handles for the Type Safe adversary. For the other direction, the classical proof will construct a Type Safe adversary out of a Random Representation adversary by choosing the random labels itself. The challenge is that the Random Representation adversary will expect identical labels on certain related queries, namely if it computes the same element $g*x_\lambda$ in multiple ways. To account for this, the Type Safe adversary maintains a table of all the queries made so far, and the labels generated for those queries. Then if it ever needs to output an element that was already produced, it can use the table to make sure it uses the same label.

In the quantum setting, maintaining this table is problematic, as it requires recording the queries made by the adversary. Quantum queries cannot be recorded without perturbing them, and if the adversary detects any disturbance it may abort and refuse to work. Such an adversary would break the classical reduction. We note that sometimes it is possible to record quantum queries~\cite{C:Zhandry19}, but the recording has to be done in careful ways that limit applications. In particular, such query recording is usually done on random oracles, and there has so far been no techniques for recording queries for complicated structures like group action oracles.

Thus, based on our current understanding, the Random Set Representation model defined above seems to be ``at least as good'' as any other model for group actions in the quantum setting, and may in fact be ``better'' than the other models. For this reason, we focus on the Random Set Representation model. We leave exploring the exact relationship between the models as an interesting open question.

\paragraph{Algebraic Group Action Model.} In Section~\ref{sec:knowledge}, we consider a different idealized model called the Algebraic Group Action Model, the quantum and group action version of the classical Algebraic Group Model (AGM)~\cite{C:FucKilLos18}. In the classical world, this model is ``between'' the Type Safe model and the standard model, in the sense that security in the algebraic model implies security in the Type Safe model (which in turn often implies security in the Random Representation model, per~\cite{C:Zhandry22b}). However, in Section~\ref{sec:knowledge}, we explain that the quantum analog of this model is actually problematic, and the proof of ``between-ness'' does not hold quantumly, for similar reasons as to why the equivalence between Random Representation and Type Safe models does not appear to hold quantumly. As such, it seems that the (Random Representation) generic group action model actually \emph{better} captures available attacks than the algebraic group action model.

\subsection{Our Framework for Quantum GGAM Security Proofs.}

\paragraph{Challenges with the quantum GGAM.} The challenge with the quantum GGAM, as observed by~\cite{PKC:DHKKLR23}, is that we cannot hope for unconditional security results, as the discrete logarithm is easy if we only count quantum query complexity.~\cite{PKC:DHKKLR23} take the approach of instead considering the Algebraic Group Action Model (AGAM). We discuss the pitfalls of this approach in Section~\ref{sec:knowledge}. Here we instead observe that we can recover a meaningful model by counting both queries and computational cost. However, because we cannot hope to prove unconditional query complexity lower bounds, we must instead resort to making computational assumptions and giving reduction-style arguments. This means arguments in the quantum GGAM will look very different that proofs in the classical GGM. To the best of our knowledge, there have been no prior security proofs in the quantum GGAM. We therefore develop some new tools and techniques for giving such proofs, including a proof of security of our quantum money scheme.

\paragraph{Our Abstract Framework.} We first give a very abstract framework, which we will then apply the framework to the GGAM.

Let $\Ys$ be a set, and $\Fs$ be a family of functions $f:\Ys\rightarrow\Ys$. Let $y_0\in\Ys$ be a specific starting element in $\Ys$. Consider a random injection $L:\Ys\rightarrow\{0,1\}^{m'}$, and consider the oracle $\Os$ which maps $\Os(L(y),f)=L(f(y))$; $\Os$ outputs $\bot$ on any string that is not in the image of $L$. We will give the adversary $L(y_0)$ and also superposition access to $\Os$.

Now consider a set $\Ys'\subset\{0,1\}^s$, and suppose we have a not-necessarily-random injection $\Gamma:\Ys\rightarrow\Ys'$ (meaning $s\geq |\Ys|$). We also have a procedure $P$ which is able to map $P(\Gamma(y),f)=\Gamma(f(y))$. However, unlike the oracle $\Os$ considered above, this procedure $P$ may output values other than $\bot$ when given inputs that are not in the image of $\Gamma$. Our goal is to, nevertheless, simulate $\Os$ using $P$.

Concretely, we will choose a random injection $\Pi:\{0,1\}^s\rightarrow\{0,1\}^{m'}$, and simulate $\Os$ with the oracle $\Os'(\Pi(z),f)=\Pi(P(z,f))$; $\Os'$ will output $\bot$ on any input not in the image of $\Pi$. We will then give the adversary $\Pi(\Gamma(y_0))$, and quantum query access to $\Os'$.

\paragraph{Application to the GGAM.} In our case, we will have $\Ys$ be a group $\G_\lambda$. $\Fs$ will include for each $h\in\G_\lambda$ the map $g\mapsto h+g$. The distinguished element $y_0$ is just $0\in\G_\lambda$. In this way, $\Os$ becomes the generic group action oracle, with labeling function $L$. However, we also include extra operations in $\Fs$, the exact operations will depend on the application.

Our goal will be to simulate $\Os$, the generic group action oracle with extra operations, using only a plain group action $(\G,\Xs,*)$. $(\G,\Xs,*)$ could be a standard-model group action, or perhaps a plain generic group action. We will assume $\Xs_\lambda\subseteq\{0,1\}^{m}$ for some polynomial $m=m(\lambda)$. This ``base'' group action will be the source of hardness. We will therefore make some hopefully simple and mild computational assumptions about $(\G,\Xs,*)$, and hope to derive useful hardness results about the expanded group action $\Os$.

To do so, we will let $\Ys'=\Xs_\lambda^{\otimes k}$ for some $k$. We will also choose some integers $c_1,\dots,c_k$ whose GCD is 1, and starting set elements $y_1,\dots,y_k$. Then define $\Gamma(g)=((c_1g)*y_1,(c_2g)*y_2,\cdots,(c_kg)*y_k)$. Since the GCD of the $c_i$ is 1, the map $\Gamma(g)$ is injective. 

For $f$ corresponding to adding group element $h$, we can set $P((z_1,\dots,z_k),h)=((c_1h)*z_1,\cdots,(c_kh)*z_k)$. Note that this will have the correct effect, as $P(\Gamma(g),h)=\Gamma(h+g)$. For simulating other functions $f\in\Fs$, we will rely on other transformations to the vector $(z_1,\dots,z_k)$, which will depend on the application.

\paragraph{Correctness of the Simulation.}

\begin{lemma}\label{lem:ggamsim} Fix $y_0,\Ys,\Ys',\Gamma,\Fs$ as above. Assume $m'\geq s+t$ for some $t$. Then consider any quantum algorithm $\As$ which makes $q$ quantum queries to its oracle. Then:
	\[\left|\Pr\left[\As^{\Os}(L(y_0))=1\right]-\Pr\left[\As^{\Os'}(\Pi(\Gamma(y_0)))=1\right]\right|<O(q\times 2^{-t/2})\enspace .\]
	Above, $L,\Pi$ are random injections, with $\Os,\Os'$ being derived from them as above. The probabilities are over the random choice of $L,\Pi$ and the randomness of $\As$. Note that our order of quantifiers allows $\As$ to depend on $y_0,\Ys,\Ys',\Gamma,\Fs$.
\end{lemma}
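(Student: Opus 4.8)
The plan is to argue that the two experiments are indistinguishable because the only way $\As$ can detect the difference between the idealized oracle $\Os$ and the simulated oracle $\Os'$ is by querying $\Os'$ on a string $\Pi(z)$ where $z \in \{0,1\}^s \setminus \Gamma(\Ys)$, i.e. a ``fake'' set element that has no counterpart in the $\Os$-world. Everywhere on the image of $\Gamma$, the correspondence $L(y) \leftrightarrow \Pi(\Gamma(y))$ together with the identities $\Os(L(y),f) = L(f(y))$ and $\Os'(\Pi(\Gamma(y)),f) = \Pi(P(\Gamma(y),f)) = \Pi(\Gamma(f(y)))$ shows the two oracles are perfectly correlated via the bijection $L(y) \mapsto \Pi(\Gamma(y))$ on the relevant part of the label space. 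So the first step is to set up this ``good'' partial isomorphism explicitly and observe that, restricted to it, $\Os$ and $\Os'$ are identically distributed (as random injections with matching support).

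The second step is to bound the probability that $\As$'s queries ever place appreciable amplitude on a string of the form $\Pi(z)$ with $z \notin \Gamma(\Ys)$. Here the key point is that the only strings of this form $\As$ can ``discover'' are in the image of $\Pi$, which is a random injection from $\{0,1\}^s$ into $\{0,1\}^{m'}$ with $m' \geq s + t$; the image of $\Pi$ therefore has density at most $2^{-t}$ in $\{0,1\}^{m'}$, and moreover every useful string $\As$ can obtain from an oracle query is in the image of $\Pi \circ \Gamma$ (since $\Os'$ only ever outputs such strings, or $\bot$). To reach a string $\Pi(z)$ with $z \notin \Gamma(\Ys)$, $\As$ must essentially guess a point in the image of $\Pi$ that it has not already been handed, and each query can only increase the ``found'' amplitude on such strings by roughly $2^{-t/2}$. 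I would make this precise with the standard hybrid / one-way-to-hiding style argument (a la the O2H lemma, or the BBBV hybrid argument): define for each $j \leq q$ the event that the first $j$ queries have not touched a ``bad'' string, couple the two experiments until the first bad query, and bound the per-query perturbation by $O(2^{-t/2})$ using the sparseness of $\mathrm{image}(\Pi)$, summing to $O(q \cdot 2^{-t/2})$.

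The third, more delicate point is that $P$ may misbehave on inputs outside $\Gamma(\Ys)$ — it can output arbitrary strings in $\{0,1\}^s$ rather than $\bot$ — so I have to make sure these spurious $P$-outputs cannot be leveraged. But the composition with the random injection $\Pi$ saves us: whatever $z' = P(z,f)$ is, the oracle returns $\Pi(z')$, and unless $\As$ has already seen $\Pi(z')$ this is (conditioned on the view so far) a uniformly random string in the $2^{-t}$-sparse image of $\Pi$, hence almost certainly outside the ``known'' part of the label space and useless for distinguishing. This is where I have to be careful about the order of quantifiers (the lemma allows $\As$ to depend on $\Gamma, \Fs, P$), but since $\Pi$ is chosen independently of everything, fixing $\Gamma,P$ in advance does no harm: the randomness of $\Pi$ alone suffices for the sparseness argument.

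I expect the main obstacle to be the quantum bookkeeping in step two: making rigorous the claim that $\As$ cannot, with more than $O(q \cdot 2^{-t/2})$ amplitude, query $\Os'$ at a point $\Pi(z)$ with $z \notin \Gamma(\Ys)$, when $\Os'$'s own outputs on bad inputs are themselves random images of $\Pi$ that could in principle be fed back in. The clean way to handle this is to run the hybrid argument against an oracle that has been ``punctured'' to return $\bot$ on all non-$\Gamma$ strings (which is exactly $\Os$ transported through the bijection), show this punctured oracle is perfectly indistinguishable from $\Os$, and then bound the distance between $\Os'$ and its punctured version by the probability of ever hitting the puncture set — a probability controlled entirely by the sparseness of $\Pi$ and handled by a standard compressed-oracle or O2H bound, giving the stated $O(q \cdot 2^{-t/2})$.
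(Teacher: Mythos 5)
Your proposal is correct and takes essentially the same route as the paper: identify $L=\Pi\circ\Gamma$ as a random injection (so the first experiment can be rewritten with $L(y_0)=\Pi(\Gamma(y_0))$), observe that $\Os$ and $\Os'$ agree except on labels $\ell=\Pi(z)$ with $z\notin\Gamma(\Ys)$, and bound the accumulated query amplitude on this sparse set of size at most $2^s$ inside $\{0,1\}^{m'}$ via a BBBV/O2H-style hybrid, giving $O(q\cdot 2^{-t/2})$. Your step 3 worry about $P$'s spurious outputs being fed back in is valid but already subsumed by the same query-weight bound the paper invokes, since those outputs are only ever produced after a query has already landed in the bad set.
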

\begin{proof}We prove security via a sequence of hybrids.
	\paragraph{Hybrid 0.} This is the case where we run $\As^{\Os}(L(y_0))$ where $L:\Ys\rightarrow\{0,1\}^{m'}$ is uniform random injection. Let $p_0$ be the probability of outputting 1.
	\paragraph{Hybrid 1.} Here, we run $\As^{\Os}(L(y_0))$, except that we set $L$ to be the function $L(y)=\Pi(\Gamma(y))$, where $\Pi$ is a random injection. But since $\Gamma$ is an injection, this means $L$ is a random injection anyway, so the distribution of $L$ and hence $\Os$ is identical to {\bf Hybrid 0}. Therefore, if we let $p_1$ be the probability $p_0$ outputs 1 in {\bf Hybrid 1}, we have $p_1=p_0$. Observe that $L(y_0)=\Pi(\Gamma(y_0))$.

	\paragraph{Hybrid 2.} Here, we run $\As^{\Os'}(\Pi(\Gamma(y_0)))$. Let $p_2$ be the probability of outputting 1. On all points that $\Os$ accepts, $\Os'$ behaves identically. Likewise, on any point that $\Os'$ rejects, $\Os'$ rejects as well. The only difference between this and {\bf Hybrid 1} is that here, $\Os'$ may accept elements that were rejected by $\Os$, namely elements that are in the image of $\Pi$ but not in the image of $L=\Pi\circ\Gamma$. We will show that these potential changes are nevertheless undetectable except with small probability.
	
	Consider running $\As^{\Os}(L(y_0))$ where $L(y)=\Pi(\Gamma(y))$ as in {\bf Hybrid 1}. However, we only sample $\Pi$ on inputs $z$ that are in the image of $\Gamma$; for all other inputs $z$, $\Pi$ remains unspecified. Observe that {\bf Hybrid 1} never needs to evaluate $\Pi$ on $z$ outside of the image of $\Gamma$, since the oracle $\Os$ will anyway reject in these cases. Let $S\subseteq\{0,1\}^{m'}$ be the set of images of $\Pi$ sampled so far.
	
	Now imagine simulating the rest of $\Pi$. Let $T\subset\{0,1\}^{m'}$ be the set of images of $\Pi$ for $z\in \Ys'$ that are not in the image of $\Gamma$. Observe that $T$ is a random subset of size $|\Ys'|\setminus |\Ys|\leq |\Ys'|\leq 2^s$. We now observe that the only points where $\Os$ and $\Os'$ differ are on pairs $(\ell,f)$ for $\ell\in T$: for $\ell\in S$, the two faithfully compute the same function and are identical, while for $\ell\notin T\cup S$, both output $\bot$.

	From here, concluding that $p_1$ and $p_2$ are close is a standard argument. The expected total query weight in {\bf Hybrid 1} on points $(\ell,f)$ for $\ell\in T$ is at most $|T|/2^{m'}\leq 2^{-t}$. Then via standard results in quantum query complexity~\cite{BBBV97}, the difference in acceptance probabilities $|p_1-p_2|$ is at most $O(\sqrt{q^2 2^{-t}})=O(q\times 2^{-t/2})$. Thus $|p_0-p_2|\leq O(q\times 2^{-t/2})$, as desired.\end{proof}

Next, we recall a lemma that shows that random injections can be simulated quantumly:

\begin{lemma}[\cite{AC:Zhandry21}]\label{lem:injectionsim} Random injections with quantum query access can be simulated efficiently.
\end{lemma}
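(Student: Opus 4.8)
The final statement to prove is Lemma~\ref{lem:injectionsim}: random injections with quantum query access can be simulated efficiently. This is attributed to \cite{AC:Zhandry21}, so the ``proof'' here will really be a sketch recalling the known technique rather than a from-scratch argument.

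My plan is to invoke the compressed-oracle / small-range-distribution machinery. First I would observe that a random injection $L: \Ys \to \{0,1\}^{m'}$ with $m' \geq |\Ys| + t$ (the sparse regime relevant here) is statistically close to a random \emph{function} $\Ys \to \{0,1\}^{m'}$: by a birthday-type argument, a random function fails to be injective only with probability at most $\binom{|\Ys|}{2} 2^{-m'} \leq 2^{-t}$ (or $O(q^2 2^{-t})$ after accounting for the queries actually made), and a standard $\mathsf{BBBV}$-style hybrid bounds the distinguishing advantage of any $q$-query quantum algorithm by $O(q \cdot 2^{-t/2})$. Thus it suffices to simulate a random function. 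Second, I would recall that a random function with quantum query access can be simulated efficiently and perfectly using Zhandry's compressed oracle technique (lazy sampling for quantum queries): one maintains a superposition over partial function tables in the Fourier basis, each query touches only a polynomial amount of state, and the simulation is indistinguishable from a true random oracle. Alternatively, one can cite the earlier small-range-distribution result: a random function is $O(q^3/r)$-indistinguishable from a function built by composing with a random function onto a range of size $r$, which gives an efficient (polynomial in $q$) simulation with negligible error once $r$ is polynomially large. Either route establishes the claim.

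The main obstacle — and the reason the paper offloads this to \cite{AC:Zhandry21} — is that we actually want a random \emph{injection}, not merely a random function, and in some parameter regimes (small $m'$) these differ non-trivially; the cited work handles this by giving a direct efficient simulation of random injections (and even random permutations) rather than going through the function approximation, which is cleaner and avoids the need to assume sparsity. So in the write-up I would state that the reduction to random functions works in our sparse regime, but cite \cite{AC:Zhandry21} for the general statement, noting that the technique there additionally handles the non-sparse case by other means. I would keep the actual proof to a brief paragraph: ``In the regime $m' \geq |\Ys| + t$ relevant to Lemma~\ref{lem:ggamsim}, a random injection is within $O(q \cdot 2^{-t/2})$ of a random function against any $q$-query adversary, and random functions are efficiently simulable by the compressed-oracle technique; the general case is \cite{AC:Zhandry21}.''

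\begin{proof}
This is shown in~\cite{AC:Zhandry21}; we briefly recall the idea. In the parameter regime relevant to us, namely $m'\geq |\Ys|+t$, a uniformly random function $F:\Ys\rightarrow\{0,1\}^{m'}$ is injective except with probability at most $\binom{|\Ys|}{2}2^{-m'}\leq 2^{-t}$. Moreover, by a standard hybrid argument using the results of~\cite{BBBV97}, any quantum algorithm making $q$ queries distinguishes a random function from a random injection with advantage at most $O(q\times 2^{-t/2})$: one can interpolate between the two by re-randomizing collisions, and each collision is touched with only small query weight. Thus it suffices to simulate a random function with quantum query access, which is done efficiently and with no error by the compressed-oracle technique of~\cite{C:Zhandry19} (equivalently, up to negligible error, via small-range distributions), wherein one lazily maintains a superposition over partial function tables in the Fourier basis, updating only a polynomial amount of state per query. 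For the general case, in which one does not assume sparsity, \cite{AC:Zhandry21} gives a direct efficient simulation of random injections (and permutations) without passing through the function approximation. In all cases relevant to this work the simulation error is negligible and the running time is polynomial in the number of queries.
\end{proof}
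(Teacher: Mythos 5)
The paper itself offers no proof of this lemma; it is stated as an imported black-box result from~\cite{AC:Zhandry21}, so there is no in-paper argument to compare against. Your sketch recalls the right circle of ideas --- in the sparse regime a random injection is query-indistinguishable from a random function, random functions are efficiently simulable via the compressed-oracle technique of~\cite{C:Zhandry19}, and the general (non-sparse) case is deferred to the cited work --- and this is exactly the role the lemma plays in the paper (it is only ever invoked in the sparse setting of Lemma~\ref{lem:ggamsim}).

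Two technical slips are worth noting, since they would matter if you ever had to make this self-contained rather than a pointer to~\cite{AC:Zhandry21}. First, ``$m'\geq |\Ys|+t$'' should be ``$m'\geq \log_2|\Ys|+t$'' (or, in the notation of Lemma~\ref{lem:ggamsim}, $m'\geq s+t$ with $|\Ys|\leq 2^s$); as written it compares a length to a cardinality. Second, the claim that a random function $\Ys\to\{0,1\}^{m'}$ is globally injective except with probability $\binom{|\Ys|}{2}2^{-m'}\leq 2^{-t}$ actually needs $m'$ on the order of $2\log_2|\Ys|+t$, not $\log_2|\Ys|+t$; in the latter regime the function will typically have collisions, they are just hard for a bounded-query algorithm to locate. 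The correct statement for the reduction is therefore the one you mention only parenthetically: a bound on the $q$-query distinguishing advantage between a random function and a random injection, which from quantum collision-finding lower bounds is of the form $O(q^3\cdot 2^{-m'})$. This is still negligible once $m'\geq\log_2|\Ys|+\omega(\log\lambda)$ and $q=\poly(\lambda)$, so the conclusion survives, but the exponent on $q$ and the reasoning are different from the global-injectivity route your write-up leads with.
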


With Lemmas~\ref{lem:ggamsim} and~\ref{lem:injectionsim} in hand, we now turn to security proofs in the GGAM.

\subsection{Group Actions with Twists}

In group actions based on isogenies, it is possible to compute a ``twist'', which maps $g*x_\lambda\mapsto (-g)*x_\lambda$. It is straightforward to update our notion of group action and generic group action to incorporate twists. Let $\GGAM^\pm_{\G,m}$ denote the generic group action relative to group $\G$ with label length $m$. Such twists effectively allow for the dihedral group to act on the set $\Xs_\lambda$. An important question is whether having this larger (non-abelian) group act on $\Xs_\lambda$ can be damaging for security. Here, we show that, at least generically, the existence of twists plausibly has little impact on security.

\paragraph{Assumptions with Negation.} We consider variants of standard assumptions on group actions where additional ``negation'' elements are given out. For example:

\begin{assumption}\label{def:dlogpm} The \emph{discrete log assumption with negation} (DLog$^\pm$) holds on a group action $(\G,\Xs,*)$ if, for all QPT adversaries $\As$, there exists a negligible $\lambda$ such that 
	\[\Pr[\As(g*x_\lambda,(-g)*x_\lambda)=g:g\gets\G_\lambda]\leq\negl(\lambda)\enspace .\]
\end{assumption}
\begin{assumption}\label{def:cdhpm} The \emph{computational Diffie-Hellman assumption with negation} (CDH$^\pm$) holds on a group action $(\G,\Xs,*)$ if, for all QPT adversaries $\As$, there exists a negligible $\lambda$ such that 
	\[\Pr\left[\As\left(\substack{a*x_\lambda,b*x_\lambda,\\(-a)*x_\lambda,(-b)*x_\lambda}\right)=(a+b)*x_\lambda: a,b\gets\Gs_\lambda\right]\leq\negl(\lambda)\enspace .\]
\end{assumption}
\begin{assumption}\label{def:ddhpm} The \emph{decisional Diffie-Hellman assumption with negation} (DDH$^\pm$) holds on a group action $(\G,\Xs,*)$ if, for all QPT adversaries $\As$, there exists a negligible $\lambda$ such that 
	\begin{align*}&\left|\Pr\left[\As\left(\substack{a*x_\lambda,b*x_\lambda,c*x_\lambda,\\(-a)*x_\lambda,(-b)*x_\lambda,(-c)*x_\lambda}\right)=1: a,b,c\gets\Gs_\lambda\right]\right.\\&\left.-\Pr\left[\As\left(\substack{a*x_\lambda,b*x_\lambda,(a+b)*x_\lambda,\\(-a)*x_\lambda,*(-b)*x_\lambda,(-a-b)*x_\lambda}\right)=1:a,b\gets\Gs_\lambda\right]\right|\leq\negl(\lambda)\enspace .\end{align*}
\end{assumption}

Note that the $^\pm$ versions of DLog, CDH, DDH imply their ordinary counterparts (Definitions~\ref{def:dlog},~\ref{def:cdh}, and~\ref{def:ddh}, respectively). Moreover, the assumptions are \emph{equivalent} to the ordinary versions on group actions with twists. Also, note that, while~\cite{DBLP:journals/joc/MontgomeryZ24} prove the quantum equivalence of ordinary DLog and CDH, their proof does not necessarily apply to the $^\pm$ versions, and an equivalence between these versions may be incomparable since it would start from a stronger property, but also reach a stronger conclusion.

\paragraph{Our Result.} We now show that, the negation assumptions allow us to lift to security under twists, generically.

\begin{theorem}Let $(\G,\Xs,*)$ be a group with $\Xs\subseteq\{0,1\}^m$ such that DLog$^\pm$ (resp. CDH$^\pm$, DDH$^\pm$) holds. Let $m'\geq 2m+\omega(\log\lambda)$. Then DLog$^\pm$ (resp. CDH$^\pm$, DDH$^\pm$) hold in $\GGAM^{\pm}_{\G,m'}$, the GGAM with twists relative to group $\G$ and with label length $m'$.\end{theorem}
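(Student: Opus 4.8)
The plan is to instantiate the abstract GGAM framework with the two-coordinate encoding $\Gamma(g)=(g*x_\lambda,(-g)*x_\lambda)$ --- exactly as in the negation warm-up of the introduction --- and then convert the resulting simulation into a reduction from the corresponding base assumption.

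\textbf{Setup of the simulation.} Take $\Ys=\G_\lambda$, $y_0=0$, and let $\Fs$ consist of the translations $g\mapsto h+g$ (one for each $h\in\G_\lambda$) together with the negation map $g\mapsto -g$. With these choices the oracle $\Os$ of the framework is precisely $\GGAM^{\pm}_{\G,m'}$: group-action queries realize the translations and twist queries realize the negation, all under the random labeling $L$. On the base side, set $k=2$, $c_1=1$, $c_2=-1$ (so $\gcd(c_1,c_2)=1$), $y_1=y_2=x_\lambda$, whence $\Ys'=\Xs_\lambda^{\otimes 2}\subseteq\{0,1\}^{2m}$ and $\Gamma(g)=(g*x_\lambda,(-g)*x_\lambda)$, which is injective by regularity of the base action. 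The procedure $P$ acts as $P((z_1,z_2),h)=(h*z_1,(-h)*z_2)$ for a translation by $h$, and $P((z_1,z_2),{\sf neg})=(z_2,z_1)$ for the negation; a direct computation gives $P(\Gamma(g),f)=\Gamma(f(g))$ in both cases, and $P$ is computable in QPT using only the base operation $*$ (plus ${\sf Recog}$ to output $\bot$ on malformed coordinates). Applying Lemma~\ref{lem:ggamsim} with $s=2m$ and $t=m'-2m=\omega(\log\lambda)$, no $q$-query algorithm distinguishes the genuine $\GGAM^{\pm}_{\G,m'}$ environment ($\Os$ with labeling $L$) from the simulated one ($\Os'$ built from $\Pi$ and $P$) with advantage better than $O(q\cdot 2^{-t/2})=\negl(\lambda)$ for polynomial $q$; and by Lemma~\ref{lem:injectionsim} the injection $\Pi$, with both forward and inverse access, is simulable in QPT.

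\textbf{The reduction.} First note that in $\GGAM^{\pm}_{\G,m'}$ it is without loss of generality to hand the adversary $L(-u)$ alongside $L(u)$, since $L(-u)$ follows from $L(u)$ by one twist query; hence the $^\pm$ challenge in the GGAM has the same strength as the plain challenge there. Given a QPT adversary $\As$ breaking DLog$^\pm$ (resp. CDH$^\pm$, DDH$^\pm$) in $\GGAM^{\pm}_{\G,m'}$ with non-negligible advantage, build $\Bs$ against the base assumption as follows. $\Bs$ receives its base challenge --- for DLog$^\pm$ the pair $(u*x_\lambda,(-u)*x_\lambda)$; for CDH$^\pm$ the quadruple $(a*x_\lambda,(-a)*x_\lambda,b*x_\lambda,(-b)*x_\lambda)$; for DDH$^\pm$ the six-tuple --- and observes that every pair of the shape $(t*x_\lambda,(-t)*x_\lambda)$ occurring there is exactly $\Gamma(t)$. $\Bs$ samples the simulated $\Pi$, runs $\As$ on the base point $\Pi(\Gamma(0))=\Pi(x_\lambda,x_\lambda)$ together with the challenge labels $\Pi(\Gamma(u))$ (resp. $\Pi(\Gamma(a)),\Pi(\Gamma(b))$; resp. the three DDH labels), all formed from its own input, and answers $\As$'s oracle queries with $\Os'$. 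By the indistinguishability above, $\As$'s view is $\negl$-close to a real run, so $\As$ still succeeds with non-negligible probability. For DLog$^\pm$, $\As$ outputs $u$, which $\Bs$ forwards. For CDH$^\pm$, $\As$ outputs a label $w$; whenever $\As$ succeeds $w=L(a+b)=\Pi(\Gamma(a+b))$, so $\Bs$ computes $\Pi^{-1}(w)$ and returns its first coordinate $(a+b)*x_\lambda$ (decoding to any other valid pair with the same first coordinate only helps $\Bs$). For DDH$^\pm$, $\As$ outputs a bit that $\Bs$ forwards: the ``structured'' (resp. ``random'') base tuple maps to the ``structured'' (resp. ``random'') simulated GGAM challenge, so $\Bs$'s distinguishing advantage is within $\negl$ of $\As$'s. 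In each case $\Bs$ is QPT and has non-negligible advantage, contradicting the base assumption.

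\textbf{Main obstacle.} The conceptual core --- that negation on the encoded set is nothing but a coordinate swap, hence simulable from a base action with no twist --- is already isolated in the framework and in Lemma~\ref{lem:ggamsim}, so what remains is essentially bookkeeping. The point needing genuine care is the parameter budget: the simulation error $O(q\cdot 2^{-(m'-2m)/2})$ must be negligible for every polynomial $q$, which is exactly why $m'\ge 2m+\omega(\log\lambda)$ is imposed; and for CDH$^\pm$ we additionally rely on the simulation of $\Pi$ supporting efficient inverse queries so that $\Bs$ can decode $\As$'s output. A secondary, purely presentational point: since Lemma~\ref{lem:ggamsim} is stated for a fixed starting element and a $\{0,1\}$-valued algorithm, one first folds challenge generation into a wrapper $\As'$ that derives the challenge labels from $L(0)$ via oracle queries so the lemma applies verbatim, then observes that the externally supplied base challenge is distributed identically to what the wrapper produces, so the advantage carries over to $\Bs$.
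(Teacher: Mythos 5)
Your proposal is correct and follows essentially the same route as the paper: encode each hidden element as the pair $\Gamma(g)=(g*x_\lambda,(-g)*x_\lambda)$ under a sparse random injection $\Pi$, simulate twist queries as coordinate swaps, invoke Lemma~\ref{lem:ggamsim} (with $s=2m$, $t=m'-2m$) for the simulation error, and reduce to the corresponding $^\pm$ base assumption. The only differences are presentational --- you spell out all three cases and the wrapper needed to apply Lemma~\ref{lem:ggamsim} when the adversary receives challenge labels beyond $L(0)$, whereas the paper writes out only the DDH case via an explicit four-hybrid argument.
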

\begin{proof}We prove the case of DDH, the other proofs being nearly identical. Let $\As^{\GGAM^\pm_{\G,m}}$ be a supposed adversary for DDH$^\pm$ in $\GGAM^\pm_{\G,m}$, the GGAM with twists and with label length $m'$. Let $\epsilon$ be the distinguishing advantage of $\As$, and $q$ the polynomial number of queries. We construct a new adversary $\Bs$ for DDH$^\pm$ in the group action $(\G,\Xs,*)$ as follows.
	\begin{itemize}
		\item $\Bs$, on input $u^+,v^+,w^+,u^-,v^-,w^-$, will choose a random injective function $\Pi$ from $\{0,1\}^{2m}\rightarrow\{0,1\}^{m'}$. To make $\Bs$ efficient, we will actually use Lemma~\ref{lem:injectionsim} to efficiently simulate $\Pi$. For simplicity in the following proof, we will treat $\Bs$ as actually using a true random injection.
		\item $\Bs$ will compute $X=\Pi(x_\lambda,x_\lambda),U=\Pi(u^+,u^-),V=\Pi(v^+,v^-),W=\Pi(w^+,w^-)$.
		\item $\Bs$ will then run $\As(X,U,V,W)$\enspace\footnote{Recall that in the definition of DDH, the adversary is only given $U,V,W$. However, in the generic group action model, we additionally give all parties the starting point $X$.}, simulating its queries as follows:
		\begin{itemize}
			\item For queries to the group action $(\ell,g)$, $\Bs$ simulates the query by computing $(z_1,z_2)\gets\Pi^{-1}(\ell)$, and then returning $\Pi(g*z_1,(-g)*z_2)$. For superposition queries, $\Bs$ simply runs this computation in superposition. Note that if we let $\Gamma(g)=(g*x_\lambda,(-g)*x_\lambda)$, then $\Bs$ simulates these queries exactly as prescribed  above in our general framework, for constants $c_1=1,c_2=-1$ and $y_1=y_2=x_\lambda$.
			
			\item When $\As$ makes a twist query on label $\ell$, $\Bs$ computes $(z_1,z_2)\gets\Pi^{-1}(\ell)$, and then computes $\ell'=\Pi(z_2,z_1)$ and responds with $\ell'$. For superposition queries, $\Bs$ simply runs this computation in superposition. Observe that the twist of $\Pi(\Gamma(g))$ as computed by $\Bs$ is exactly $\Pi(\Gamma(-g))$.
		\end{itemize}
		
		\item $\Bs$ then outputs whatever $\As$ outputs.
	\end{itemize}
	We now prove security via a sequence of hybrids.
	\paragraph{Hybrid 0.} Here, we run $\As^{\Os}(X,U=a*X,V=b*X,W=c*X)$ for a random injection $L$, where $X=L(0)$, $a,b,c$ are uniform in $\G_\lambda$, and $*$ denotes the action defined by $\Os$. Let $p_0$ be the probability $\As$ outputs 1.
	\paragraph{Hybrid 1.} Here, $\Bs$ is given $u^+,v^+,w^+,u^-,v^-,w^-=a*y,b*y,c*y,(-a)^y,(-b)*y,(-c)*y$, and simulates $\As$ as described above. Let $p_1$ be the probability $\As$ (and hence $\Bs$) outputs 1. Observe that $X,U,V,W=L(0),L(a),L(b),L(c)$, where $L$ is the implicit labeling function $L(g)=\Pi(g*x_\lambda,(-g)*x_\lambda)$. Since $\Bs$ simulates twist queries by mapping $L(g)\mapsto L(-g)$, $\Bs$ correctly simulates the view of $\As$ in {\bf Hybrid 0}, except that $\Os'$ and the twist oracle operate on values $\Pi(z_1,z_2)$ that might not be in the image of $L$. But we can invoke Lemma~\ref{lem:ggamsim} to conclude that $|p_0-p_1|\leq O(q\times 2^{2m-m'})=q\times\negl(\lambda)=\negl(\lambda)$.
	\paragraph{Hybrid 2.} Here, $\Bs$ is $u^+,v^+,w^+,u^-,v^-,w^-=a*y,b*y,(a+b)*y,(-a)^y,(-b)*y,(-a-b)*y$, and simulates $\As$ as described above. Let $p_2$ be the probability $\As$ (and hence $\Bs$) outputs 1. By Assumption~\ref{def:ddhpm}, $|p_1-p_2|\leq\negl(\lambda)$.
	\paragraph{Hybrid 3.} Now we run $\As^{\Os}(X,U=a*X,V=b*X,W=(a+b)*X)$. Let $p_3$ be the probability $\As$ outputs 1. By a similar argument for going from {\bf Hybrid 0} to {\bf Hybrid 1}, we conclude that $|p_2-p_3|\leq\negl(\lambda)$ is negligible. Piecing everything together, we have that $\epsilon=|p_0-p_3|\leq\negl(\lambda)$, thereby proving DDH$^\pm$ holds in $\GGAM^\pm_{\G,m'}$.\end{proof}

\subsection{Computing Banknotes With Complementary Serial Numbers}

Here, we prove that it is hard in generic group action to compute two banknotes for our scheme with ``complementary'' serial numbers that sum to zero. 

\begin{theorem}\label{thm:almostlightning}Let $(\G,\Xs,*)$ be a group with $\Xs\subseteq\{0,1\}^m$ such that DDH holds (Assumption~\ref{def:ddh}). Let $m'\geq 4m+\omega(\log\lambda)$. Let $(\gen^{\GGAM_{\G,m'}},\ver^{\GGAM_{\G,m'}})$ be the quantum money construction from Construction~\ref{constr:main}, using the generic group action $\GGAM_{\G,m'}$. Consider a QPT adversary $\Bs^{\GGAM_{\G,m'}}$ making queries to $\GGAM_{\G,m'}$, which takes as input the security parameter $\lambda$, and outputs a serial number $h\in\G_\lambda$ and two potentially entangled states $\$_1,\$_2$, which it tries to pass off as two banknotes. For all such $\Bs$, there exists a negligible $\negl(\lambda)$ such that the following holds:
	\[\Pr\left[\ver^{\GGAM_{\G,m'}}(h,\$_1)=\ver^{\GGAM_{\G,m'}}(-h,\$_2)=1:(h,\$_1,\$_2)\gets\Bs^{\GGAM_{\G,m'}}(1^\lambda)\right]\leq\negl(\lambda)\enspace .\]
\end{theorem}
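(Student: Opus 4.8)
The plan is to instantiate the abstract simulation framework of Lemma~\ref{lem:ggamsim} with a ``twisted-doubling'' encoding so that a pair of banknotes with complementary serial numbers $h$ and $-h$ can be turned into a distinguisher for the DDH instance on the base group action $(\G,\Xs,*)$. Concretely, given a DDH challenge $(A = a*x_\lambda,\,B = b*x_\lambda,\,C)$ where $C$ is either $(a+b)*x_\lambda$ or $c*x_\lambda$ for random $c$, the reduction $\Bs$ picks a random injection $\Pi:\{0,1\}^{4m}\to\{0,1\}^{m'}$ (simulated via Lemma~\ref{lem:injectionsim}) and defines the encoding $\Gamma(g) = \big((g)*x_\lambda,\,(-g)*x_\lambda,\,(g)*A,\,(-g)*B\big)$, i.e.\ in the notation of the general framework $k=4$, $c_1 = 1, c_2 = -1, c_3 = 1, c_4 = -1$ (GCD $1$, so $\Gamma$ injective), and starting points $y_1 = y_2 = x_\lambda$, $y_3 = A$, $y_4 = B$. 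The generic group action oracle $\GGAM_{\G,m'}$ is then simulated by $\Os'$ exactly as in the framework: a query $(\ell,g)$ is answered by $\Pi\big((g)*z_1,(-g)*z_2,(g)*z_3,(-g)*z_4\big)$ where $(z_1,z_2,z_3,z_4) = \Pi^{-1}(\ell)$, and $\bot$ off the image of $\Pi$. By Lemma~\ref{lem:ggamsim} with $s = 4m$ and $t = m' - 4m = \omega(\log\lambda)$, this is statistically indistinguishable from the real GGAM (up to $O(q\,2^{-t/2}) = \negl(\lambda)$) provided $\Bs$ makes polynomially many queries.

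The next step is to analyze the honest banknote state in this simulated world. Since $\Os'$ is (statistically close to) a genuine generic group action with starting point $X := \Pi(\Gamma(0)) = \Pi(x_\lambda, x_\lambda, A, B)$, Theorem~\ref{thm:reject} applies: for any serial number $h$, the verifier $\ver^{\GGAM}(h,\cdot)$ projects exactly onto $|\G_\lambda^h \star X\rangle = \frac{1}{\sqrt{|\G_\lambda|}}\sum_{g}\chi(g,h)\,|\,\Pi((g)*x_\lambda,\,(-g)*x_\lambda,\,(g)*A,\,(-g)*B)\rangle$. So if $\Bs$ outputs $(h,\$_1,\$_2)$ passing $\ver(h,\$_1) = \ver(-h,\$_2) = 1$, then after verification $\$_1$ is exactly $|\G_\lambda^h\star X\rangle$ and $\$_2$ is exactly $|\G_\lambda^{-h}\star X\rangle$, each with the probability the events occur; conditioning on success we obtain these two exact states. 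The key observation, mirroring the construction-overview computation, is that applying to one register the coordinate-shuffling-plus-oracle transformation $T: \Pi(z_1,z_2,z_3,z_4)\mapsto \Pi(z_2,z_1,z_4,z_3)$ followed by using the DDH value to map the ``$A,B$'' coordinates appropriately turns $|\G_\lambda^{-h}\star X\rangle$ into $|\G_\lambda^{h}\star X\rangle$ \emph{exactly when} the DDH challenge is real (i.e.\ $C = (a+b)*x_\lambda$), while when $C$ is random the transformed state lands on a $\Pi$-string outside the image of $\Gamma$ and is therefore orthogonal to $|\G_\lambda^{h}\star X\rangle$. I would verify this by the same substitution $g' = -g$ (or $g' = g \pm$ shift) used in the verification-correctness and distinguishing computations in Section~\ref{sec:constroverview}; the twist coordinates $z_1 \leftrightarrow z_2$ supply the swap $g \mapsto -g$ without needing twist queries on the base action, and the $A,B$ coordinates are where the ``$a+b$ vs.\ random'' distinction gets detected. $\Bs$ then runs the SWAP test on $\$_1$ and the transformed $\$_2$: accept-probability $1$ in the real case, $\le 1/2$ (in fact $= 1/2$ plus the overlap, which is $0$) in the random case, yielding distinguishing advantage $\Omega(1)$ times $\Bs$'s success probability, minus the $\negl(\lambda)$ simulation error. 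Hence if DDH holds on $(\G,\Xs,*)$, $\Bs$'s success probability is negligible.

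The step I expect to be the main obstacle is pinning down exactly which auxiliary transformation $T$ (beyond the pure coordinate shuffle) converts $|\G_\lambda^{-h}\star X\rangle$ to $|\G_\lambda^{h}\star X\rangle$, and confirming it is implementable by $\Bs$ from the DDH challenge alone (not requiring knowledge of $a,b,$ or discrete logs). The subtlety is that the serial numbers are $h$ and $-h$, not $h$ and $h$, so pure relabeling of $\Pi$-coordinates gives $g \mapsto -g$ in two coordinates but we must also reconcile the group-action ``base points'': a banknote for serial $-h$ over starting point $X$ must be mapped to a banknote for serial $h$, and the natural way is to realize that $|\G_\lambda^{-h}\star X\rangle$ is, up to phase, $|\G_\lambda^{h}\star X'\rangle$ for a \emph{different} starting element $X'$, and then to check $X' = X$ holds precisely under the DDH relation. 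I would handle this by writing out $g\star X = \Pi((g)*x_\lambda,(-g)*x_\lambda,(g)*A,(-g)*B)$ explicitly with $A = a*x_\lambda$, $B = b*x_\lambda$, performing the swap and the oracle step, re-indexing $g \to g - a$ (or $g\to a - g$) in the first two coordinates and checking that the third and fourth coordinates become $(g)*x_\lambda$-consistent exactly when $C=(a+b)*x_\lambda$; a secondary obstacle is making sure the SWAP-test soundness bound correctly accounts for the possibility that $\Bs$'s output is entangled or only passes verification with some probability $p$, which is handled by the standard ``verify first, then the post-measurement states are exactly the claimed ones'' consequence of Theorem~\ref{thm:reject} together with a union bound over the two verifications and the simulation error.
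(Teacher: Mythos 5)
Your high-level strategy is the right one, and is in fact the same as the paper's: embed the GGAM via $\Pi\circ\Gamma$ with $k=4$ and constants $(c_1,c_2,c_3,c_4)=(1,-1,1,-1)$, project to exact banknotes using Theorem~\ref{thm:reject}, apply a permutation of the $\Pi$-coordinates to one copy, and run the SWAP test. However, there is a concrete gap in your choice of $\Gamma$: you take $(y_1,y_2,y_3,y_4)=(x_\lambda,x_\lambda,A,B)$, so the DDH value $C$ never enters the simulation. The entire view of $\Bs$ under your reduction is a function of $(A,B)$ alone, hence identically distributed in the real and random DDH cases, and no post-processing can distinguish. The ``oracle step'' you hope to insert after the coordinate swap does not exist: DDH provides no oracle (you may be conflating this with the D2X/min assumption used in Theorem~\ref{thm:main}), and there is no operation in the group-action interface that combines the set element $C$ with $z_3,z_4$ in a useful way. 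Indeed, with your $\Gamma(g)=(g*x_\lambda,(-g)*x_\lambda,(a+g)*x_\lambda,(b-g)*x_\lambda)$, the swap $(z_1,z_2,z_3,z_4)\mapsto(z_2,z_1,z_4,z_3)$ yields $((-g)*x_\lambda,g*x_\lambda,(b-g)*x_\lambda,(a+g)*x_\lambda)$, which one would need to shift by the unknown constants $a-b$ and $b-a$ in coordinates $3,4$ to recover a point in the image of $\Gamma$; these shifts are not computable from $(A,B,C)$.

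The paper's fix is to choose $\Gamma(g)=(g*x_\lambda,(-g)*u,g*v,(-g)*w)$ where $(u,v,w)$ is the DDH triple, i.e.\ $(y_1,y_2,y_3,y_4)=(x_\lambda,u,v,w)$. Writing $u=a*x_\lambda,v=b*x_\lambda,w=c*x_\lambda$, one gets $L(g)=\Pi(g*x_\lambda,(a-g)*x_\lambda,(b+g)*x_\lambda,(c-g)*x_\lambda)$, and the bare coordinate swap $(z_1,z_2,z_3,z_4)\mapsto(z_2,z_1,z_4,z_3)$ sends $L(g)\mapsto\Pi((a-g)*x_\lambda,g*x_\lambda,(c-g)*x_\lambda,(b+g)*x_\lambda)$. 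This equals $L(a-g)$ if and only if $c=a+b$, with no auxiliary transformation, so the swap alone carries $|\G_\lambda^{-h}*L(0)\rangle$ to a global-phase multiple of $|\G_\lambda^h*L(0)\rangle$ in the real case and to a state with disjoint support in the random case. So the correction to your proposal is not a clever extra oracle step but a different choice of base points for $\Gamma$: put $C$ into the encoding from the start, and the rest of your argument (Lemma~\ref{lem:ggamsim} with $t=m'-4m$, Lemma~\ref{lem:injectionsim}, the projection via $\ver$, and the SWAP-test bound) goes through as you describe.
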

Notice that the statement above is \emph{almost} the statement that $(\gen^{\GGAM_{\G,m'}},\ver^{\GGAM_{\G,m'}})$ is a quantum lightning scheme, except that the second banknote is verified with respect to $-h$ instead of $h$. Theorem~\ref{thm:almostlightning} is therefore not quite enough to prove the security of our scheme, since it could be the case that it is possible to output many banknotes with the same serial number, even if it is impossible to output two with complementary numbers. We give a different proof below in Section~\ref{sec:qlightningsec} based on a stronger assumption which proves our scheme quantum lightning. We use the result here as a warm-up to our later result, which is based on a more complex assumption. Moreover, Theorem~\ref{thm:almostlightning} lets us prove that it is generically hard to output the uniform superposition $\frac{1}{\sqrt{|\G_\lambda|}}\sum_{g\in\G_\lambda}|L(g)\rangle$, which is just the banknote $|\G^0_\lambda*L(0)\rangle$ with serial number $0$. We state and prove this fact before proving Theorem~\ref{thm:almostlightning}.
\begin{corollary}\label{cor:nozero}Let $(\G,\Xs,*)$ be a group with $\Xs\subseteq\{0,1\}^m$ such that DDH holds (Assumption~\ref{def:ddh}). Let $m'\geq 4m+\omega(\log\lambda)$. Let $L$ be the labeling function for the generic group action $\GGAM_{\G,m'}$. Then for any QPT adversary $\As^{\GGAM_{\G,m'}}$ making queries to $\GGAM_{\G,m'}$ which outputs a state $\rho$, there exists a negligible $\negl(\lambda)$ such that $\langle\G^0_\lambda*L(0)|\rho|\G^0_\lambda*L(0)\rangle\leq\negl(\lambda)$.
\end{corollary}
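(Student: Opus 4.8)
The plan is to derive Corollary~\ref{cor:nozero} from Theorem~\ref{thm:almostlightning} by contradiction, exploiting the fact that a state with non-negligible overlap on $|\G^0_\lambda*L(0)\rangle$ can be cloned approximately by the verifier when the serial number is $0$. Concretely, suppose $\As^{\GGAM_{\G,m'}}$ outputs a state $\rho$ with $\langle\G^0_\lambda*L(0)|\rho|\G^0_\lambda*L(0)\rangle\geq\delta$ for some non-negligible $\delta$. I would build a quantum lightning adversary $\Bs$ (in the sense of the statement of Theorem~\ref{thm:almostlightning}, with complementary serial numbers) that does the following: run $\As$ to obtain $\rho$; apply $\ver^{\GGAM_{\G,m'}}(0,\cdot)$ to $\rho$. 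By Theorem~\ref{thm:reject}, $\ver(0,\rho)$ accepts with probability exactly $\langle\G^0_\lambda*L(0)|\rho|\G^0_\lambda*L(0)\rangle\geq\delta$, and \emph{conditioned on acceptance the post-verification state is exactly $|\G^0_\lambda*L(0)\rangle$}.

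The key observation is that $|\G^0_\lambda*L(0)\rangle$ is efficiently clonable: it equals $\frac{1}{\sqrt{|\G_\lambda|}}\sum_{g\in\G_\lambda}|L(g)\rangle$, which is exactly the output of $\gen$'s internal procedure once we know the serial number is $0$ — indeed, we can produce a fresh copy from scratch by applying $\QFT_{\G_\lambda}$ to $|0\rangle$ and then computing the group action query $|g\rangle|0\rangle\mapsto|g\rangle|L(g)\rangle$, which makes a genuine second copy of the state rather than consuming the first. So $\Bs$ does this: run $\As$, verify against serial number $0$; if verification accepts (probability $\geq\delta$), it now holds one honest copy of $|\G^0_\lambda*L(0)\rangle$; it then prepares a second independent copy using the $\gen$ sub-procedure above. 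Since $0=-0$, both copies are valid banknotes for the \emph{complementary} pair of serial numbers $(0,-0)=(0,0)$, so $\ver(0,\$_1)=\ver(-0,\$_2)=1$ with probability $\geq\delta$. (If verification against $0$ rejects, $\Bs$ can just output two arbitrary states; this only costs us the $1-\delta$ factor.) This gives $\Bs$ success probability $\geq\delta$, and since $\delta$ is non-negligible this contradicts Theorem~\ref{thm:almostlightning}, forcing $\langle\G^0_\lambda*L(0)|\rho|\G^0_\lambda*L(0)\rangle\leq\negl(\lambda)$.

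I would also note that $\Bs$ as constructed makes only polynomially many queries to $\GGAM_{\G,m'}$ (those of $\As$, those inside one run of $\ver$, and those inside one run of the $\gen$ sub-procedure), and runs in QPT, so it is an admissible adversary for Theorem~\ref{thm:almostlightning}; and that the hypotheses ($\Xs\subseteq\{0,1\}^m$, DDH on $(\G,\Xs,*)$, $m'\geq 4m+\omega(\log\lambda)$) are precisely those imported from Theorem~\ref{thm:almostlightning}, so nothing extra is needed.

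The main subtlety — and the one step I would be careful to spell out — is the claim that the post-verification state is \emph{exactly} $|\G^0_\lambda*L(0)\rangle$ rather than merely close to it, and that $|\G^0_\lambda*L(0)\rangle$ is exactly (not approximately) re-preparable, so that the two outputted states are genuinely both perfect banknotes for serial number $0$; this is where Theorem~\ref{thm:reject}'s strong ``projects exactly onto the honest banknote'' conclusion does the real work, and it is what lets the reduction be lossless up to the $\delta$ acceptance probability. Everything else is bookkeeping: verifying that running $\As$, then $\ver(0,\cdot)$, then the $\gen$ preparation sub-routine, all relative to the same oracle $\GGAM_{\G,m'}$, is a legitimate QPT oracle algorithm, and that conditioning on the (non-negligible-probability) acceptance event is fine because we only need a non-negligible overall success probability to contradict Theorem~\ref{thm:almostlightning}.
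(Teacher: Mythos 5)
Your proposed reduction has a fatal flaw in the re-preparation step, and the state you are trying to re-prepare is precisely the one whose hardness the corollary asserts, so the argument ends up being circular.

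You claim you can produce a fresh copy of $|\G^0_\lambda*L(0)\rangle=\frac{1}{\sqrt{|\G_\lambda|}}\sum_g|L(g)\rangle$ from scratch by applying $\QFT_{\G_\lambda}$ to $|0\rangle$ and then querying the group-action oracle. But what that procedure actually produces is the \emph{entangled} state $\frac{1}{\sqrt{|\G_\lambda|}}\sum_g|g\rangle|L(g)\rangle$, not $\frac{1}{\sqrt{|\G_\lambda|}}\sum_g|L(g)\rangle$. To obtain the latter you would have to uncompute the $|g\rangle$ register from $|L(g)\rangle$, i.e.\ solve discrete logarithms in the GGAM, which is exactly what is presumed hard. (This is also why $\gen$ in Construction~\ref{constr:main} applies a second $\QFT$ to $\Ss$ and then \emph{measures}: the outcome $h$ is not under your control, and $h=0$ occurs only with probability $1/|\G_\lambda|$.) If your re-preparation step actually worked, then the adversary $\Bs$ could simply run it twice and output $(0,\$_1,\$_2)$, trivially violating Theorem~\ref{thm:almostlightning} with probability $1$ without invoking $\As$ at all --- a reductio showing the step cannot go through.

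The paper's proof sidesteps this by never trying to re-prepare the state. Starting from the single copy of $|\G^0_\lambda*L(0)\rangle$ obtained by projecting $\rho$, it applies the computational-basis copying map $|x\rangle\mapsto|x,x\rangle$, producing $\frac{1}{\sqrt{|\G_\lambda|}}\sum_g|L(g),L(g)\rangle$. The crucial identity is that this \emph{correlated} (not product) state rewrites in the banknote basis as $\frac{1}{\sqrt{|\G_\lambda|}}\sum_h|\G^h_\lambda*L(0)\rangle|\G^{-h}_\lambda*L(0)\rangle$: the bilinearity of $\chi$ makes the ``diagonal'' $g_1=g_2$ constraint in the computational basis equivalent to an anticorrelation $h_1=-h_2$ in the Fourier/banknote basis. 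Then running ${\sf Findh}$ (Theorem~\ref{thm:computeh}) on the first register both reveals a serial number $h$ and collapses the pair to exactly $|\G^h_\lambda*L(0)\rangle|\G^{-h}_\lambda*L(0)\rangle$, which is precisely a complementary banknote pair for Theorem~\ref{thm:almostlightning}. So the paper trades your (impossible) exact re-preparation for the observation that the serial-number-$0$ banknote, when cloned in the computational basis, automatically splits into a uniformly random complementary pair. You should revisit the proof with that mechanism in mind; everything else in your write-up (using Theorem~\ref{thm:reject}'s exact projection, the admissibility of $\Bs$, carrying over the hypotheses) is fine.
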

\begin{proof} Consider an adversary $\As^{\GGAM_{\G,m'}}$ outputting a mixed state $\rho$ and let $\epsilon=\langle\G^0_\lambda*L(0)|\rho|\G^0_\lambda*L(0)\rangle\leq\negl(\lambda)$. Recall that our verifier from Section~\ref{sec:constr} can project exactly onto the state $|\G^0_\lambda*L(0)\rangle$. By applying this projection to $\rho$, we have that $\As^{\GGAM_{\G,m'}}$ outputs $|\G^0_\lambda*L(0)\rangle$ with probability $\epsilon$. We will therefore assume we have the state $|\G^0_\lambda*L(0)\rangle$.
	
	Apply in superposition the map $|x\rangle\mapsto|x,x\rangle$. Now we have the state 
	\[\frac{1}{\sqrt{|\G_\lambda|}}\sum_{g\in\G_\lambda}|L(g),L(g)\rangle\enspace .\]
	We can equivalently write this state as:
	\[\frac{1}{\sqrt{|\G_\lambda|}}\sum_{h\in\G_\lambda}|\G^h_\lambda*L(0)\rangle|\G^{-h}_\lambda*L(0)\rangle\enspace .\]
	We therefore apply our algorithm ${\sf Findh}$ from Theorem~\ref{thm:computeh} to the first register. The output will be a random serial number $h$, and the state will collapse to $|\G^h_\lambda*L(0)\rangle|\G^{-h}_\lambda*L(0)\rangle$. We output this, which solves the problem in Theorem~\ref{thm:almostlightning}. Thus, we conclude that $\epsilon$ must be negligible.
\end{proof}

\noindent We now turn to proving Theorem~\ref{thm:almostlightning}.
\begin{proof}[Proof of Theorem~\ref{thm:almostlightning}]Consider an adversary $\Bs^{\GGAM_{\G,m'}}$, and define: \[\epsilon:=\Pr\left[\ver^{\GGAM_{\G,m'}}(h,\$_1)=\ver^{\GGAM_{\G,m'}}(-h,\$_2)=1:(h,\$_1,\$_2)\gets\Bs^{\GGAM_{\G,m'}}(1^\lambda)\right]\enspace .\]
	Recall that $\ver^{\GGAM_{\G,m'}}(h,\$)$ projects onto the correct banknote $|\G^h_\lambda*L(0)\rangle$. Therefore, with probability $\epsilon$, $\Bs$ outputs $h$ and exactly the states $|\G^h_\lambda*L(0)\rangle,|\G^{-h}_\lambda*L(0)\rangle$.
	
	We now construct an adversary $\As$ for DDH on the group action $(\G,\Xs,*)$. $\As$, on input $(u,v,w)$, will choose a random injection $\Pi:\{0,1\}^{4m}\rightarrow\{0,1\}^{m'}$. It will then compute $X=\Pi(x_\lambda,u,v,w)$. $\As$ will then run $\Bs(X)$, simulating its queries $(\ell,g)$ to the group action as follows: compute $(z_1,z_2,z_3,z_4)\gets\Pi^{-1}(\ell)$, and then return $\Pi(g*z_1,(-g)*z_2,g*z_3,(-g)*z_4)$. For superposition queries, $\As$ simply runs this computation in superposition. Note that if we let $\Gamma(g)=(g*x_\lambda,(-g)*u,g*v,(-g)*w)$, then $\As$ simulates these queries exactly as prescribed above in our general framework, for constants $c_1=1,c_2=-1,c_3=1,c_4=-1$ and $(y_1,y_2,y_3,y_4)=(x_\lambda,u,v,w)$.
	
	Finally, when $\Bs$ produces serial number $h$ and banknotes $\$_1,\$_2$, $\As$ does the following:
	\begin{itemize}
		\item Run $\ver^{\Os'}(h,\$_1)$ and $\ver^{\Os'}(-h,\$_2)$, answering the queries of $\ver$ using the simulated group action oracle. If either run rejects, output a random bit. Otherwise, let $\$_1',\$_2'$ be the resulting states of the verifier.
		\item In superposition, it applies the following map $\ell\mapsto\ell'$ to $\$_2'$:
		\begin{itemize}
			\item First map $\ell\mapsto\Pi^{-1}(\ell)=(z_1,z_2,z_3,z_4)$
			\item Now map $(z_1,z_2,z_3,z_4)\mapsto\ell'=\Pi(z_2,z_1,z_4,z_3)$. Note that the $z_i$ inside $\Pi$ have been permuted.
		\end{itemize}
		Let $\$_2''$ be the result of this map.
		\item Apply the swap test to $\$_1',\$_2''$, outputting whatever the swap test outputs.
	\end{itemize}
	
	By applying Lemma~\ref{lem:ggamsim}, we can conclude that $\$_1,\$_2$ are actually superpositions over elements of the form $L(g)=\Pi(g*z_1,(-g)*z_2,g*z_3,(-g)*z_4)$ for varying $g$. Then using our characterization of the accepting states of $\ver$, we see that both runs of $\ver$ simultaneously accept with probability $\epsilon$, and in this case $\$_1'=|\G_\lambda^h*L(0)\rangle,\$_2'=|\G_\lambda^{-h}*L(0)\rangle$.
	
	We must analyze the effect of the map $\ell\mapsto\ell'$ on $|\G_\lambda^{-h}*L(0)\rangle$. We break into two cases:
	\begin{itemize}
		\item $u=a*x_\lambda,v=b*x_\lambda,w=(a+b)*x_\lambda$. Let $\ell=L(g)=\Pi(g*z_1,(-g)*z_2,g*z_3,(-g)*z_4)=\Pi(g*x_\lambda,(a-g)*x_\lambda,(b+g)*x_\lambda,(a+b-g)*x_\lambda)$, which maps to $\ell'=\Pi((a-g)*x_\lambda,g*x_\lambda,(a+b-g)*x_\lambda,(b+g)*x_\lambda)=L(a-g)$.
		
		Therefore, $|\G_\lambda^{-h}*L(0)\rangle$ maps to
		\begin{align*}|\G_\lambda^{-h}*L(0)\rangle&=\frac{1}{\sqrt{|\G_\lambda|}}\sum_g \chi(g,-h)|L(g)\rangle\\&\mapsto\frac{1}{\sqrt{|\G_\lambda|}}\sum_g \chi(g,-h)|L(a-g)\rangle\\&=\frac{1}{\sqrt{|\G_\lambda|}}\sum_{g'} \chi(a-g',-h)|L(g')\rangle\\&=\chi(a,-h)\frac{1}{\sqrt{|\G_\lambda|}}\sum_{g'} \chi(g',h)|L(g')\rangle\\&=\chi(a,-h)|\G_\lambda^{h}*L(0)\rangle\enspace .\end{align*}
		
		Above, we used the substitution $g'=a-g$. Thus, in this case, $\As$ obtains two copies of $|\G_\lambda^{h}*L(0)\rangle$, which the swap test will accept with probability 1. Therefore, the probability $\As$ outputs 1 is $\frac{1}{2}(1-\epsilon)+\epsilon=\frac{1+\epsilon}{2}$.
		\item $u=a*x_\lambda,v=b*x_\lambda,w=c*x_\lambda$ with $c\neq a+b$. In this case, $\ell=L(g)=\Pi(g*x_\lambda,(a-g)*x_\lambda,(b+g)*x_\lambda,(c-g)*x_\lambda)$ maps to $\ell'=\Pi((a-g)*x_\lambda,g*x_\lambda,(c-g)*x_\lambda,(b+g)*x_\lambda)$. However, $\ell'$ is \emph{not} equal to $L(g')$ for any $g'$. Indeed, in order for $\ell'=L(g')$, we get several equations:
		\[g'=a-g\enspace ,\;\;\;\;a-g'=g\enspace ,\;\;\;\;b+g'=c-g'\enspace ,\;\;\;\;c-g'=b+g\enspace .\]
		The first two equations require that $g'=a-g$, while the last two require that $g'=c-b-g\neq a-g$. Hence, the state $\$_2''$ has disjoint support from the state $|\G_\lambda^{h}*L(0)\rangle$, and hence is orthogonal to it. Therefore, the swap test will accept with probability exactly 1/2. The overall probability $\As$ outputs 1 is therefore exactly $1/2$.
	\end{itemize}
	Thus, we see that $\As$ has advantage $\epsilon/2$ in distinguishing DDH, breaking the assumption.\end{proof}

\subsection{Security of our Quantum Lightning Scheme}\label{sec:qlightningsec}

Here, we prove the generic security of our quantum lightning scheme (Construction~\ref{constr:main}). We do not know how to prove security under any standard group action-based assumption. We instead introduce a novel assumption that appears plausible, but needs extra cryptanalysis to be certain.

\paragraph{The Decisional 2x Assumption (D2X).} A classical ``Diffie-Hellman Exponent'' assumption is to distinguish $g^a,g^{a^2}$ from $g^a,g^b$ for uniform $a,b$. The group action equivalent would be to distinguish $a*x_\lambda,(2a)*x_\lambda$ from $a*x_\lambda,b*x_\lambda$ for uniform $a,b\in\G_\lambda$. Our assumption is based on this assumption. However, we need something a bit stronger. In particular, we need not just the set element $(2a)*x_\lambda$ or $b*x_\lambda$, but the ability to query on an \emph{arbitrary} set element $y$ and receive $(2a)*y$ or $b*y$. In the classical group setting, this would correspond to receiving $g^a$, and then being able to query the function $h\mapsto h^{a^2}$ or $h\mapsto h^b$. 

Note that if allowing arbitrary queries to this oracle, the problem is \emph{easy} in many cases. In particular, suppose the order of $\G_\lambda$ is odd with order $2t-1$. Then by querying the oracle $t$ times, we can compute $y_1=(2a)*x_\lambda,y_2=(2a)*y_1=(4a)*x_\lambda,\cdots$, ultimately computing $y_t=(2ta)*x_\lambda=a*x_\lambda$. On the other hand, if the oracle maps $y\mapsto b*x_\lambda$ for a random $b$, then $y_t=(tb)*x_\lambda\neq a*x_\lambda$. This allows for distinguishing the two cases.

Therefore, we only allow a \emph{single} query to the oracle. In this case, a single query does not appear sufficient for breaking the assumption. The adversary, on input $u=a*x_\lambda$, can send $u$ to the oracle, receiving $(3a)*x_\lambda$ or $(a+b)*x_\lambda$. Or it can send $x_\lambda$ to the oracle, receiving $(2a)*x_\lambda$ or $b*x_\lambda$. It can also act on these elements by known constants, computing either $(2a+c)*x_\lambda,(3a+d)*x_\lambda$, or $(b+c)*x_\lambda,(a+b+d)*x_\lambda$. It can also act on the original element $u$, and also on $x_\lambda$ by known constants, receiving $(a+e)*x_\lambda,f*x_\lambda$. Intuitively, it seems the only way the adversary can distinguish between these cases is to find constants $c,d,e,f$ that cause a collision between elements when the oracle acts by $2a$, but no collision when the oracle acts by $b$. However, for any constants $c,d,e,f$, the probability of a collision occurring in either case is negligible. Based on this intuitive argument, it is possible to prove that this assumption is generically hard against \emph{classical} algorithms. We do not, however, know if there is a clever quantum algorithm that breaks the assumption. However, it seems plausible that there is no such efficient quantum algorithm.

We will also allow the query to be quantum, and for technical reasons, we will use an \emph{in-place} (also known as \emph{minimal}) oracle, meaning it maps $\sum_g\alpha_g|g*x_\lambda\rangle\mapsto \sum_g\alpha_g|(2a+g)*x_\lambda\rangle$. This is in contrast to the usual ``standard'' oracle which maps $\sum_{g,y}\alpha_{g,y}|g*x_\lambda,y\rangle\mapsto \sum_{g,y}\alpha_{g,y}|g*x_\lambda,y\oplus |(g+2a)*x_\lambda\rangle$.

\begin{assumption}\label{def:D2X/min} The Decisional 2X Assumption with minimal oracle (D2X/min) assumption holds on a group action $(\G,\Xs,*)$ if, for all QPT adversaries $\As$, there exists a negligible $\lambda$ such that 
	\[\left|\Pr\left[\As^{M_{2a}^1}(a*x_\lambda)=1: a\gets\Gs_\lambda\right]-\Pr\left[\As^{M_b^1}(a*x_\lambda)=1: a,b\gets\Gs_\lambda\right]\right|\leq\negl(\lambda)\enspace .\]
	Above, $M_{c}$ is the in-place (or ``minimal'') oracle mapping $y\mapsto c*y$ but accessible in superposition, and $M_{c}^1$ means the adversary can make only a single quantum query to $M_c$.
\end{assumption}

\noindent If we insist on standard oracles, we can instead utilize the following assumption:
\begin{assumption}\label{def:D2X/std} The Decisional 2X Assumption with standard oracle (D2X/std) assumption holds on a group action $(\G,\Xs,*)$ if, for all QPT adversaries $\As$, there exists a negligible $\lambda$ such that 
	\[\left|\Pr\left[\As^{S_{2a}^1,S_{-2a}^1}(a*x_\lambda)=1: a\gets\Gs_\lambda\right]-\Pr\left[\As^{S_b^1,S_{-b}^1}(a*x_\lambda)=1: a,b\gets\Gs_\lambda\right]\right|\leq\negl(\lambda)\enspace .\]
	Above, $S_{c}$ is the standard oracle mapping $(y,z)\mapsto (y,z\oplus (c*y))$ but accessible in superposition, and $S_{c}^1$ means the adversary can make only a single query to $S_c$.
\end{assumption}
\noindent The following lemma is straightforward:
\begin{lemma}\label{lem:standard2min2dx} If D2X/std holds on a group action $(\G,\Xs,*)$, then so does D2X/min
\end{lemma}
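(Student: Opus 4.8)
To prove Lemma~\ref{lem:standard2min2dx} I will reduce D2X/min to D2X/std: given an adversary $\As$ that breaks D2X/min (distinguishing a single minimal-oracle query to $M_{2a}$ from one to $M_b$), I build an adversary $\Bs$ breaking D2X/std using the two standard oracles $S_c,S_{-c}$ (with $c=2a$ or $c=b$). The whole content is that a single query to the in-place oracle $M_c$ can be simulated from a single query to each of the standard oracles $S_c$ and $S_{-c}$. This is exactly the ``efficiently reversible oracle'' trick already invoked twice in the excerpt (in the $\ver$ algorithm and its footnote): to compute $|y\rangle\mapsto|c*y\rangle$, first use $S_c$ to write $|y\rangle|0\rangle\mapsto|y\rangle|c*y\rangle$ into a fresh register, then use $S_{-c}$ to uncompute the first register via $|y\rangle|c*y\rangle\mapsto|y\oplus((-c)*(c*y))\rangle|c*y\rangle=|0\rangle|c*y\rangle$, and finally swap the two registers. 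Each of $S_c,S_{-c}$ is queried exactly once, so the simulation is within the query budget of D2X/std.

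**Key steps, in order.**  First I note the syntactic point that $\Bs$ receives the same challenge $a*x_\lambda$ and is in the same two-case situation (either $c=2a$ with oracles $S_{2a},S_{-2a}$, or $c=b$ with oracles $S_b,S_{-b}$) as in the D2X/std game, so it suffices to perfectly simulate $\As$'s view in the corresponding D2X/min game.  Second, $\Bs$ runs $\As(a*x_\lambda)$, forwarding all local computation unchanged; when $\As$ issues its single minimal-oracle query on a register holding $\sum_g\alpha_g|g*x_\lambda\rangle$, $\Bs$ executes the three-step compute/uncompute/swap procedure above, consuming its one query to $S_c$ and its one query to $S_{-c}$.  Third, I verify the uncomputation is exact: for any $y=g*x_\lambda\in\Xs$ we have $(-c)*(c*y)=((-c)+c)*y=0*y=y$ by the group-action axioms, so the first register returns to $|0\rangle$ with no residual entanglement, and the net map on $\As$'s register is precisely $|g*x_\lambda\rangle\mapsto|(g+c)*x_\lambda\rangle$, i.e.\ exactly $M_c$.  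Fourth, since the simulation is perfect, $\Bs$'s output distribution in each case equals $\As$'s in the matching D2X/min case, so $\Bs$'s distinguishing advantage equals $\As$'s; if the latter is non-negligible so is the former, contradicting D2X/std.  (The same argument handles the purely local, non-oracle part of $\As$'s computation, which $\Bs$ runs verbatim.)

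**Main obstacle.**  There is essentially no hard step — this is a ``straightforward'' lemma, as the paper says — but the one place to be careful is the exactness of the uncomputation and the implicit assumption that $\As$'s query register has support only on $\Xs$ (so that $(-c)*(c*y)=y$ holds pointwise). If $\As$ were to query the minimal oracle on a superposition with weight on strings outside $\Xs$, the notion of $M_c$ is anyway undefined there, so one restricts attention, without loss of generality, to queries supported on $\Xs$ exactly as the D2X/min assumption statement does; then the reversibility identity is exact and the simulation introduces zero error. The only other thing worth a sentence is efficiency: the extra register, two standard queries, and a swap are all polynomial-time, so $\Bs$ is QPT whenever $\As$ is.
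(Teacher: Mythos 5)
Your proof is correct and is exactly the argument the paper has in mind (the paper's own proof is the one-liner ``use $S_c^1,S_{-c}^1$ to simulate $M_c^1$ in the obvious way''); you have simply unpacked the compute/uncompute/swap simulation in detail.
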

\begin{proof}We simply use the oracles $S_c^1,S_{-c}^1$ to simulate the oracle $M_c^1$ in the obvious way.
\end{proof}

\paragraph{Our security proof.} We now prove the generic security of our quantum lightning scheme.

\begin{theorem}\label{thm:main} Let $(\G,\Xs,*)$ be a group with $\Xs\subseteq\{0,1\}^m$ such that D2X/min holds (Assumption~\ref{def:D2X/min}). Let $m'\geq 2m+\omega(\log\lambda)$. Let $(\gen^{\GGAM_{\G,m'}},\ver^{\GGAM_{\G,m'}})$ be the quantum money construction from Construction~\ref{constr:main}, using the generic group action $\GGAM_{\G,m'}$. Then the quantum money construction is a secure quantum lightning scheme.\end{theorem}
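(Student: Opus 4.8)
The plan is to convert any quantum lightning adversary $\Bs$ against Construction~\ref{constr:main} in $\GGAM_{\G,m'}$, winning with probability $\epsilon$ after $q$ queries, into a D2X/min distinguisher $\As$ on the base action $(\G,\Xs,*)$ with advantage $\epsilon/2-\negl(\lambda)$ and a single (quantum) challenge query; Assumption~\ref{def:D2X/min} then forces $\epsilon$ to be negligible, which is exactly quantum lightning security (and hence quantum money security). Concretely, $\As$ receives $y=a*x_\lambda$ and one-query access to the in-place oracle $M_v$, where $v=2a$ or $v$ is uniform, and must decide which. Instantiating the framework of Section~\ref{sec:ggam} with $k=2$, $c_1=c_2=1$, $y_1=x_\lambda$, $y_2=y$, it samples a random injection $\Pi:\Xs^{\otimes 2}\to\{0,1\}^{m'}$ (efficiently, by Lemma~\ref{lem:injectionsim}), puts $\Gamma(g)=(g*x_\lambda,g*y)$, $X=\Pi(\Gamma(0))=\Pi(x_\lambda,y)$, and answers a generic-action query $(\ell,g)$ by $\Pi(g*z_1,g*z_2)$ where $(z_1,z_2)=\Pi^{-1}(\ell)$. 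Since $\gcd(c_1,c_2)=1$ makes $\Gamma$ injective and $m'\ge 2m+\omega(\log\lambda)$, Lemma~\ref{lem:ggamsim} (with $s=2m$, $t=m'-2m$) lets us pretend throughout that $\Bs$ and the verification routines below interact with the genuine generic action on $\Xs'=\Gamma(\G)$, up to an additive $O(q\,2^{-t/2})=\negl(\lambda)$ in all probabilities. In that clean picture $\Xs'$ carries the regular group action $g\star\Pi(z_1,z_2)=\Pi(g*z_1,g*z_2)$ with base point $X$, so Theorem~\ref{thm:reject} applies verbatim.

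$\As$ runs $\Bs$ to obtain $(h,\$_1,\$_2)$, runs $\ver(h,\$_1)$ and $\ver(h,\$_2)$ (simulating all their oracle calls), and outputs a uniform bit if either rejects. Otherwise Theorem~\ref{thm:reject} says both registers have collapsed to exactly $|\G^h\star X\rangle$ — so any entanglement in $\Bs$'s output is destroyed and we hold two copies of the honest banknote — and by the previous paragraph this event has probability at least $\epsilon-\negl(\lambda)$. Now $\As$ applies to the second copy the unitary $U_v:|\Pi(z_1,z_2)\rangle\mapsto|\Pi(z_2,v*z_1)\rangle$, implemented by uncomputing $\Pi$, spending its one $M_v$ query in place on the first register, swapping the two registers, and recomputing $\Pi$; then it runs the SWAP test on $|\G^h\star X\rangle$ against $U_v|\G^h\star X\rangle$ and outputs the verdict. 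Expanding $|\G^h\star X\rangle=\tfrac{1}{\sqrt{|\G_\lambda|}}\sum_g\chi(g,h)\,|\Pi(g*x_\lambda,g*y)\rangle$ and substituting $g'=g+a$, one checks that when $v=2a$ we get $U_v|\G^h\star X\rangle=\chi(-a,h)|\G^h\star X\rangle$, so the two states fed to the SWAP test agree up to a global phase and it accepts with probability $1$; when $v=b$ with $b\ne 2a$ (all but a $1/|\G_\lambda|=\negl(\lambda)$ fraction of $b$), the point $\Pi\bigl((g+a)*x_\lambda,(g+b)*x_\lambda\bigr)$ lies in $\Xs'$ only if $b=2a$, so $U_v|\G^h\star X\rangle$ is supported entirely outside $\Xs'$, is orthogonal to $|\G^h\star X\rangle$, and the SWAP test accepts with probability exactly $1/2$.

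Collecting the estimates, $\Pr[\As^{M_{2a}}=1]\ge\tfrac12+\tfrac12(\epsilon-\negl)$ while $\Pr[\As^{M_b}=1]\le\tfrac12+\negl$, giving distinguishing advantage $\epsilon/2-\negl(\lambda)$ with a single challenge query, so D2X/min forces $\epsilon$ to be negligible. I expect the crux to be the single-query step: we must perform the coherent rewriting $\Pi(z_1,z_2)\mapsto\Pi(z_2,v*z_1)$ on a superposition while touching the unknown $v$ exactly once, which is precisely why D2X/min is phrased with the in-place (``minimal'') oracle — the standard-oracle variant follows only afterwards via Lemma~\ref{lem:standard2min2dx}. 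The remaining delicate point is the hybrid bookkeeping: one must verify that every downstream oracle call — $\Bs$'s queries, the recognition subroutine inside $\ver$, and $\ver$'s own group-action (phase-kickback) queries — is faithfully answered by the $\Pi$-simulation up to the sparse ``bad set'' $T$ quantified in Lemma~\ref{lem:ggamsim}, so that Theorem~\ref{thm:reject}'s exact description of the accepting states may legitimately be invoked in the simulated world, and that entangled outputs of $\Bs$ are handled by noting the two projective verifications leave a product state.
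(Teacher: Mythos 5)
Your proposal is correct and follows essentially the same route as the paper: the same instantiation of the framework with $\Gamma(g)=(g*x_\lambda,g*u)$, $c_1=c_2=1$, the same simulation via $\Pi$ justified by Lemma~\ref{lem:ggamsim}, the same use of Theorem~\ref{thm:reject} to collapse both registers to exact honest banknotes, the same single-use in-place rewiring $\Pi(z_1,z_2)\mapsto\Pi(z_2,v*z_1)$, and the same swap-test case analysis ($v=2a$ gives a global phase $\chi(a,-h)$; $v\ne 2a$ lands outside $\Xs'$, hence orthogonal). The few details you fill in that the paper leaves implicit (the negligible chance that a uniform $b$ equals $2a$, the fact that the two projective verifications yield a product state, and the global-phase irrelevance for the swap test) are all correct.
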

\begin{proof}Consider an adversary $\Bs^{\GGAM_{\G,m'}}$ for quantum lightning security, and let $\epsilon$ be the probability that $\Bs$ wins. Recall that $\ver^{\GGAM_{\G,m'}}(h,\$)$ projects onto the correct banknote $|\G^h_\lambda*L(0)\rangle$. Therefore, with probability $\epsilon$, $\Bs$ outputs $h$ and exactly two copies of the state $|\G^h_\lambda*L(0)\rangle$.
	
	We now construct an adversary $\As$ for D2X/min on the group action $(\G,\Xs,*)$. $\As$, on input $u=a*x_\lambda$, will choose a random injection $\Pi:\{0,1\}^{2m}\rightarrow\{0,1\}^{m'}$. It will then compute $X=\Pi(x_\lambda,u)$. $\As$ will then run $\Bs(X)$, simulating its queries $(\ell,g)$ to the group action as follows: compute $(z_1,z_2)\gets\Pi^{-1}(\ell)$, and then return $\Pi(g*z_1,g*z_2)$. For superposition queries, $\As$ simply runs this computation in superposition. Note that if we let $\Gamma(g)=(g*x_\lambda,g*u)$, then $\As$ simulates these queries exactly as prescribed above in our general framework, for constants $c_1=c_2=1$ and $(y_1,y_2)=(x_\lambda,u)$.
	
	Finally, when $\Bs$ produces serial number $h$ and banknotes $\$_1,\$_2$, $\As$ does the following:
	\begin{itemize}
		\item Run $\ver^{\Os'}(h,\$_1)$ and $\ver^{\Os'}(h,\$_2)$, answering the queries of $\ver$ using the simulated group action oracle. If either run rejects, output a random bit. Otherwise, let $\$_1',\$_2'$ be the resulting states of the verifier.
		\item In superposition, it applies the following map $\ell\mapsto\ell'$ to $\$_2'$:
		\begin{itemize}
			\item First map $\ell\mapsto\Pi^{-1}(\ell)=(z_1,z_2)$.
			\item Use the oracle $M_c$ from the D2X/min assumption to replace $z_1$ with $z_1'=c*z_1$, where $c=2a$ or $b$.
			\item Now map $(z_1',z_2)\mapsto\ell'=\Pi(z_2,z_1')$.
		\end{itemize}
		Let $\$_2''$ be the result of this map.
		\item Apply the swap test to $\$_1',\$_2''$, outputting whatever the swap test outputs.
	\end{itemize}
	
	By applying Lemma~\ref{lem:ggamsim}, we can conclude that $\$_1,\$_2$ are actually superpositions over elements of the form $L(g)=\Pi(g*z_1,g*z_2)$ for varying $g$. Then using our characterization of the accepting states of $\ver$, we see that both runs of $\ver$ simultaneously accept with probability $\epsilon$, and in this case $\$_1'=\$_2'=|\G_\lambda^h*L(0)\rangle,\$_2'$.
	
	We must analyze the effect of the map $\ell\mapsto\ell'$ on $|\G_\lambda^{h}*L(0)\rangle$. We break into two cases:
	\begin{itemize}
		\item $M_c$ implements the action $y\mapsto c*y$ with $c=2a$. Let $\ell=L(g)=\Pi(g*z_1,g*z_2)=\Pi(g*x_\lambda,(a+g)*x_\lambda)$, which maps to $\ell'=\Pi(g*x_\lambda,(2g)*x_\lambda)=L(a+g)$.
		
		Therefore, $|\G_\lambda^{h}*L(0)\rangle$ maps to
		\begin{align*}|\G_\lambda^{h}*L(0)\rangle&=\frac{1}{\sqrt{|\G_\lambda}}\sum_g \chi(g,h)|L(g)\rangle\\&\mapsto\frac{1}{\sqrt{|\G_\lambda|}}\sum_g \chi(g,h)|L(a+g)\rangle\\&=\frac{1}{\sqrt{|\G_\lambda|}}\sum_{g'} \chi(g'-a,h)|L(g')\rangle\\&=\chi(a,-h)|\G_\lambda^{h}*L(0)\rangle\enspace .\end{align*}
		
		Above, we used the substitution $g'=a+g$. Thus, in this case, $\As$ obtains two copies of $|\G_\lambda^{h}*L(0)\rangle$, which the swap test will accept with probability 1. Therefore, the probability $\As$ outputs 1 is $\frac{1}{2}(1-\epsilon)+\epsilon=\frac{1+\epsilon}{2}$.

		\item $M_c$ implements the action $y\mapsto c*y$ with $c=b$ for a random $b$. In this case, $\ell=L(g)=\Pi(g*x_\lambda,(a+g)*x_\lambda)$ maps to $\ell'=\Pi((a+g)*x_\lambda,(g+b)*x_\lambda)$. However, $\ell'$ is \emph{not} equal to $L(g')$ for any $g$. Indeed, in order for $\ell'=L(g')$, we get several equations:
		\[g'=a+g\enspace ,\;\;\;\;a+g'=g+b\enspace .\]
		The first equation requires that $g'=a+g$, while the last one requires that $g'=g+b-a\neq g+a$. Hence, the state $\$_2''$ has disjoint support from the state $|\G_\lambda^{h}*L(0)\rangle$, and hence is orthogonal to it. Therefore, the swap test will accept with probability exactly 1/2. The overall probability $\As$ outputs 1 is therefore exactly $1/2$.
	\end{itemize}
	Thus, we see that $\As$ has advantage $\epsilon/2$ in distinguishing DDH, breaking the assumption.
\end{proof}

\section{On Quantum Knowledge Assumptions and Algebraic Adversaries}\label{sec:knowledge}

In this section, we explore knowledge assumptions in the quantum setting, as well the algebraic model for group actions. We find significant issues with both settings. Nevertheless, we give a second security proof for our quantum lightning scheme (Construction~\ref{constr:main}), this time using knowledge assumptions.

\subsection{The Knowledge of Group Element Assumption (KGEA)}

Here, we discuss a new assumption that we define, called the Knowledge of Group Element Assumption (KGEA). This is an analog of the classical Knowledge of Exponent Assumption (KEA)~\cite{C:Damgard91}, but adapted for quantum adversaries and group actions. It can also be seen as an adaptation of the Knowledge of Path assumption of~\cite{EC:LiuMonZha23}, specialized to group actions. Despite coming from plausible origins, however, we will see that the assumption is, in fact, false. This leads to concerns over the more general Knowledge of Path assumption. We give a candidate replacement assumption that avoids our attack, but more cryptanalysis is needed to understand the new assumption.

\paragraph{The Knowledge of Group Element Assumption (KGEA).} This assumption states, informally, that any algorithm that produces a set element $y$ must ``know'' $g$ such that $y=g*x_\lambda$. Implicit in this assumption is the requirement that it is hard to obliviously sample set elements; we discuss later how to model security when oblivious sampling is possible. In the classical setting, the KGEA assumption would be formalized as follows:

\begin{assumption}\label{def:ckgea} The \emph{classical knowledge of group element assumption} (C-KGEA) holds on a group action $(\G,\Xs,*)$ if the following is true. For any probabilistic polynomial time (PPT) adversary $\As$, there exists a PPT ``extractor'' $\Es$ and a negligible $\epsilon$ such that:
	\[\Pr\left[y\in\Xs\wedge y\neq g*x_\lambda:\substack{y\gets\As(1^\lambda; r)\\g\gets\Es(1^\lambda,r)}\right]\leq\epsilon(\lambda)\enspace .\]
	Above, $r$ are the random coins given to $\As$, which are also given to $\Es$, and the probability is taken over uniform $r$ and any additional randomness of $\Es$.
\end{assumption}
In other words, if $\As$ outputs any set element, it must ``know'' how to derive that set element from $x_\lambda$, since it can compute $g$ such that $y=g*x_\lambda$ using $\Es$ and its random coins. Note that once the random coins are fixed, $\As$ is deterministic. 

As observed by~\cite{EC:LiuMonZha23}, when moving to the quantum setting, the problem with Assumption~\ref{def:ckgea} is that quantum algorithms do not have to flip random coins to generate randomness, and instead their output may be a measurement applied to a quantum state, the result being inherently randomized even if the quantum state is fixed. Thus, there is no meaningful way to give the same random coins to $\Es$.

The solution used in~\cite{EC:LiuMonZha23} is to, instead of giving $\Es$ the same inputs as $\As$, give $\Es$ the remaining state of $\As$ at the \emph{end} of the computation. This requires some care, since an algorithm can of course forget any bit of information by simply throwing it away. A more sophisticated way to lose information is to perform other measurements on the state, say measuring in the Fourier basis. The solution in~\cite{EC:LiuMonZha23} is to require that $\As$ makes no measurements at all, \emph{except} for measuring the final output. Note that the Principle of Delayed Measurement implies that it is always possible without loss of generality to move all measurements to the final output. Then $\Es$ is given both the output and the remaining quantum state of $\As$, and tries to compute $g$. Note that in the classical setting, if we restrict to \emph{reversible} $\As$, this formulation of giving $\Es$ the final state of $\As$ is equivalent to given $\Es$ the randomness, since the randomness can be computed by reversing $\As$. Similar to how we can assume a quantum $\As$ makes all its measurements at the end, in we can always assume without loss of generality that a classical $\As$ is reversible. Thus, in the classical setting these two definitions coincide. Adapting to our setting, this approach yields the following assumption:

\begin{assumption}\label{def:qkgea} The \emph{quantum knowledge of group element assumption} (Q-KGEA) holds on a group action $(\G,\Xs,*)$ if the following is true. For any quantum polynomial time (QPT) adversary $\As$ which performs no measurements except for its final output, there exists a QPT extractor $\Es$ and negligible $\epsilon$ such that 
	\[\Pr\left[y\in\Xs\wedge y\neq g*x_\lambda:\substack{(y,|\psi\rangle)\gets\As(1^\lambda)\\g\gets\Es(y,|\psi\rangle)}\right]\leq\epsilon(\lambda)\enspace .\]
\end{assumption}
Above, $y$ is considered as the output of $\As$, and the only measurements applied to $\As$ is the measurement of $y$ to obtain the output.

\paragraph{Our Attack on Q-KGEA.}\label{sec:KGEAattack} Here, we show that Q-KGEA is \emph{false}. 

\begin{theorem}\label{thm:kgeaattack} On any group action where the discrete logarithm assumption holds (Assumption~\ref{def:dlog}), Q-KGEA (Assumption~\ref{def:qkgea}) does not hold.
\end{theorem}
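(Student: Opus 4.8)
The plan is to exhibit a concrete adversary $\As$ that obliviously samples a uniformly random element of $\Xs_\lambda$ — so that the extractor $\Es$ demanded by Q-KGEA would be forced to compute discrete logarithms of uniform instances. The key ingredient is already available: by the proof of Theorem~\ref{thm:computeh} (via Theorem~\ref{thm:reject}), running ${\sf Findh}$ coherently realizes an efficient unitary $V$ on $\Ms\otimes\Hs$ (with $\Hs$ initialized to $|0\rangle$) satisfying $V\bigl(|\G_\lambda^h*x_\lambda\rangle_\Ms|0\rangle_\Hs\bigr)=|\G_\lambda^h*x_\lambda\rangle_\Ms|h\rangle_\Hs$; one simply skips the ${\sf Recog}$ step (the input will be a valid banknote by construction) and keeps the serial-number register rather than measuring it. The adversary $\As$ then does the following, making no measurement except at the very end: prepare $|x_\lambda\rangle_\Ms|0\rangle_\Hs$; apply $V$, which by the identity $|x_\lambda\rangle=\frac{1}{\sqrt{|\G_\lambda|}}\sum_{h\in\G_\lambda}|\G_\lambda^h*x_\lambda\rangle$ yields $\frac{1}{\sqrt{|\G_\lambda|}}\sum_{h}|\G_\lambda^h*x_\lambda\rangle_\Ms|h\rangle_\Hs$; apply a diagonal phase $|h\rangle_\Hs\mapsto e^{i\theta_h}|h\rangle_\Hs$; apply $V^\dagger$, obtaining $|\phi\rangle_\Ms|0\rangle_\Hs$ with $|\phi\rangle:=\frac{1}{\sqrt{|\G_\lambda|}}\sum_{h}e^{i\theta_h}|\G_\lambda^h*x_\lambda\rangle$; and finally measure $\Ms$ to produce the output $y$.

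Next I would choose the phases $\theta_h$ so that $|\phi\rangle$ is \emph{flat} in the set-element basis: writing $|\phi\rangle=\sum_{g\in\G_\lambda}c_g|g*x_\lambda\rangle$, one computes $c_g=\frac{1}{|\G_\lambda|}\sum_h e^{i\theta_h}\chi(g,h)$, so $|\phi\rangle$ is flat exactly when the unimodular sequence $(e^{i\theta_h})_h$ on $\G_\lambda$ has unimodular Fourier transform — a \emph{bi-unimodular} (perfect / CAZAC) sequence. Such sequences exist and are efficiently computable on every finite abelian group: for a cyclic factor $\Z_N$ one may use a quadratic chirp $h\mapsto\zeta^{h^2}$ (with minor attention to the parity of $N$, e.g. $h\mapsto\zeta^{h(h+1)}$ when needed), and for $\G_\lambda=\Z_{n_1}\times\cdots\times\Z_{n_k}$ one tensors the chirps over the cyclic factors. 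With this choice, $|c_g|^2=1/|\G_\lambda|$ for every $g$, so by regularity of the action the output $y=g*x_\lambda$ is a uniformly random element of $\Xs_\lambda$; crucially, after the measurement the residual state of $\As$ is the fixed ancilla state $|0\cdots0\rangle$, carrying no information about $y$. Thus $\As$ is a legal adversary for Assumption~\ref{def:qkgea}.

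To conclude, suppose for contradiction that Q-KGEA holds, so there is a QPT extractor $\Es$ with $\Pr[y\in\Xs_\lambda\wedge y\neq \Es(y,|0\cdots0\rangle)*x_\lambda]\leq\negl(\lambda)$. Since the support of $|\phi\rangle$ is all of $\Xs_\lambda$, we have $y\in\Xs_\lambda$ with probability $1$, so this says $\Es$ recovers $g$ from $y=g*x_\lambda$ with overwhelming probability, for $g$ uniform and given nothing but $y$ and a fixed ancilla. This is immediately a DLog adversary: on a challenge $z=g*x_\lambda$, run $\Es(z,|0\cdots0\rangle)$ and output the result, succeeding with probability $\geq 1-\negl(\lambda)$ and contradicting Assumption~\ref{def:dlog}. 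Hence Q-KGEA fails whenever DLog holds.

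I expect the only non-routine point to be the existence and efficient computability of the bi-unimodular phase sequence on a general abelian $\G_\lambda$; the rest is a direct recombination of the banknote-basis decomposition of $|x_\lambda\rangle$, the coherent ${\sf Findh}$ unitary, and regularity. One should also check that the negligible correctness error of ${\sf Recog}$ inside ${\sf Findh}$ is harmless — here it is, since $\As$ applies ${\sf Findh}$ only to honestly prepared banknote superpositions and can drop ${\sf Recog}$ altogether — so it does not affect the structure of the argument.
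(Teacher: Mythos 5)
Your proposal is correct and follows essentially the same approach as the paper: decompose $|x_\lambda\rangle$ in the banknote basis, coherently apply and uncompute ${\sf Findh}$ to insert an $h$-dependent phase, choose that phase to flatten the state in the set-element basis, and observe that the resulting measurement is a uniform DLog instance with no side information. The paper uses the same quadratic Gauss-sum chirp for odd cyclic $\G$ (and waves at ``a random $F$'' for the general approximately-uniform case), whereas you are slightly more careful to construct an exactly bi-unimodular phase on a general abelian group by tensoring chirps over the cyclic factors — a harmless refinement of the same argument.
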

\begin{proof}Our proof will use the {\sf Findh} algorithm developed in Section~\ref{sec:compserial}. We first recall the functionality guaranteed by the algorithm. The algorithm takes as input the state $|\G_\lambda^h*x_\lambda\rangle=\frac{1}{\sqrt{|\G_\lambda|}}\sum_{g\in\G_\lambda}\chi(g,h)|g*x_\lambda\rangle$, and outputs $h$, while leaving $|\G_\lambda^h*x_\lambda\rangle$ intact. In other words, it maps $|\G_\lambda^h*x_\lambda\rangle\mapsto |\G_\lambda^h*x_\lambda\rangle|h\rangle$.
	
	Now, recall that the $|\G_\lambda^h*x_\lambda\rangle$ form a basis. In particular, observe that $|x_\lambda\rangle=\frac{1}{\sqrt{|\G_\lambda|}}\sum_h|\G_\lambda^h*x_\lambda\rangle$. Therefore, we have that
	\[{\sf Findh}|x_\lambda\rangle=\frac{1}{\sqrt{|\G_\lambda|}}\sum_h|\G_\lambda^h*x_\lambda\rangle|h\rangle\enspace .\]
	
	We can now apply an arbitrary $h$-dependent phase to the state, and then uncompute $h$. The result is that we have applied an arbitrary phase to whatever state we started from, but in the Fourier domain of the group. That is, let $F:\G\mapsto\R$ be an arbitrary function. We can apply the phase $|h\rangle\mapsto e^{iF(h)}|h\rangle$, and then uncompute $h$. The result is that $|x_\lambda\rangle$ maps to
	
	\begin{equation}\label{eqn:attack}\frac{1}{\sqrt{|\G_\lambda|}}\sum_h e^{iF(h)}|\G_\lambda^h*x_\lambda\rangle=\frac{1}{|\G_\lambda|}\sum_{g}|g*x_\lambda\rangle\left(\sum_h\chi(g,h)e^{iF(h)}\right)\enspace .\end{equation}
	
	Now suppose we apply Q-KGEA to the algorithm producing this state. When we measure the register, all we get is a sample of $|g*x_\lambda\rangle$ according to some distribution, with no side information. The Q-KGEA assumption then implies an algorithm $\Es$ which can recover $g$ just given $|g*x_\lambda\rangle$. Therefore, if we can guarantee that measuring the state in Equation~\ref{eqn:attack} gives a uniform choice of $g$, then $\Es$ must be solving discrete logarithms, thus breaking Assumption~\ref{def:dlog} and reaching a contradiction.
	
	It is not hard to devise a function $F$ which makes the resulting sample $g$ close uniform; a random $F$ would accomplish this, for example. With a bit more care, we can even obtain a truly uniform $g$. Indeed, suppose $\G=\Z_N$ for an odd integer $N$. Then we can let $F(h)=2\pi h^2/N$. Then the probability of observing $g$ is 
	\[\frac{1}{|\G_\lambda|^2}\times \left|\sum_he^{i2\pi (gh+h^2)/N}\right|^2=\frac{1}{|\G_\lambda|^2}\times |\G_\lambda|=\frac{1}{|\G_\lambda|}\]
	as desired, where above we used the fact about quadratic Gauss sums that $\sum_he^{i2\pi (gh+h^2)/N}$ is equal to $|\G_\lambda|^{-1/2}$, up to phase.
\end{proof}

\paragraph{Our Modified Knowledge Assumption.} We propose a simple way to circumvent the attack above. Our basic observation is that, while the attack in Theorem~\ref{thm:kgeaattack} allows for obliviously sampling elements in arbitrary group actions, it does not appear useful for actually breaking cryptosystems. After all, all the attack is doing is sampling random set elements, which can anyway be sampled easily by choosing a random group element $g$ and computing $g*x_\lambda$. Thus, while strictly speaking violating the knowledge assumption, the attack appears useless for actually breaking cryptosystems.

More generally, for ``nice'' cryptographic games (which we will define shortly), in particular games that only use the group action interface and do not themselves obliviously sample elements, it seems that giving the adversary the ability to obliviously sample elements is no help in breaking the game. We therefore postulate that, for any adversary $\As$ that wins such a nice game, there is a different adversary $\As'$ for which the KGEA assumption can be applied, yielding an extractor \emph{for that} $\As'$. Thus, even if the original $\As$ can obliviously sample elements, we essentially assume that $\As'$ cannot, and therefore $\Es$ is possible. We now make this intuition precise.

\medskip

We first introduce the notion of generic group action games. Note that we will only be interested in \emph{games} that are given by generic algorithms; we will always treat the adversary as non-generic.

Briefly, a generic group action game is given by an interactive algorithm (``challenger'') ${\sf Ch}$. ${\sf Ch}$ is limited to only performing group action computations that are ``generic'' and only interacts with the group action through oracles implementing the group action interface. Specifically, a generic algorithm is an oracle-aided algorithm $\Bs$ that has access to oracles ${\sf GA}=({\sf Start},{\sf Act},{\sf Mem})$. Here, ${\sf Start}$ is the oracle that takes as input the empty query, and outputs a string $\tilde{x}$ representing $x_\lambda$. ${\sf Act}$ is the oracle that takes as input a group element $g\in\G_\lambda$ and a string $\tilde{y}$ representing a set element $y$, and outputs a string $\tilde{z}$ representing $z=g*x$. Finally, ${\sf Mem}$ is a membership testing oracle, that tests is a given string $\tilde{x}$ represents an actual set element. From a generic game, we obtain a standard model game by implementing the oracles ${\sf Start},{\sf Act},{\sf Mem}$ with the algorithms for an actual group action: ${\sf Start}$ outputs the actual set element $x_\lambda$, ${\sf Act}$ is the group action $*$, and ${\sf Mem}$ is the membership tester for the set $\Xs_\lambda$. For a concrete group action $(\G,\Xs,*)$, we denote this standard-model game by ${\sf Ch}^{(\G,\Xs,*)}$. Note that in the quantum setting, we will allow the gave ${\sf Ch}$ to send quantum messages to and from the adversary, and make quantum queries to the oracles in ${\sf GA}$.

For any algorithm $\As$, we say the algorithm $\delta(\lambda)$-breaks ${\sf Ch}^{(\G,\Xs,*)}$ if ${\sf Ch}^{(\G,\Xs,*)}(1^\lambda)$ outputs 1 with probability at least $\delta(\lambda)$ when interacting with $\As$.

We say that ${\sf Ch}$ is one-round if it sends a single classical string to $\As$, and then receives a single quantum message from $\As$, before deciding if $\As$ wins.

\medskip

We now give our modified KGEA assumption.

\begin{assumption}\label{def:qmkgea} The \emph{quantum modified knowledge of group element assumption} (Q-mKGEA) holds on a group action $(\G,\Xs,*)$ if the following is true. Consider a one-round generic group action game ${\sf Ch}$ and any quantum polynomial time (QPT) adversary $\As$ that $1-\delta$-breaks ${\sf Ch}^{(\G,\Xs,*)}$ for a negligible $\delta$. Write the final state of $\As$ as $\rho_{1,2}$, as a joint system over two registers $1,2$, where the first register contains the state given to ${\sf Ch}^{(\G,\Xs,*)}$ and the second register contains any remaining state of $\As$. White the final state of $\As$ as $\rho_{1,2}\gets\As(1^\lambda)\Leftrightarrow {\sf Ch}^{(\G,\Xs,*)}(1^\lambda)$. Then for all such $\delta,\As,{\sf Ch}$, there exists another  negligible $\delta'$, a QPT $\As'$ that also $1-\delta'$-breaks ${\sf Ch}^{(\G,\Xs,*)}$, and moreover there exists a QPT extractor $\Es$ and negligible $\epsilon$ such that 
	\[\Pr\left[y\in\Xs\wedge y\neq g*x_\lambda:\substack{\rho_{1,2}\gets\As'(1^\lambda)\Leftrightarrow {\sf Ch}^{(\G,\Xs,*)}(1^\lambda)\\y\gets{\sf Measure}(\rho_1)\\g\gets\Es(y,\rho_2(y)}\right]\leq\epsilon(\lambda)\enspace .\]
	Above, $y\gets{\sf Measure}(\rho_1)$ means to measure $\rho_1$ (the part of $\rho_{1,2}$ contained in the first register) in the computational basis, obtaining string $y$. Then the state of the second register collapses to $\rho_2(y)$.
\end{assumption}

Intuitively, this assumption says that if $\As$ wins some game, we might not be able to apply the KGEA extractor to it. However, there is some other $\As'$ that also wins the game, and that we \emph{can} apply the KGEA extractor to.

\begin{remark}Our solution with Assumption~\ref{def:qmkgea} also resolves the problem that, for group actions based on isogenies over elliptic curves, it is \emph{classically} possible to sample certain set element obliviously, thus violating the plain KGEA assumption. A different remedy used in~\cite{EC:LiuMonZha23} explicitly assumes a probabilistic classical procedure $S()$ for obliviously sampling set elements, and modifies the KGEA assumption so that the extractor either outputs (1) an explanation relative to $x_\lambda$ \emph{or} (2) an explanation relative to some input $y$ together with the random coins $r$ that are fed into $S$ so that $y=S(r)$. This approach works, but is not robust, in the sense that if another sampling procedure is found, it would contradict even the modified assumption. Moreover, our attack in Theorem~\ref{thm:kgeaattack} shows that, when specialized to group actions, even this approach fails, since there is a quantum procedure for sampling elements that has no randomness at all, and therefore can not be explained. Our solution is robust to new sampling procedures being found as well as our quantum sampler. Nevertheless, more cryptanalysis is needed to understand if the assumption is sound.
\end{remark}

\subsection{Quantum Lightning Security Using Q-mKGEA}\label{sec:security}

Here, we give an alternative and incomparable proof of security of our quantum lightning construction to the proof given in Section~\ref{sec:ggam}. Our proof here does not require generic group actions, but instead requires our Q-mKGEA assumption. Thus, it achieves a trade-off by giving a standard-model justification, but the computational assumption is more suspect.

\paragraph{The Discrete Log Assumption, with Help.} We now define a strengthening of the Discrete Log assumption (Assumption~\ref{def:dlog}), which allows the adversary limited query access to a computational Diffie Hellman (CDH) oracle. 

\begin{assumption}\label{def:dlogminimalcdh} We say that the \emph{Discrete Log with a single minimal CDH query} assumption (DLog/1-minCDH) assumption holds if the following is true. For any QPT adversary $\As$ playing the following game, parameterized by $\lambda$, there is a negligible $\epsilon$ such that $\As$ wins with probability at most $\epsilon(\lambda)$:
	\begin{itemize}
		\item The challenger, on input $\lambda$, chooses a random $g\in\G_\lambda$. It sends $\lambda$ to $\As$
		\item $\As$ submits a superposition query $\sum_{y\in\Xs,z\in\{0,1\}^*}\alpha_{y,z}|y,z\rangle$. Here, $y$ is a set element that forms the query, and $z$ is the internal state of the adversary when making the query. The challenger responds with $\sum_{y\in\Xs,z\in\{0,1\}^*}\alpha_{y,z}|(-g)*y,z\rangle$\enspace\footnote{Note that this operation is unitary and efficiently computable since $y\mapsto(-g)*y$ is efficiently computable and efficiently reversible given $g$.}. 
		\item The challenger sends $g*x$ to $\As$.
		\item $\As$ outputs a guess $g'$ for $g$. It wins if $g'=g$.
	\end{itemize}
\end{assumption}

Note that Assumption~\ref{def:dlogminimalcdh} uses a ``minimal'' oracle for the CDH oracle, meaning it replaces $y$ with $(-g)*y$ instead of writing $(-g)*y$ to a different register. This is only a possibility because $y\mapsto(-g)*y$ is reversible; otherwise the query would not be unitary. The minimal oracle, however, is somewhat non-standard. So we here define a slightly different assumption which uses ``standard'' oracles:

\begin{assumption}\label{def:dlogstandardcdh} We say that the \emph{Discrete Log with a double standard CDH query} assumption (DLog/2-stdCDH) assumption holds if the following is true. For any QPT adversary $\As$ playing the following game, parameterized by $\lambda$, there is a negligible $\epsilon$ such that $\As$ wins with probability at most $\epsilon(\lambda)$:
	\begin{itemize}
		\item The challenger, on input $\lambda$, chooses a random $g\in\G_\lambda$. It sends $\lambda$ to $\As$.
		\item $\As$ submits a superposition query $\sum_{y\in\Xs,w,z\in\{0,1\}^*}\alpha_{y,w,z}|y,w,z\rangle$. Here, $y$ is a set element that forms the query, $w$ is a string that forms the response register, and $z$ is the internal state of the adversary when making the query. The challenger responds with $\sum_{y\in\Xs,w,z\in\{0,1\}^*}\alpha_{y,w,z}|y,w\oplus[(-g)*y],z\rangle$. 
		\item $\As$ submits a second superposition query $\sum_{y\in\Xs,w,z\in\{0,1\}^*}\alpha_{y,w,z}|y,w,z\rangle$. The challenger responds with $\sum_{y\in\Xs,w,z\in\{0,1\}^*}\alpha_{y,w,z}|y,w\oplus[g*y],z\rangle$. 
		\item The challenger sends $g*x$ to $\As$.
		\item $\As$ outputs a guess $g'$ for $g$. It wins if $g'=g$.
	\end{itemize}
\end{assumption}

\begin{lemma}\label{lem:standard2mindlog} If DLog/2-stdCDH (Assumption~\ref{def:dlogstandardcdh}) holds in a group action, then so does DLog/1-minCDH (Assumption~\ref{def:dlogminimalcdh}).
\end{lemma}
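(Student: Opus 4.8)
The plan is to exhibit a reduction turning any DLog/1-minCDH adversary $\As$ into a DLog/2-stdCDH adversary $\Bs$ with the same winning probability and comparable running time; since Assumption~\ref{def:dlogstandardcdh} says no efficient $\Bs$ wins with non-negligible probability, neither can $\As$. The only non-trivial part is simulating the single in-place (minimal) query $y\mapsto(-g)*y$ made by $\As$ using the two standard queries $(y,w)\mapsto(y,w\oplus[(-g)*y])$ and $(y,w)\mapsto(y,w\oplus[g*y])$ available to $\Bs$; everything else is immediate ($\Bs$ forwards $\lambda$ and $g*x$, and outputs $\As$'s final guess). The timing matches: both of $\Bs$'s standard queries occur before it receives $g*x$, exactly when $\As$ makes its single minimal query.

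To simulate the minimal oracle $M_{-g}$: when $\As$ presents its query register in state $\sum_{y\in\Xs,z}\alpha_{y,z}|y,z\rangle$, $\Bs$ appends a fresh ancilla $|0^\ell\rangle_w$, where $\ell$ is the length of the string encoding of set elements. Applying $\Bs$'s first standard query to the $(y,w)$ registers yields $\sum_{y,z}\alpha_{y,z}|y\rangle|(-g)*y\rangle_w|z\rangle$. $\Bs$ then swaps these two registers, obtaining $\sum_{y,z}\alpha_{y,z}|(-g)*y\rangle|y\rangle_w|z\rangle$, and applies its second standard query to the (now swapped) $(y,w)$ registers; because $g*((-g)*y)=0*y=y$, the response register gets XORed with $y$, zeroing it out and leaving $\sum_{y,z}\alpha_{y,z}|(-g)*y\rangle|0^\ell\rangle_w|z\rangle$. $\Bs$ discards $w$ and returns $\sum_{y,z}\alpha_{y,z}|(-g)*y\rangle|z\rangle$ to $\As$, which is exactly $M_{-g}$ applied to $\As$'s query register. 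We use here that $*$ is a regular action, so $y\mapsto(-g)*y$ is a bijection of $\Xs$ and the simulated map is unitary, and that $\As$'s query has support on $\Xs$, as stipulated in Assumption~\ref{def:dlogminimalcdh}.

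Since this simulation is exact, the joint state of $\As$ together with $\Bs$'s challenger at every step is distributed identically to the state of $\As$ interacting with a genuine DLog/1-minCDH challenger; hence $\Bs$ wins precisely when $\As$ would, with the same probability, and $\Bs$ is QPT whenever $\As$ is. The one point that genuinely requires care is verifying that the compute--swap--uncompute pattern cleanly restores the ancilla to $|0^\ell\rangle$ with no residual entanglement — this is exactly where the identity $g*((-g)*y)=y$, i.e.\ the fact that a $(+g)$-action inverts a $(-g)$-action, is invoked, and it is why Assumption~\ref{def:dlogstandardcdh} is phrased with a $(-g)$-query followed by a $(+g)$-query.
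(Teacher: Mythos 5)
Your proof is correct and is exactly the argument the paper gestures at (the paper's proof of this lemma, via Lemma~\ref{lem:standard2min2dx}, is a one-line ``use the two standard queries to simulate the minimal query''); you simply write out the compute--swap--uncompute pattern in full, including the correct observation that the order of oracles in Assumption~\ref{def:dlogstandardcdh} (a $(-g)$-query then a $(+g)$-query) is precisely what lets the ancilla be cleanly erased.
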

\begin{proof}Like the proof of Lemma~\ref{lem:standard2min2dx}, Lemma~\ref{lem:standard2mindlog} follows by using the two standard oracle queries to simulate a single minimal oracle query.\end{proof}

From this point forward, we will use DLog/1-minCDH as our assumption; Lemma~\ref{lem:standard2mindlog} then shows that we could have instead used DLog/2-stdCDH.

\paragraph{The security proof.} We are now ready to formally state and prove security.

\begin{theorem}\label{thm:main2} Assuming Q-mKGEA (Assumption~\ref{def:qmkgea}) and DLog/1-minCDH (Assumption~\ref{def:dlogminimalcdh}) both hold on a group action $(\G,\Xs,*)$, then Construction~\ref{constr:main} is a quantum lightning scheme.
\end{theorem}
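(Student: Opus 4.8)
The plan is to convert a successful quantum lightning adversary $\Bs$ into an adversary against DLog/1-minCDH (Assumption~\ref{def:dlogminimalcdh}), using Q-mKGEA to supply the ``knowledge'' of the discrete logarithms underlying $\Bs$'s banknotes. The starting point is Theorem~\ref{thm:reject}: since $\ver(h,\cdot)$ is exactly the projector onto $|\G_\lambda^h*x_\lambda\rangle$, whenever $\Bs(1^\lambda)$ wins --- say with probability $\epsilon$ --- it outputs a serial $h$ together with two exact copies of the banknote $|\G_\lambda^h*x_\lambda\rangle$, and one may run $\ver$ non-destructively to detect this event. Combined with Theorem~\ref{thm:computeh} (the algorithm ${\sf Findh}$, which recovers $h$ from a banknote using only generic operations), this lets us wrap $\Bs$ into a ``well-behaved'' adversary: it runs $\Bs$, tests success, and on failure falls back to an honest computation that trivially satisfies the relevant winning condition, so that it behaves correctly with overwhelming probability --- the regime in which Q-mKGEA can be invoked.

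Next I would cast this wrapper as an adversary in a one-round generic group action game ${\sf Ch}$ built from $(\gen,\ver)$, chosen so that (i) the wrapper wins with overwhelming probability, (ii) winning is not trivially achievable by a from-scratch sampler, and (iii) the object the challenger receives can be measured to a set element. Applying Q-mKGEA (Assumption~\ref{def:qmkgea}) then yields an alternate QPT adversary $\As'$ winning the same game with overwhelming probability, together with a QPT extractor $\Es$ such that, if one measures the set-element register of $\As'$'s output to obtain $y=g*x_\lambda$, then $\Es$ recovers $g$ from $y$ and $\As'$'s residual state, except with negligible probability. The essential content is that, although $\As'$ may differ from $\Bs$, it still produces the same structure, and its residual state carries side information about the measured element that an oblivious sampler could not have retained; it is precisely this side information that will let us escape the discrete-log hardness of a bare challenge.

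Finally, I would assemble the DLog/1-minCDH adversary $\As$. On input $1^\lambda$ it runs $\As'$ (invoking $\Bs$'s second banknote copy and ${\sf Findh}$ where needed), and uses its single minimal CDH query $y\mapsto(-a)*y$ to ``recenter'' a banknote by the challenge secret $a$: after the query, measuring that banknote yields $(g+a)*x_\lambda$, whereas the residual state still lets $\Es$ output $g$ (the extractor having been built relative to the un-recentered run), so that $a=\big(\text{dlog of the measured element}\big)-g$ is pinned down. This contradicts DLog/1-minCDH, forcing $\epsilon$ to be negligible. By Lemma~\ref{lem:standard2mindlog} the same conclusion follows from DLog/2-stdCDH, so the statement holds under either form of the assumption.

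The step I expect to be the main obstacle is this last recentering-and-extraction maneuver, and more generally the correct choice of the game ${\sf Ch}$. As the paper itself stresses in its discussion of the algebraic group action model, a pure banknote cannot simultaneously be entangled with a classical explanation of its discrete log without failing verification; hence the extraction must be arranged to act on a \emph{measured} set element whose side information lives in a register that is not forced to remain pure, and the single available CDH query must supply exactly the one group operation needed to relate the extracted element to the challenge $a$. Ensuring that this shift does not perturb the register on which $\Es$ depends, that $\Es$'s guarantee (promised only for $\As'$) survives the shift, and that ${\sf Ch}$ is simultaneously non-trivial, generic, one-round, and won with overwhelming probability by the wrapper of $\Bs$, is where the real difficulty lies; everything else is bookkeeping on top of Theorems~\ref{thm:reject} and~\ref{thm:computeh}.
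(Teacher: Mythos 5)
Your high-level outline matches the paper's structure — boost the lightning adversary to overwhelming success probability, invoke Q-mKGEA to obtain a "well-behaved" adversary plus extractor $\Es$, and then feed the extractor's side information into a DLog/1-minCDH reduction — and you correctly flag the recentering-and-extraction step as the crux. But the mechanism you describe for that step does not go through. You propose: apply the CDH query to a banknote register, measure it, let $\Es$ output $g$ from the residual, and then recover $a$ via $a = (\text{dlog of measured element}) - g$. This fails twice over. First, the dlog of the measured element is not available to you --- that is a fresh discrete-log instance, and computing it is exactly the hardness you are trying to exploit. Second, if you apply $y\mapsto(-a)*y$ to $\Ms_2$ \emph{before} measuring, the post-measurement residual state on $\Ms_1\otimes$(adversary's workspace) is $|\psi_{(g_2'+a)*x_\lambda}\rangle$ while the measured element is $g_2'*x_\lambda$; the extractor's promise is for matched pairs $(y,|\psi_y\rangle)$, so its output on this mismatched input is not $g$, not $g_2'$, and in general not anything useful.

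The paper's proof avoids this by using the extractor twice and recentering between uses. Measure $\Ms_2$ \emph{first} (no query yet) to get $g_2*x_\lambda$ and the matched residual $|\psi_{g_2*x_\lambda}\rangle$; run $\Es$ gently (Gentle Measurement Lemma) to learn $g_2$ without disturbing the state; apply ${\sf Map}(-g_2,\cdot)$ to recenter to $|\psi_{x_\lambda}\rangle$; \emph{now} spend the single CDH query on $\Ms_1$, turning the state into $|\psi_{g*x_\lambda}\rangle$ where $g$ is the challenger's secret; finally, feed the challenger's revealed $g*x_\lambda$ together with $|\psi_{g*x_\lambda}\rangle$ into $\Es$, which outputs $g$. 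The ingredients you did not include — the gentle pre-extraction of $g_2$, the recentering via ${\sf Map}$, the use of the challenger's $g*x_\lambda$ as the set-element input to $\Es$, and the random self-reduction needed so $\Es$ works on worst-case inputs — are what make the final maneuver close. (Also minor: ${\sf Findh}$ plays no role in this proof; it is used elsewhere, for the attack on Q-KGEA.)
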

\begin{remark}Before proving Theorem~\ref{thm:main2}, we briefly discuss how to handle the case of non-uniform attackers, since with non-uniform quantum advice quantum lightning is insecure without some modifications. Note that even against non-uniform quantum-advice attackers, DLog/1-minCDH still plausibly holds. However, Q-KGEA (Assumption~\ref{def:qkgea}) certainly does not, as a non-uniform attacker may have a $y$ hard-coded for which it does not know the discrete log with $x_\lambda$. Theorem~\ref{thm:main2} also implies that Q-mKGEA (Assumption~\ref{def:qmkgea}) does not hold in the non-uniform quantum advice setting, though this is a priori harder to see. As discussed in Section~\ref{sec:prelim}, there are several possibilities. 
	\begin{itemize}
		\item The first is to restrict to non-uniform attackers that only have classical advice. Note that Q-KGEA is still trivially false in this setting, leading to a vacuous theorem. However, Q-mKGEA may still plausibly hold.
		\item The second is to use a probabilistically generated group action, and define Q-mKGEA and DLog/1-minCDH accordingly. For quantum money security, it would suffice to have $\gen$ create the parameters of the group action and then include them in the serial number, since the serial number is generated honestly. For quantum lightning security, we would instead need the parameters to be generated by a trusted third party and then placed in a common random string (CRS). 
		\item The final option is to use the human ignorance approach~\cite{VIETCRYPT:Rogaway06}, where we explicitly state our security theorem as transforming a quantum lightning adversary into a Q-mKGEA adversary; while such Q-mKGEA adversaries exist in the non-uniform quantum advice setting without a CRS, they are presumably unknown to human knowledge. As a consequence, a quantum lightning attacker, while existing, would likewise be unknown to human knowledge. 
	\end{itemize}
	For simplicity, we state and prove Theorem~\ref{thm:main2} in the uniform setting; either probabilistically generating the group action or using human ignorance would require straightforward modifications. 
\end{remark}

\noindent We now are ready to prove Theorem~\ref{thm:main2}.

\begin{proof}[Proof of Theorem~\ref{thm:main2}]Consider a QPT quantum lightning adversary $\As'$ which breaks security with non-negligible success probability $\epsilon$. Since an adversary can always tell if it succeeded by running $\ver$, we can run $\As'$ multiple times to boost the probability of a successful break. In particular, we can run $\As'$ for $\lambda\epsilon$ times, and at except with probability $1-2^{-\Theta(\lambda)}$, at least one of the runs will succeed. This allows us to conclude without loss of generality that $\As'$ has success probability $1-2^{-\Theta(\lambda)}$. We can then invoke Q-mKGEA (Assumption~\ref{def:qmkgea}) to arrive at an adversary $\As$ which also breaks quantum lightning security with high success probability.

	By Theorem~\ref{thm:reject}, we know that if $\As$ outputs a serial number $h$, the states outputted are exponentially close to two copies of $|\G_\lambda^h*x_\lambda\rangle$.
	
	For simplicity in the following proof, we will assume the probability of passing verification is actually 1; it is straightforward to adapt the proof to the case of negligible error.
	
	Next, we purify $\As$, and assume that before measurement, $\As$ outputs a pure state $|\psi\rangle$. By our assumption that the success probability is 1, $|\psi\rangle$ will have the form
	\[|\psi\rangle=\sum_{h}\alpha_h|\phi_h\rangle|\G_\lambda^h*x_\lambda\rangle|\G_\lambda^h*x_\lambda\rangle=\frac{1}{|\G_\lambda|}\sum_{g_1,g_2,h}\alpha_h|\phi_h\rangle\chi(h,g_1+g_2)|g_1*x\rangle_{\Ms_1}|g_2*x\rangle_{\Ms_2}\enspace.\]
	Above, $|\phi_h\rangle$ are arbitrary normalized states representing whatever state the adversary contains after outputting its banknotes, and $\sum_h\|\alpha_h\|^2=1$.
	
	Now consider the adversary $\Bs$ which first constructs $|\psi\rangle$, and then measures the register $\Ms_2$ to obtain $y_2=g_2*x$. 
	\begin{claim}$g_2$ is uniform in $\G$.
	\end{claim}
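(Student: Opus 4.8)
The plan is to compute directly the probability that measuring $\Ms_2$ in the computational basis yields the element $y_2 = g_2*x_\lambda$, for each fixed $g_2\in\G_\lambda$. Since the action is regular, the map $g_2\mapsto g_2*x_\lambda$ is a bijection, so it suffices to show this probability equals $1/|\G_\lambda|$ for every $g_2$.

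First I would expand $|\psi\rangle=\sum_{h}\alpha_h|\phi_h\rangle|\G_\lambda^h*x_\lambda\rangle_{\Ms_1}|\G_\lambda^h*x_\lambda\rangle_{\Ms_2}$ and write the outcome probability as
\[\Pr[g_2]=\sum_{h,h'}\overline{\alpha_{h'}}\,\alpha_h\,\langle\phi_{h'}|\phi_h\rangle\,\langle\G_\lambda^{h'}*x_\lambda|\G_\lambda^h*x_\lambda\rangle\,\langle\G_\lambda^{h'}*x_\lambda|g_2*x_\lambda\rangle\langle g_2*x_\lambda|\G_\lambda^h*x_\lambda\rangle\enspace.\]
The key step is to invoke the orthogonality lemma preceding Theorem~\ref{thm:reject}: $\langle\G_\lambda^{h'}*x_\lambda|\G_\lambda^h*x_\lambda\rangle=0$ for $h'\neq h$. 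This is exactly where the hypothesis that $\As$ output \emph{two} valid banknotes with the \emph{same} serial number gets used --- the second copy, sitting in register $\Ms_1$, supplies a subsystem whose states are orthogonal across $h$, killing all off-diagonal terms regardless of how the auxiliary states $|\phi_h\rangle$ overlap. Note that $\langle\G_\lambda^{h'}*x_\lambda|g_2*x_\lambda\rangle\langle g_2*x_\lambda|\G_\lambda^h*x_\lambda\rangle$ alone equals $\tfrac{1}{|\G_\lambda|}\chi(g_2,h-h')$, which does \emph{not} vanish for $h'\neq h$, so this cancellation step is genuinely needed.

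After that the computation is immediate: the surviving diagonal terms are $\sum_h\|\alpha_h\|^2\,\langle\phi_h|\phi_h\rangle\,|\langle g_2*x_\lambda|\G_\lambda^h*x_\lambda\rangle|^2$, and since $\langle g_2*x_\lambda|\G_\lambda^h*x_\lambda\rangle=\tfrac{1}{\sqrt{|\G_\lambda|}}\chi(g_2,h)$ (using regularity so that $\langle g_2*x_\lambda|g*x_\lambda\rangle=\delta_{g,g_2}$), we get $|\langle g_2*x_\lambda|\G_\lambda^h*x_\lambda\rangle|^2=1/|\G_\lambda|$ independently of $h$ and $g_2$. Using $\sum_h\|\alpha_h\|^2=1$ and $\langle\phi_h|\phi_h\rangle=1$ yields $\Pr[g_2]=1/|\G_\lambda|$, which is the claim.

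There is no deep obstacle: the only point that needs care is precisely the off-diagonal cancellation above. A priori the marginal on $g_2$ could be skewed by the entangled auxiliary register $|\phi_h\rangle$ (which is arbitrary and in particular need not have orthogonal branches), but the orthogonality of distinct banknote states --- available here only because the adversary produced two banknotes with the same serial number --- forces the expression to collapse to the flat diagonal distribution, and the equal Fourier weights $|\chi(g_2,h)|^2=1$ finish the argument.
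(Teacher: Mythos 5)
Your proof is correct, and it is a direct algebraic version of the same underlying idea as the paper's argument. The paper phrases it as a deferred-measurement argument: imagine measuring $\Ms_1$ in the banknote basis $\{|\G_\lambda^h*x_\lambda\rangle\}$ first --- this commutes with the measurement on $\Ms_2$ and hence does not disturb the marginal on $g_2$; conditioned on the $\Ms_1$ outcome $h$, register $\Ms_2$ collapses to $|\G_\lambda^h*x_\lambda\rangle$, and measuring that state gives a uniform set element for every $h$. Your expansion makes each of these steps explicit: the appearance of $\delta_{h,h'}$ from the $\Ms_1$ factor is exactly the "conditioning on $h$" step (and your observation that the $\Ms_2$ factor alone does \emph{not} vanish off-diagonal correctly identifies why the second banknote in $\Ms_1$ is essential --- without it, a non-orthogonal auxiliary state $|\phi_h\rangle$ could skew the marginal), and $|\langle g_2*x_\lambda|\G_\lambda^h*x_\lambda\rangle|^2 = 1/|\G_\lambda|$ is exactly "measuring a single banknote gives a uniform element." So the two proofs are logically equivalent; yours trades the measurement-commutativity narrative for an explicit computation, which has the pedagogical advantage of pinpointing precisely where the hypothesis of two simultaneously valid banknotes is used.
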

	\begin{subproof}Consider additionally measuring $\Ms_1$ in the basis $\{|\G_\lambda^h*x_\lambda\rangle\}$. Since this measurement is on a different register than the measurement on $\Ms_2$, measuring $\Ms_1$ does not affect the output distribution of $\Ms_2$ (though the results may be correlated). But the measurement on $\Ms_1$ determines $h$, and conditioned on $h$, $\Ms_2$ collapses to $|\G_\lambda^h*x_\lambda\rangle$. Regardless of what $h$ is, measuring $|\G_\lambda^h*x_\lambda\rangle$ gives a uniformly random element in $\Xs$. Thus, even without measuring $\Ms_1$, the measurement of $\Ms_2$ gives a uniform element in $\Xs$.\end{subproof}
	
	Therefore, after measuring $\Ms_2$, the state $|\psi\rangle$ then collapses to \[|\psi_{g_2*x_\lambda}\rangle:=\frac{1}{\sqrt{|\G_\lambda|}}\sum_{g_1,h}\alpha_h|\phi_h\rangle\chi(h,g_1+g_2)|g_1*x\rangle_{\Ms_1}\enspace .\]
	
	\begin{claim}There is a QPT procedure ${\sf Map}$ such that ${\sf Map}(g,|\psi_{y}\rangle)=|\psi_{g*y}\rangle$.
	\end{claim}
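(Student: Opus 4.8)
The plan is to build ${\sf Map}$ out of the serial-number extractor ${\sf Findh}$ of Theorem~\ref{thm:computeh}: use it to write the (superposed, and to us unknown) serial number coherently into a fresh register, apply there a phase that depends on the known group element $g$, and then uncompute that register. First I would re-express $|\psi_y\rangle$ in the banknote basis. Writing $y=g_2*x_\lambda$ (the element $g_2$ is unique by regularity, so $|\psi_y\rangle$ is well defined), and using bilinearity and symmetry of $\chi$ together with the identity $\frac{1}{\sqrt{|\G_\lambda|}}\sum_{g_1}\chi(h,g_1)|g_1*x_\lambda\rangle=|\G_\lambda^h*x_\lambda\rangle$,
\begin{align*}
|\psi_y\rangle&=\frac{1}{\sqrt{|\G_\lambda|}}\sum_{g_1,h}\alpha_h\,\chi(h,g_1+g_2)\,|\phi_h\rangle|g_1*x_\lambda\rangle_{\Ms_1}\\
&=\sum_h\alpha_h\,\chi(h,g_2)\,|\phi_h\rangle\,|\G_\lambda^h*x_\lambda\rangle_{\Ms_1}.
\end{align*}
In particular the $\Ms_1$ register is, in each branch $h$, exactly the banknote $|\G_\lambda^h*x_\lambda\rangle$ (up to the branch-dependent scalar $\chi(h,g_2)$), so it has support on $\Xs_\lambda$ and ${\sf Findh}$ applies to it.

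The key point that makes ${\sf Findh}$ usable on a superposition is that, by the proof of Theorem~\ref{thm:reject}, on input $|\G_\lambda^h*x_\lambda\rangle$ the auxiliary register $\Hs$ produced by ${\sf Findh}$ ends in the computational basis state $|h\rangle$ \emph{with certainty}, while $\Ms_1$ is restored. Hence the final measurement of ${\sf Findh}$ is deterministic and can be deferred, so that ${\sf Findh}$ is, without loss of generality, a QPT unitary $U_{{\sf Findh}}:|\G_\lambda^h*x_\lambda\rangle_{\Ms_1}|0\rangle_\Hs\mapsto|\G_\lambda^h*x_\lambda\rangle_{\Ms_1}|h\rangle_\Hs$, and $U_{{\sf Findh}}^{-1}$ is also QPT. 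The procedure ${\sf Map}(g,\cdot)$ acts on $\Ms_1$ and a fresh ancilla $\Hs$, leaving the rest of the state ($|\phi_h\rangle$) untouched: (i) adjoin $|0\rangle_\Hs$ and apply $U_{{\sf Findh}}$; (ii) apply the controlled phase $|h\rangle_\Hs\mapsto\chi(g,h)|h\rangle_\Hs$, which is efficient since $g$ is known and $\chi$ is efficiently computable; (iii) apply $U_{{\sf Findh}}^{-1}$ and discard $\Hs$. Tracking the state through (i)--(iii) and using $\chi(h,g_2)\chi(g,h)=\chi(h,g+g_2)$ gives
\[
{\sf Map}(g,|\psi_y\rangle)=\sum_h\alpha_h\,\chi(h,g+g_2)\,|\phi_h\rangle\,|\G_\lambda^h*x_\lambda\rangle_{\Ms_1}=|\psi_{(g+g_2)*x_\lambda}\rangle=|\psi_{g*y}\rangle,
\]
where the last step uses compatibility of the action, $g*(g_2*x_\lambda)=(g+g_2)*x_\lambda$, and the fact that the defining formula for $|\psi_{\,\cdot\,}\rangle$ depends only on the set element and not on its (unique) discrete log.

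I expect the only real subtlety — the ``hard part'' — to be justifying that the internal measurement of ${\sf Findh}$ is deterministic on the relevant branches and can therefore be made coherent, so that step (iii) genuinely restores $\Hs$ to $|0\rangle$ (disentangled) rather than leaving garbage entangled with the output; this is exactly what the proof of Theorem~\ref{thm:reject} delivers, since it shows that verification/readout acts as an exact projector-plus-readout onto $|\G_\lambda^h*x_\lambda\rangle$. Everything else is routine bookkeeping with the bilinear phase $\chi$, plus the observation that carrying the register $\Hs$ along in a well-defined basis state is what allows uncomputation; no measurement is performed by ${\sf Map}$ itself, so the output is the pure state claimed.
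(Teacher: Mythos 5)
Your proof is correct, but it takes a genuinely different and substantially more elaborate route than the paper's. The paper's ${\sf Map}(g,\cdot)$ simply applies the in-place action $z\mapsto(-g)*z$ to $\Ms_1$ in superposition: the set element $|g_1*x_\lambda\rangle$ becomes $|(g_1-g)*x_\lambda\rangle$, and the change of variables $g_1'=g_1-g$ turns the phase $\chi(h,g_1+g_2)$ into $\chi(h,g_1'+g+g_2)$, which is exactly $|\psi_{(g+g_2)*x_\lambda}\rangle=|\psi_{g*y}\rangle$. One group action, no ancilla, no ${\sf Findh}$.

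You instead work in the dual (banknote) basis: you regroup $|\psi_y\rangle=\sum_h\alpha_h\chi(h,g_2)|\phi_h\rangle|\G_\lambda^h*x_\lambda\rangle_{\Ms_1}$, coherently extract $h$ via $U_{\sf Findh}$, apply the phase $\chi(g,h)$, and uncompute. This shifts $\chi(h,g_2)\to\chi(h,g_2+g)$, reaching the same target. The decomposition and the deferred-measurement justification are both sound (indeed ${\sf Findh}$ acts as an exact readout of $h$ on the $\{|\G_\lambda^h*x_\lambda\rangle\}$ basis, so deferring the measurement and inverting is legitimate). The two routes are Fourier-duals of each other: translating $g_1$ in the primal sum is the same as multiplying by $\chi(g,h)$ in the $h$-indexed expansion. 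What the paper's route buys is simplicity and exactness — it uses only the group action oracle, which is given exactly, whereas your route routes through $\QFT$ and an arbitrary-angle controlled phase, both of which are only implementable up to (negligible but nonzero) precision error, so strictly speaking you get $|\psi_{g*y}\rangle$ only approximately. For the claim as stated and for the downstream security proof this is immaterial, but it is worth noticing that the tool you reached for (${\sf Findh}$ plus phase kickback) is the same mechanism the paper uses for the much harder Theorem~\ref{thm:kgeaattack}, where there is no elementary alternative; here, the elementary alternative exists and is preferable.
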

	\begin{subproof}${\sf Map}$ simply applies the map $y\mapsto (-g)*y$ to $\Ms_1$ in superposition. Then we have that:
		\begin{align*}
			{\sf Map}(g,|\psi_{g_2*x_\lambda}\rangle)&=\frac{1}{\sqrt{|\G_\lambda|}}\sum_{g_1,h}\alpha_h|\phi_h\rangle\chi(h,g_1+g_2)|(g_1-g)*x\rangle_{\Ms_1}\\
			&=\frac{1}{\sqrt{|\G_\lambda|}}\sum_{g_1',h}\alpha_h|\phi_h\rangle\chi(h,g_1'+g+g_2)|g_1'*x\rangle_{\Ms_1}=|\psi_{(g+g_2)*y}\rangle=|\psi_{g*(g_2*y)}\rangle\enspace .
		\end{align*}
		Above we used the change of variables $g_1'=g_1-g$.
	\end{subproof}
    
	Now we invoke Q-KGEA (Assumption~\ref{def:qkgea}) on the adversary $\Bs$. Since $\Bs$ always outputs a valid set element, this means there is another QPT algorithm $\Es$ such that 
	\[\Pr[\Es(g_2*x_\lambda,|\psi_{g_2*x_\lambda}\rangle)=g_2]\geq 1-\negl(\lambda)\enspace .\]
	
	Above, the probability is over $g_2*x_\lambda$, as well as any randomness incurred when executing $\Es$. We note by a simple random self-reduction that we can insist the above probability holds for \emph{all} $g_2*x_\lambda$, where the randomness is only over $\Es$. Indeed, given $|\psi_{g_2*x_\lambda}\rangle,g_2*x_\lambda$, we can choose a random $g$ and compute $g_2'*x_\lambda$ as $g*(g_2*x_\lambda)$ where $g_2'=g+g_2$. Likewise, we can compute $|\psi_{g_2'*x_\lambda}\rangle$ as ${\sf Map}(g,|\psi_{g_2*x_\lambda}\rangle)$. This gives a random instance on which to apply $\Es$, giving $g_2'$ with probability $1-\negl(\lambda)$, regardless of $g_2$. Then we can compute $g_2=g_2'-g$. We thus compute $g_2$ with overwhelming probability, even in the worst case. We will therefore assume without loss of generality that this is the case for $\Es$.
	
	For simplicity, we will actually assume that the probability is 1; it is straightforward to handle the case the probability is negligibly close to 1. By the Gentle Measurement Lemma~\cite{Winter99}, $\Es$ can compute $g_2$ without altering the state $|\psi_{g_2*x}\rangle$. Thus, by combining $\Bs$ and $\Es$, we can compute both $|\psi_{g_2*x}\rangle$ and $g_2$ with probability 1. We can then compute ${\sf Map}(-g_2,|\psi_{g_2*x_\lambda}\rangle)=|\psi_{x_\lambda}\rangle$.
	
	We now describe a new algorithm $\Cs$ which breaks DLog/1-minCDH (Assumption~\ref{def:dlogminimalcdh}). $\Cs$ works as follows:
	\begin{itemize}
		\item It constructs $|\psi_{x_\lambda}\rangle$ as above.
		\item It makes its query to the DLog/1-minCDH challenger, setting $\Ms_1$ as the query register. This query simulates the operation ${\sf Map}(g,\cdot)$, where $g$ is the group element chosen by the challenger. Thus, at the end of the query, $\Cs$ has $|\psi_{g*x_\lambda}\rangle$.
		\item Now upon receiving $g*x_\lambda$ from the challenger, run $\Es(g*x_\lambda,|\psi_{g*x_\lambda}\rangle)$. By the guarantees of $\Es$, the output will be $g$.
	\end{itemize}
	Thus, we see that $\Cs$ breaks the DLog/1-minCDH assumption. This completes the security proof.\end{proof}

\subsection{Algebraic Group Actions.}

Next we turn to the Algebraic Group Action Model (AGAM), considered by a couple recent works~\cite{PKC:DHKKLR23,EPRINT:OrsZan23}. This is an analog of the Algebraic Group Model (AGM)~\cite{C:FucKilLos18}, adapted to group actions and quantum attackers. This model considers algebraic adversaries, which are algorithms where, any time they produce a set element output, must also ``explain'' the output in terms of the set elements the adversary saw as input. That is, if the algebraic adversary has so far been given set elements $y_1,\dots,y_\ell$, when it outputs a new element $y$, it must also output a group element $g\in \G_\lambda$ and index $i$ such that $y=g*y_i$.

In the classical world, a common refrain is that the AGM is ``between'' the generic group model and standard model. As formalized by Zhandry~\cite{C:Zhandry22b}, this is true in a particular sense: any ``nice'' security game that is secure in the standard model is also secure in the AGM, and in turn any nice security game that is secure in the AGM is also secure in the appropriate generic group model. The statements also hold true for group actions, provided we still restrict to the classical world. Here, ``nice'' comes with some important restrictions. The game must be ``single stage'', meaning there is only a single adversary interacting with the challenger. Moreover, the game must be a ``type safe'' game, which for group actions informally means the algorithms can pass set elements around and perform group action computations on them as a black box, but cannot manipulate the individual bits of the set element representations.

We might expect, therefore, that the AGAM is also ``between'' the GGAM and the standard model quantumly. However, this appears not to be the case, or at least it does not follow from any obvious adaptation of existing work. There are at least three problems.

The first is closely related to the issue with knowledge assumptions explored above. After all, the motivation for the AGAM, following the motivation from the AGM, is that we would expect the only way to output set elements is to actually derive them from existing set elements via the group action, in which case we would seem to know how to explain the new elements in terms of existing elements. In the classical setting, you can indeed show that this is true generically. However, our attack on the Q-KGEA assumption (Theorem~\ref{thm:kgeaattack}) shows that this is not true quantumly. Namely, it is possible to output a superposition of set elements where one does not ``know'' how to derive those elements from input elements.

For the second issue, consider the security game for our quantum lightning scheme. Recall that the adversary must output some $h$ along with two copies of $|\G^h*x_\lambda\rangle=\frac{1}{\sqrt{|\G_\lambda|}}\sum_g \chi(h,g)|g*x_\lambda\rangle$. An algebraic adversary would have to ``explain'' this state, meaning it must output two copies of
\[\frac{1}{\sqrt{|\G_\lambda|}}\sum_{g}\chi(h,g)|g*x_\lambda,g\rangle\enspace .\]
But here, note that if the challenger tries to verify the banknote state, the verification will actually \emph{fail}, since the state is entangled with $g$. Worse, observe that the state produced by the algebraic adversary is actually trivial to construct for any given $h$, by first constructing $\frac{1}{\sqrt{|\G_\lambda|}}\sum_{g}\chi(h,g)|g\rangle$ and then applying the group action operation. Thus, we see that the algebraic adversary can actually trivially produce two copies of the requisite state. This is in contrast to the actual banknote state $|\G^h*x_\lambda\rangle$, where it appears only possible to sample actual banknotes for a random $h$, but not produce a banknote for a given $h$; indeed the security of our scheme inherently relies on this difficulty. That is, the state required of the algebraic adversary is trivial, whereas the state required by a standard-model adversary is presumably hard to construct. This is in contrast to the classical world, where the algebraic adversary's task is always at least as hard as the real-world adversary.

The third issue is the claim that any game which is secure in the classical AGM/AGAM is also secure in the classical GGM/GGAM. This claim, or at least the classical proof of it, does not hold quantumly. This is because the proof relies on the ability to view the adversary's queries to the group/group action oracle and extract information from them. Specifically, in the classical GGM/GGAM, the only way the adversary can obtain new set elements is to act on existing elements by querying the group action. By writing down the input set and group element as well as the output group element, we can remember how we derived all set elements. Importantly, for any set element we produce, we can trace that set element back to an input set element, and see that the output element was obtained via a sequence of actions by group elements on the original input element. By multiplying these group actions together, we can explain the output element in terms of the input set element.

This strategy, however, does not work quantumly. Consider for example the hardness assumption DLog/minCDH (Assumption~\ref{def:dlogminimalcdh}). Here, the adversary can query on a superposition $\sum_y \alpha_y |y\rangle$ of set elements, and get the resulting superposition obtained by action of a secret group element $(-g)$: $\sum_y \alpha_y |(-g)*y\rangle$

In the AGAM, we would ask the adversary queries on $\sum_y \alpha_y |y,{\sf Explain}_y\rangle$, where ${\sf Explain}_y$ is an explanation of $y$ in terms of the elements the adversary has seen so far. In the case of DLog/minCDH, the only element seen by the time the adversary must make its query is $x_\lambda$, and so ${\sf Explain}_y$ is the unique $h$ such that $y=h*x_\lambda$. Thus, the adversary's query takes the form $\sum_h \alpha_{h*x_\lambda} |h*x_\lambda,h\rangle$. In response, it receives \[|\phi_{\sf AGAM}\rangle=\sum_h \alpha_{h*x_\lambda} |(h-g)*x_\lambda,h\rangle\enspace .\]

On the other hand, a generic adversary would have just \[|\phi_{\sf GGAM}\rangle=\sum_h \alpha_{h*x_\lambda} |(h-g)*x_\lambda\rangle\enspace .\]

While in the classical setting, having the extra information $h$ about $y$ does not cause problems (it can just be erased or ignored), this extra information is problematic quantumly. For example, it might be that having $|\phi_{\sf GGAM}\rangle$ allows for solving some task, whereas having $\sum_h \alpha_{h*x_\lambda} |(h-g)*x_\lambda,h\rangle$ does not. In such a case, we find that the task is hard in the AGAM, despite being easy in the GGAM and even in the standard model. In particular, if we want the AGAM to be ``between'' the GGAM and standard models, we would need to rule this situation out, meaning we would need a way to map the state $|\phi_{\sf AGAM}\rangle$ containing the explanation back to the state $|\phi_{\sf GGM}\rangle$ without the explanation. This mapping, in general, will be intractable, as it requires un-computing $h$ from $|(h-g)*x_\lambda\rangle$. 

\medskip

Based on these issues, we see that the AGAM is probably \emph{not} a reasonable model for quantum attacks, at least when the game is inherently quantum, as with the security of our quantum lightning scheme or with assumptions that allow quantum queries. On the other hand, the model might be reasonable for ``classically stated'' security games, such as ordinary discrete log or CDH. However, these problems do not arise at all for generic group actions. Therefore, based on this discussion, we posit that generic group actions should be the preferred method for analyzing cryptosystems and security games.

\section{A Construction for REGAs}\label{sec:alternate}

In this section, we give a construction for the case where the group action can only be computed efficiently for a small ``base'' set of group elements. Such group actions are known as ``restricted effective group actions'' (REGAs). 

\subsection{Some additional background}

Before giving the construction, we here provide some additional background that will be necessary for understanding the construction.

\paragraph{Groups.} Let $\G$ be a group (written additively), and $N$ an integer such that $N\times g=0$ for all $g\in\G$. $N=|\G|$ will do. Then $\G$ is a subgroup of $\Z_N^n$ for some positive integer $n$. Let $W$ be the set of vectors in $\Z_N^n$ such that $\wv\cdot g=0\bmod N$ for all $g\in\G$. $W$ is then a group, and we can therefore consider the group $(\Z_N^n)/W$ defined using the equivalence relation $\sim$, where $\uv_1\sim\uv_2$ if $\uv_1-\uv_2\in W$. $(\Z_N^n)/W$ is isomorphic to $\G$; let $\phi:\G\rightarrow(\Z_N^n)/W$ be an isomorphism. Note that for $g\in\G\subseteq\Z_N^n$ and $h\in \G$, $g\cdot\phi(h)\bmod N$ is well-defined by taking any representative $h'\in\phi(h)$ and computing $g\cdot h'\bmod N$.

Under this notation, we can re-define $\chi(g,h)$ as $e^{i2\pi g\cdot\phi(h)/N}$, which is equivalent to the definition in Section~\ref{sec:prelim}.

We associate $\Z_N$ with the interval $[-\lfloor (N-1)/2\rfloor,\lceil (N-1)/2 \rceil]$ in the obvious way, and likewise associate $\Z_N^n$ with the hypercube $[-\lfloor (N-1)/2\rfloor,\lceil (N-1)/2 \rceil]^n$. This gives rise to a notion of norm on $\Z_N^n$ by taking the norm in $\Z^n$.

\begin{lemma}\label{lem:bigelements} Let $\G$ be a subgroup of $\Z_N$. Then the number of elements $g\in\G$ such that $|g|\geq N/4$ is exactly
	$|\G|+1-2\lceil|\G|/4\rceil$.
	In particular, if $\G\neq\{0\}$, then there is at least one element $g\in\G$ has $|g|\geq N/4$.
\end{lemma}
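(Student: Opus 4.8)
The plan is to exploit the fact that every subgroup of $\Z_N$ is cyclic: write $d = |\G|$, so $d \mid N$ and $\G = \{\, j\cdot(N/d) : j = 0,1,\dots,d-1 \,\}$. Rather than counting the elements with $|g| \ge N/4$ directly, I would count the \emph{complement}, namely the elements $g \in \G$ with $|g| < N/4$, and subtract this count from $d$.

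First I would pin down canonical representatives. The stated convention identifies $\Z_N$ with $[-\lfloor (N-1)/2\rfloor, \lceil (N-1)/2\rceil]$, and one checks --- separately for $N$ even and $N$ odd --- that an integer lies in this interval exactly when it lies in the half-open interval $(-N/2, N/2]$. Next, for any integer $j$ with $j \in (-d/2, d/2]$, the multiple $j\cdot(N/d)$ already lies in $(-N/2, N/2]$, so it is the canonical representative of the corresponding element of $\G$; and as $j$ runs over the $d$ integers of $(-d/2, d/2]$ each element of $\G$ is obtained exactly once. Hence the induced norm on $\G$ is simply $|g| = |j|\cdot(N/d)$, and the condition $|g| < N/4$ is equivalent to the purely integer condition $|j| < d/4$.

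The second step is the elementary count. For an integer $j$, the inequality $|j| < d/4$ holds precisely when $|j| \le \lceil d/4\rceil - 1$ (valid whether or not $d/4$ is itself an integer), so the count of integers $j$ with $|j| < d/4$ is exactly $2\lceil d/4\rceil - 1$, and all such $j$ satisfy $|j| \le \lceil d/4\rceil - 1 < d/2$, hence lie in the representative range $(-d/2,d/2]$. Subtracting from $d$ gives that the number of $g \in \G$ with $|g| \ge N/4$ equals $d - (2\lceil d/4\rceil - 1) = |\G| + 1 - 2\lceil |\G|/4\rceil$, which is the claimed formula. For the final assertion, when $\G \neq \{0\}$ we have $d \ge 2$, and $\lceil d/4\rceil \le d/2$ for every such $d$ (check $d=2$ directly, and for $d \ge 3$ use $\lceil d/4\rceil \le (d+3)/4 \le d/2$), so the formula yields a value $\ge 1$.

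I expect the only genuine obstacle to be boundary bookkeeping rather than anything conceptual: one must be consistent about the half-open representative interval $(-N/2, N/2]$ --- in particular the element $N/2$ appearing exactly once when $N$ is even, which is a \emph{large} element and must not be conflated with $-N/2$ --- and one must confirm that the ceiling identity for the count of integers in $(-d/4, d/4)$ holds uniformly whether or not $4 \mid d$. Running the formula against a few small cases (e.g.\ $N=8$ with $\G = \Z_8$, with $\G = \{0,2,4,6\}$, and with $\G = \{0,4\}$) provides a sanity check and guards against an off-by-one.
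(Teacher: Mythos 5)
Your proposal is correct and takes essentially the same approach as the paper: both arguments reduce to the observation that the subgroup is $\{j\cdot(N/d):j\}$ with the induced norm scaling as $|j|\cdot(N/d)$, so the count becomes an elementary integer-interval count. The only cosmetic difference is that you count the complement $\{|j|<d/4\}$ and subtract, whereas the paper (after an equivalent rescaling to $\G=\Z_{|\G|}$) sums the two tails $\{|j|\ge |\G|/4\}$ directly; your boundary checks and small-case verifications all come out right.
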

\begin{proof}First, it suffices to consider $|\G|=N$, in other words $\G=\Z_N$: we can then lift to $N=t|\G|$, where $\G$ is embedded into $\Z_N$ by multiplying each element in $\G$ by $t$ (where multiplication is over the integers). Since $N$ is also multiplied by $t$, this preserves the number of elements with $|g|\geq N/4$.
	
	When $\G=\Z_N$, we are then simply asking for the number of elements in $[-\lfloor (|\G|-1)/2\rfloor,\lceil (|\G|-1)/2 \rceil]$ with absolute value at least $|\G|/4$. In other words, it is the combined size of the intervals $[\lceil |\G|/4\rceil,\lceil (|\G|-1)/2 \rceil]$ and $[-\lfloor (|\G|-1)/2\rfloor,-\lceil |\G|/4\rceil]$, giving a total of $\left(\lceil (|\G|-1)/2 \rceil-\lceil |\G|/4\rceil+1\right)+\left(\lfloor (|\G|-1)/2 \lfloor-\lceil |\G|/4\rceil+1\right)=|\G|+1-2\lceil|\G|/4\rceil$.
\end{proof}

\begin{lemma}\label{lem:far}Let $\Am\in\Z_N^{n\times m}$ be a matrix. Let $\G$ be the subgroup of $\Z_N^n$ generated by the columns of $\Am$. Let $B,C$ be positive integers such that $8BCm< N$. Suppose there is a distribution $\Ds$ on $[-B,B]^m$ such that $\Am\cdot \xv$ for $\xv\gets\Ds$ is negligibly close to uniform in $\G$. Then the function $f:\G\times[-C,C]\rightarrow\Z_N^m$ given by $f(g,\ev)=\Am^T\cdot\phi(g)+\ev$ is injective.
\end{lemma}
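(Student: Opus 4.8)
The plan is to argue by contradiction. Suppose $f(g_1,\ev_1)=f(g_2,\ev_2)$ with $(g_1,\ev_1)\neq(g_2,\ev_2)$. Set $g:=g_1-g_2\in\G$ and $\ev:=\ev_2-\ev_1$, an integer vector with every coordinate in $[-2C,2C]$. The collision then reads $\Am^T\phi(g)\equiv\ev\pmod N$. First I would note that $\Am^T\phi(g)$ is well-defined: every column $\av_j$ of $\Am$ lies in $\G$, hence $\wv\cdot\av_j\equiv 0$ for all $\wv\in W$, i.e. $\Am^T\wv\equiv\zerom$, so the value of $\Am^T h'$ does not depend on the choice of representative $h'\in\phi(g)$. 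Next I dispose of the case $g=0$: then $\Am^T\phi(g_1)=\Am^T\phi(g_2)$, so $\ev\equiv\zerom\pmod N$; since $8BCm<N$ with $B,m\ge 1$ forces $2C<N$, the only such $\ev$ is $\zerom$, giving $\ev_1=\ev_2$ and contradicting that the collision was nontrivial. So the real work is the case $g\neq 0$.

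The heart of the argument is to pair both sides of $\Am^T\phi(g)\equiv\ev\pmod N$ with a sample $\xv\gets\Ds$. Reading $\xv\cdot(\Am^T\phi(g))$ as $(\Am\xv)\cdot\phi(g)$, I introduce the group homomorphism $\psi:\G\to\Z_N$ defined by $\psi(a):=a\cdot\phi(g)$ (the natural pairing $\G\times(\Z_N^n/W)\to\Z_N$, which is well-defined for the same reason as above). On the other side, $\xv\cdot\ev$ is a small integer: $|\xv\cdot\ev|\le mB\cdot 2C=2BCm<N/4$ by hypothesis. Hence, with probability $1$ over $\xv\gets\Ds$, the canonical representative of $\psi(\Am\xv)\bmod N$ has absolute value strictly less than $N/4$. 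But $\Am\xv$ is negligibly close to uniform on $\G$, so by the data-processing inequality for statistical distance $\psi(\Am\xv)$ is negligibly close to $\psi$ applied to a uniform element of $\G$, which is exactly the uniform distribution on the subgroup $H:=\psi(\G)\le\Z_N$.

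To finish I need $H$ to contain a constant fraction of ``large'' elements (absolute value $\ge N/4$). Since $\phi$ is an isomorphism $\G\to\Z_N^n/W$ and the pairing is non-degenerate on the $\Z_N^n/W$ side — if $h+W$ pairs to $0$ with all of $\G$ then $h\in W$, so $h+W=0$ — the condition $g\neq 0$ gives $\phi(g)\neq 0$, hence $\psi\not\equiv 0$, hence $|H|\ge 2$. Now Lemma~\ref{lem:bigelements}, applied to the subgroup $H\le\Z_N$, says the number of $\eta\in H$ with $|\eta|\ge N/4$ equals $|H|+1-2\lceil|H|/4\rceil$; using $\lceil|H|/4\rceil\le(|H|+3)/4$ this count is at least $(|H|-1)/2$, so the fraction of large elements in $H$ is at least $(|H|-1)/(2|H|)\ge 1/4$ for every $|H|\ge 2$. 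Therefore the uniform distribution on $H$ places mass $\ge 1/4$ on $\{\,|\eta|\ge N/4\,\}$, while $\psi(\Am\xv)$ places mass $0$ there; the statistical distance is thus at least $1/4$, contradicting that it is negligible. This contradiction shows no nontrivial collision exists, i.e. $f$ is injective.

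I expect the main obstacle to be precisely this last step: making sure that the non-degeneracy of the pairing genuinely forces $H\neq\{0\}$ for \emph{every} nonzero $g$, and — crucially — that the resulting fraction of large elements is bounded below by an absolute constant rather than by something like $1/|\G|$, which ``negligibly close to uniform'' would not rule out. The displayed counting bound from Lemma~\ref{lem:bigelements} is exactly what pins this down. Everything else — well-definedness of $\Am^T\phi$, the estimate $2BCm<N/4$, the handling of $g=0$, and the data-processing inequality — is routine bookkeeping.
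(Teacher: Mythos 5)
Your proof is correct and follows essentially the same route as the paper's: reduce a collision to $\Am^T\phi(g)=-\ev$ with $g\neq 0$ and $\ev$ small, pair both sides with a random $\xv\gets\Ds$ to obtain $(\Am\xv)\cdot\phi(g)=-\xv\cdot\ev$, and derive a contradiction between the always-small right-hand side and the near-uniform-on-a-nontrivial-subgroup left-hand side via Lemma~\ref{lem:bigelements}. You are slightly more careful than the paper at two points it glosses over: you explicitly dispose of the $g=0$ case, and you pin down that the fraction of large elements is at least the absolute constant $1/4$ (the paper only remarks that the count $|\G'|+1-2\lceil|\G'|/4\rceil$ is positive, which by itself would not rule out a negligible fraction).
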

\begin{proof}Note that $\Am^T\cdot\phi(g)$ is well-defined since it is independent of the representative of $\phi(g)$. Consider a potential collision in $f$: $\Am^T\cdot\phi(g_1)+\ev_1=\Am^T\cdot\phi(g_2)+\ev_2$. By subtracting, this gives a non-zero pair $(g=g_1-g_2,\ev=\ev_1-\ev_2)$ where $\ev\in[-2C,2C]$ such that $\Am^T\cdot\phi(g)+\ev=0$ or equivalently $\Am^T\cdot\phi(g)=-\ev$. Now consider sampling $\xv\gets\Ds$, meaning $\uv=\Am\cdot\xv$ is negligibly close to uniform in $\G$. Then $\uv^T\cdot\phi(g)=\xv^T\cdot\Am^T\cdot\phi(g)=-\xv^T\cdot\ev$. On one hand, $\uv^T\cdot\phi(g)$ is statistically close to uniform in a subgroup $\G'$ of $\Z_N$, and $\G'$ is different from $\{0\}$ since $g\neq 0$. By Lemma~\ref{lem:bigelements}, the probability $|\uv^T\cdot\phi(g)|\geq N/4$ is $|\G'|+1-2\lceil|\G'|/4\rceil>0$ since $|\G'|\geq 2$. On the other hand, $|-\xv^T\cdot\ev|<2mBC\leq N/4$ always. This means the distributions of $\uv^T\cdot\phi(g)$ and $-\xv^T\cdot\ev$ must be non-negligibly far, a contradiction.
\end{proof}

\paragraph{Discrete Gaussians.} The \emph{discrete Gaussian distribution} is the distribution over $\Z$ defined as:
\[\Pr[x]=\Ds_\sigma(x):=C_\sigma e^{2\pi x^2/\sigma^2},\]
where $C_\sigma$ is the normalization constant $C_\sigma=\sum_{x\in\Z}e^{2\pi x^2/\sigma^2}$, so that $\Ds_\sigma$ defined a probability distribution. We will also define a truncated variant, denoted
\[\Ds_{\sigma,B}(x):=\begin{cases}C_{\sigma,B} e^{2\pi x^2/\sigma^2}&\text{ if }|x|\leq B\\0&\text{ otherwise}\end{cases},\]
where again $C_{\sigma,B}$ is an appropriately defined normalization constant. For large $B$, we can treat the truncated and un-truncated Gaussians as essentially the same distribution:
\begin{fact}For $\sigma\geq\omega(\sqrt{\log\lambda})$ and $B\geq \sigma\times\omega(\sqrt{\log\lambda})$, the distributions $\Ds_{\sigma}$ and $\Ds_{\sigma,B}$ are negligibly close
\end{fact}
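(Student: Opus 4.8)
The plan is to reduce the claim to a standard discrete-Gaussian tail bound. Writing $C_\sigma^{-1} = \sum_{x\in\Z}e^{-\pi x^2/\sigma^2}$ for the normalization constant (so that $\Ds_\sigma(x)=C_\sigma e^{-\pi x^2/\sigma^2}$), set $p := \Pr_{x\gets\Ds_\sigma}[\,|x|>B\,]=C_\sigma\sum_{|x|>B}e^{-\pi x^2/\sigma^2}$. Then $\Ds_{\sigma,B}(x)=\Ds_\sigma(x)/(1-p)$ for $|x|\le B$ and $\Ds_{\sigma,B}(x)=0$ otherwise, so $|\Ds_\sigma(x)-\Ds_{\sigma,B}(x)|$ equals $\Ds_\sigma(x)\cdot p/(1-p)$ for $|x|\le B$ and $\Ds_\sigma(x)$ for $|x|>B$. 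Summing gives $\sum_x|\Ds_\sigma(x)-\Ds_{\sigma,B}(x)| = (1-p)\cdot p/(1-p)+p = 2p$, hence the statistical distance is exactly $\Delta(\Ds_\sigma,\Ds_{\sigma,B})=p$. It therefore suffices to show $p=\negl(\lambda)$.

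Next I would bound $p$ by an elementary computation. Since the $x=0$ term alone gives $C_\sigma^{-1}\ge 1$, we have $C_\sigma\le 1$, so $p\le\sum_{|x|>B}e^{-\pi x^2/\sigma^2}\le 2\sum_{x>B}e^{-\pi x^2/\sigma^2}$. For integers $x>B$ we have $x^2> Bx$, so $e^{-\pi x^2/\sigma^2}\le e^{-\pi Bx/\sigma^2}$; summing the resulting geometric series (with ratio $e^{-\pi B/\sigma^2}<1$) yields
\[
p \;\le\; \frac{2\,e^{-\pi B^2/\sigma^2}}{1-e^{-\pi B/\sigma^2}}\;\le\; 2\Bigl(1+\tfrac{\sigma^2}{\pi B}\Bigr)\,e^{-\pi B^2/\sigma^2}.
\]

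Finally I would substitute the hypotheses. From $B\ge\sigma\cdot\omega(\sqrt{\log\lambda})$ we get $B^2/\sigma^2\ge\omega(\log\lambda)$, so $e^{-\pi B^2/\sigma^2}\le\lambda^{-\omega(1)}=\negl(\lambda)$; the prefactor $1+\sigma^2/(\pi B)$ is at most polynomial in $\lambda$ (for $\sigma,B$ of the sub-exponential size occurring in all our applications), and a polynomial times a negligible function is negligible. Hence $p=\negl(\lambda)$ and the fact follows. The proof is entirely routine; the only points needing a little care are establishing the identity $\Delta=p$ exactly rather than a weaker inequality, and checking that the denominator $1-e^{-\pi B/\sigma^2}$ in the geometric bound does not spoil negligibility when $\sigma$ is large (handled by the displayed rewriting). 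One may alternatively cite a standard discrete-Gaussian tail estimate (e.g.\ Banaszczyk-type) in place of the explicit geometric-series bound.
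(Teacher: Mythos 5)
The paper states this as an unproved \emph{fact}, so there is no in-paper proof to compare against; I will evaluate your argument on its own terms. Your overall structure is right and the statistical-distance identity $\Delta(\Ds_\sigma,\Ds_{\sigma,B})=p$ is correctly derived, but there is one genuine gap in the tail estimate: the bound $C_\sigma\le 1$ is far too crude. The normalizer $C_\sigma^{-1}=\sum_{x\in\Z}e^{-\pi x^2/\sigma^2}$ is not $\Theta(1)$ but $\Theta(\sigma)$ (by Poisson summation, $\sum_{x\in\Z}e^{-\pi x^2/\sigma^2}=\sigma\sum_{y\in\Z}e^{-\pi\sigma^2 y^2}\ge\sigma$, or just by a Riemann-sum comparison). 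Using $C_\sigma\le 1$ leaves you with the prefactor $\sigma^2/(\pi B)$, and your closing hedge --- that this is polynomial ``for $\sigma,B$ of the sub-exponential size occurring in all our applications'' --- does not actually save the bound: the hypotheses only give $e^{-\pi B^2/\sigma^2}\le\lambda^{-\omega(1)}$, and already for, say, $\sigma=\lambda^{\log\lambda}$ and $B=\sigma\log\lambda$ the prefactor $\sigma^2/(\pi B)\approx\sigma/\log\lambda$ dwarfs $\lambda^{-\omega(1)}$, so your displayed upper bound on $p$ is not even $o(1)$. The Fact as stated imposes no upper bound on $\sigma$, so this regime is not excluded.

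The fix is straightforward and worth making precise: replace $C_\sigma\le 1$ by $C_\sigma\le 1/\sigma$. Combined with your geometric-series estimate (in the slightly sharper form $\sum_{x>B}e^{-\pi x^2/\sigma^2}\le e^{-\pi B^2/\sigma^2}\sum_{k\ge 1}e^{-2\pi Bk/\sigma^2}\le e^{-\pi B^2/\sigma^2}\cdot\frac{\sigma^2}{2\pi B}$, using $x^2\ge B^2+2Bk$ and $1/(e^y-1)\le 1/y$), this gives
\[
p\;=\;2C_\sigma\sum_{x>B}e^{-\pi x^2/\sigma^2}\;\le\;\frac{\sigma}{\pi B}\,e^{-\pi B^2/\sigma^2}\enspace,
\]
in which the $\sigma$ in the numerator is cancelled. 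Since $\sigma/B\le 1/\omega(\sqrt{\log\lambda})\le 1$ and $e^{-\pi B^2/\sigma^2}\le\lambda^{-\omega(1)}$, this is negligible \emph{unconditionally} under the stated hypotheses, with no auxiliary size assumption on $\sigma$. (Citing a Banaszczyk-type discrete-Gaussian tail bound, as you suggest in passing, accomplishes the same thing; the key point either way is that the normalizer must not be discarded.) Everything else in your write-up --- the exact identity $\Delta=p$, the reduction $x^2>Bx$, the rewriting $1/(1-e^{-y})\le 1+1/y$ --- is correct.
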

\noindent For a vector $\rv\in\Z^m$, we write $\Ds_{\sigma,B}(\rv)=\prod_{i=1}^m\Ds_{\sigma,B}(r_i)$.

\medskip

\noindent The \emph{discrete Gaussian superposition} is the quantum state
\[|\Ds_\sigma\rangle:=\sum_{x\in\Z}\sqrt{\Ds_\sigma(x)}|x\rangle\enspace .\]
As we will generally need to restrict to finite-precision, we also consider the truncated variant
\[|\Ds_{\sigma,B}\rangle:=\sum_{x\in[-B,B]}\sqrt{\Ds_{\sigma,B}(x)}|x\rangle\enspace .\]
Again, for large enough $B$, we can treat the truncated and un-truncated Gaussian superpositions as essentially the same state:
\begin{fact}For $\sigma\geq\omega(\sqrt{\log\lambda})$ and  $B\geq \sigma\times\omega(\sqrt{\log\lambda})$, $\||\Ds_\sigma\rangle-|\Ds_{\sigma,B}\rangle\|$ is negligible.
\end{fact}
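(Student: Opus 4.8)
The plan is to bound the squared norm $\||\Ds_\sigma\rangle-|\Ds_{\sigma,B}\rangle\|^2$ directly and then take square roots. First I would introduce the tail mass $p:=\Pr_{x\gets\Ds_\sigma}[\,|x|>B\,]=\sum_{|x|>B}\Ds_\sigma(x)$. The preceding fact on the statistical closeness of $\Ds_\sigma$ and $\Ds_{\sigma,B}$ is (up to a factor) exactly the statement that $p$ is negligible; alternatively one can inline the standard sub-Gaussian tail estimate for the discrete Gaussian, which gives $p\le 2e^{-\Omega((B/\sigma)^2)}$, and this is $\negl(\lambda)$ precisely because $B\ge\sigma\times\omega(\sqrt{\log\lambda})$ forces $(B/\sigma)^2=\omega(\log\lambda)$.

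Next I would record the exact relationship between the truncated and untruncated weights. On the window $|x|\le B$ both $\Ds_\sigma$ and $\Ds_{\sigma,B}$ are proportional to the same Gaussian profile, and each is a probability distribution on its own support; since $\sum_{|x|\le B}\Ds_\sigma(x)=1-p$, this forces $\Ds_{\sigma,B}(x)=\Ds_\sigma(x)/(1-p)$ for $|x|\le B$ and $\Ds_{\sigma,B}(x)=0$ otherwise, hence $\sqrt{\Ds_{\sigma,B}(x)}=\sqrt{\Ds_\sigma(x)}/\sqrt{1-p}$ on the window. Splitting the squared norm into the contribution from the window and from the tail then gives
\[
\||\Ds_\sigma\rangle-|\Ds_{\sigma,B}\rangle\|^2=\sum_{|x|\le B}\left(\sqrt{\Ds_\sigma(x)}-\frac{\sqrt{\Ds_\sigma(x)}}{\sqrt{1-p}}\right)^2+\sum_{|x|>B}\Ds_\sigma(x)=(1-\sqrt{1-p})^2+p,
\]
where the first sum collapses after factoring out $\big(1-1/\sqrt{1-p}\big)^2$ and using $\sum_{|x|\le B}\Ds_\sigma(x)=1-p$. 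Using $\sqrt{1-p}\ge 1-p$ we get $(1-\sqrt{1-p})^2\le p^2\le p$, so the squared norm is at most $2p$, and therefore $\||\Ds_\sigma\rangle-|\Ds_{\sigma,B}\rangle\|\le\sqrt{2p}$, which is negligible.

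There is no real obstacle here; the argument is a short calculation. The only point needing a little care is the renormalization constant $C_{\sigma,B}\neq C_\sigma$ — one must not simply delete the out-of-window amplitudes and forget to renormalize — but the identity $\Ds_{\sigma,B}(x)=\Ds_\sigma(x)/(1-p)$ handles this cleanly, and the resulting perturbation of the in-window amplitudes is itself controlled by the same tail mass $p$.
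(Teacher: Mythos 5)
Your proof is correct. Note that the paper states this as a bare \emph{Fact} with no accompanying proof, so there is nothing in the text to compare against; your argument is the natural one and fills the gap cleanly. Setting $p:=\Pr_{x\gets\Ds_\sigma}[|x|>B]$ and using $\Ds_{\sigma,B}(x)=\Ds_\sigma(x)/(1-p)$ on the window (which, as you emphasize, correctly accounts for the renormalization $C_{\sigma,B}\neq C_\sigma$) gives the exact identity
\[
\big\||\Ds_\sigma\rangle-|\Ds_{\sigma,B}\rangle\big\|^2=(1-\sqrt{1-p})^2+p\le 2p,
\]
and the hypothesis $B\ge\sigma\cdot\omega(\sqrt{\log\lambda})$ makes $p$ negligible by the standard Gaussian tail estimate. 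One small cosmetic remark: the hypothesis $\sigma\ge\omega(\sqrt{\log\lambda})$ plays no role in your argument --- the identity and the tail bound hold for any $\sigma\ge 1$ --- so the statement is true under the weaker assumption on $B$ alone; the paper presumably carries the extra hypothesis along for uniformity with the neighboring facts.
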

\noindent By adapting classical lattice sampling algorithms, the states $|\Ds_{\sigma,B}\rangle$ can be efficiently constructed.

\paragraph{Fourier transform pairs.} Fix an integer $N$. We will associate the set $\Z_N$ with the integers $[-\lfloor (N-1)/2\rfloor,\lceil (N-1)/2\rceil]$. Denote by $\QFT_N$ the Quantum Fourier Transform $\QFT_{\Z_N}$. We now recall some basic facts about quantum Fourier transforms. 

\begin{align*}
	\QFT_N^m\sum_{\rv\in\Z_N^m:\Am\cdot\rv=\sv}|\rv\rangle&=N^{m/2-n}\sum_{\tv\in\Z_N^n}e^{i2\pi \tv\cdot\sv/N }|\Am^T\cdot \tv\rangle\hspace{0.5em}\text{ for }\hspace{0.5em}\Am\in\Z_N^{n\times m}\\
	\QFT_N^m\sum_{\rv}\alpha_\rv\beta_\rv|\rv\rangle&=\frac{1}{N^{m/2}}\sum_{\tv,\uv}\hat{\alpha}_\tv\hat{\beta}_\uv|\uv+\tv\rangle\hspace{0.5em}\text{ for }\hspace{0.5em}\begin{subarray}{l}\sum_\tv\hat{\alpha}_\tv|\tv\rangle=\QFT_N^m\sum_\rv\alpha_\rv|\rv\rangle\\\sum_\uv\hat{\beta}_\uv|\uv\rangle=\QFT_N^m\sum_\rv\beta_\rv|\rv\rangle\end{subarray}\\
	\QFT_N |\Ds_{\sigma,\lfloor (N-1)/2\rfloor}\rangle&\approx|\Ds_{N/\sigma,\lfloor (N-1)/2\rfloor}\rangle\hspace{0.5em}\text{ for }\hspace{0.5em}\begin{subarray}{c}N\geq \sigma\times\omega(\sqrt{\log\lambda})\\ \sigma\geq\omega(\sqrt{\log\lambda})\end{subarray}
\end{align*}
Above, $\approx$ means the two states are negligibly close.

\subsection{The Construction}

Let $\G_\lambda,\Xs_\lambda,*$ be a REGA, and $\Ts=(g_1,\dots,g_m)$ a set such that $*$ can be efficiently computed for $g_i$ and $g_i^{-1}$. We can associate $\G_\lambda$ with a subgroup of $\Z_N^n$ for some integers $N,n$. We can likewise associate the list $\Ts$ with the matrix $\Am=(g_1,\cdots,g_m)\in\Z_N^{n\times m}$.

Since we can only compute the action of certain group elements, this will significantly complicate our construction. There are several issues that need to be resolved.
\begin{itemize}
	\item For both minting and verification of our original scheme, we needed the ability to apply the group action on \emph{random} group elements, which is not possible in REGAs. Our solution, following typical applications of REGAs in the literature, is to choose our random group element as a ``small'' known combination of the base group elements $g=\sum_{i=1}^m r_i g_i$ where the $r_i$ are small integers. Under mild assumptions, $g$ will be uniform, and using the representation as a small combination of the $g_i$ we can efficiently compute the action by $g$.
	\item Unfortunately, the $r_i$ are now side-information entangled with $g$ which is hard to un-compute. If the $r_i$ are left around, it they will be entangled with the banknote which will break the correctness of the scheme. Our solution is to actually treat the $r_i$ as the group element, and perform the QFT on the $r_i$ instead of on $g$. This results in a number of complications, one being that the serial number is actually now hidden, and a different quantity must be used as the serial number. This quantity also turns out to be noisy. Nevertheless, by careful analysis, we are able to show our scheme is correct, and explain how to adapt the security proof from Section~\ref{sec:qlightningsec} to our REGA scheme.
\end{itemize}

We now give the details. We will make the following assumption about the structure of $\Ts$, which is typical in the isogeny literature.

\begin{assumption}\label{assump:uniform} There is a polynomial $B$ and a distribution $\Ds^*$ on $[-B,B]^m$ such that for $\xv\gets\Ds$, $\sum_{i=1}^m x_i g_i=\Am\cdot\xv$ is statistically close to a uniform element in $\G$. \end{assumption}
Numerous examples of such $\Ds^*$ have been proposed, such as discrete Gaussians~\cite{EC:DeFGal19}, or uniform vectors in small balls relative to different norms~\cite{AC:CLMPR18,EPRINT:NOTT20}.

Let $C= N/8Bm$, which then satisfies the conditions of Lemma~\ref{lem:far}. Thus, for $\ev$ with entries in $[-C,C]^m$, the map $(g,\ev)\mapsto\Am^T\cdot\phi(g)+\ev$ is injective.

Let $\sigma\geq 16Bm/\epsilon\times\omega(\sqrt{\log\lambda})$ and $B'\geq \sigma\times\omega(\sqrt{\log\lambda})$ be polynomials. We will assume $N\geq 2B'$, which is always possible since we can take $N$ to be arbitrarily large. We will also for simplicity assume $N$ is even. This assumption is not necessary but will simplify some of the analysis, and is moreover without loss of generality since we can always make $N$ larger by multiplying it by arbitrary factors.

\begin{construction}\label{constr:alt} Construct $(\gen,\ver)$ as follows:
	\begin{itemize}
		\item $\gen(1^\lambda)$: Initialize quantum registers $\Ss$ (for serial number) and $\Ms$ (for money) to states $|\Ds_{\sigma,B'}\rangle^{\otimes m}_\Ss$ and $|0\rangle_\Ms$, respectively. Then do the following:
		\begin{itemize}
			\item Apply in superposition the map $|\rv\rangle_\Ss|y\rangle_\Ms\mapsto |\rv\rangle_\Ss|y\oplus [(\sum_{i=1}^m r_i g_i)*x_\lambda])\rangle_\Ms$. The joint state of the system $\Ss\otimes\Ms$ is then \[\sum_{\rv\in\Z_N^m}\sqrt{\Ds_{\sigma',B}(\rv)}|\rv\rangle_\Ss|(\sum_{i=1}^m r_i g_i)*x_\lambda\rangle_\Ms=\sum_{g\in\G_\lambda}\left(\sum_{\rv\in\Z_N^m:\Am\cdot\rv=g}\sqrt{\Ds_{\sigma,B'}(\rv)}|\rv\rangle_\Ss\right)|g*x_\lambda\rangle_\Ms\enspace .\]
			
			\item Apply $\QFT_{\Z_N^m}$ to $\Ss$. Using the QFT rules given above, this yields the state negligibly close to:
			\begin{align*}
				&\frac{1}{N^n}\sum_{g\in\G_\lambda}\left(\sum_{\sv,\ev\in\Z_N^n}\sqrt{\Ds_{N/\sigma,N/2-1}(\ev)}e^{i2\pi (g\cdot \sv)}|\Am^T\cdot\sv+\ev\rangle_\Ss\right)|g*x_\lambda\rangle_\Ms\\
				&=\frac{1}{|\G_\lambda|}\sum_{g\in\G_\lambda}\left(\sum_{h\in\G_\lambda,\ev\in\Z_N^n}\sqrt{\Ds_{N/\sigma,N/2-1}(\ev)}e^{i2\pi (g\cdot\phi(h))}|\Am^T\cdot\phi(h)+\ev\rangle_\Ss\right)|g*x_\lambda\rangle_\Ms\\
				&=\frac{1}{\sqrt{|\G_\lambda|}}\sum_{g\in\G_\lambda}\left(\frac{1}{\sqrt{|\G_\lambda|}}\sum_{h\in\G_\lambda,\ev\in\Z_N^n}\sqrt{\Ds_{N/\sigma,N/2-1}(\ev)}\chi(g,h)|\Am^T\cdot\phi(h)+\ev\rangle_\Ss\right)|g*x_\lambda\rangle_\Ms\enspace .
			\end{align*}
			\item Measure $\Ss$, giving the serial number $\tv:=\Am^T\cdot\phi(h)+\ev$. $\ev$ is distributed negligibly close to $\Ds_{N/\sigma}$, meaning with overwhelming probability each entry is in $[-N/16Bm,N/16Bm]= [- C/2,C/2]\subseteq [-C,C]$. This means, to within negligible error, $\tv$ uniquely determines $\phi(h)$ and hence $h$.	Therefore, the $\Ms$ register then collapses to a state negligibly close to \[\frac{1}{\sqrt{|\G_\lambda|}}\sum_{g\in\G_\lambda}\chi(g,h)|g*x_\lambda\rangle_\Ms=:|\G_\lambda^h*x_\lambda\rangle\enspace .\]
			Note that $h$ is unknown. Output $(\tv,|\G_\lambda^h*x_\lambda\rangle)$.
		\end{itemize}
		
		\item $\ver(\tv,\$):$ First verify that the support of $\$$ is contained in $\Xs_\lambda$, by applying the assumed algorithm for recognizing $\Xs_\lambda$ in superposition. Then repeat the following $\lambda$ times:
		\begin{itemize}
			\item Initialize a new register $\Hs$ to $(|0\rangle_\Hs+|1\rangle_\Hs)/\sqrt{2}$. 
			\item Choose a random element $\xv\gets\Ds^*$.
			\item Apply to $\Hs\otimes\Ms$ in superposition the map
			\[{\sf Apply}|b\rangle_\Hs|y\rangle_\Ms\mapsto \begin{cases}|0\rangle_\Hs|y\rangle_\Ms&\text{ if }b=0\enspace ,\\|1\rangle_\Hs|(-\sum_i x_i g_i)*y\rangle_\Ms\enspace&\text{ if }b=1\enspace .\end{cases}\]
			
			Since the entries of $\xv$ are bounded by $B$ which is polynomial, this step is efficient. 
			\item Measure $\Hs$ in the basis $B_{\tv,\xv}:=\{(|0\rangle_\Hs+e^{i2\pi \xv^T\cdot\tv/N}|1\rangle_\Hs)/\sqrt{2},(|0\rangle_\Hs-e^{i2\pi \xv^T\cdot\tv/N}|1\rangle_\Hs)/\sqrt{2}\}$, giving a bit $b_\xv\in\{0,1\}$. Discard the $\Hs$ register. 
			\item Accept if at least a fraction $7/8$ of the $b_\xv=0$ and the support of $\$$ is contained in $\Xs_\lambda$; otherwise reject.
		\end{itemize}
	\end{itemize}
\end{construction}

\subsection{Accepting States of the Verifier}

We now analyze the correctness of the construction.

\begin{theorem}\label{thm:rejectalternate} Let $|\psi\rangle$ be a state over $\Ms$. Then $\Pr[\ver(h,|\psi\rangle)=1]=\|\langle\psi |\G_\lambda^h*x_\lambda\rangle \|^2(1-2^{-\Omega(\sqrt{\lambda})})\pm 2^{-\Omega(\sqrt{\lambda})}$.
\end{theorem}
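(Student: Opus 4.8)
The plan is to mirror the proof of Theorem~\ref{thm:reject}, but tracking the two sources of error that are peculiar to the REGA construction: the noise term $\ev$ in the serial number, and the fact that verification now uses a $\lambda$-fold repetition of a one-bit test (with a $7/8$ threshold) rather than a single exact projection. First I would decompose an arbitrary $|\psi\rangle$ supported on $\Xs_\lambda$ in the orthonormal basis $\{|\G_\lambda^{h'}*x_\lambda\rangle\}_{h'}$, writing $|\psi\rangle=\sum_{h'}\alpha_{h'}|\G_\lambda^{h'}*x_\lambda\rangle$ with $\sum_{h'}\|\alpha_{h'}\|^2=1$, exactly as in the earlier proof (the orthogonality lemma carries over verbatim since it only uses properties of $\chi$). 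The part of $|\psi\rangle$ not supported on $\Xs_\lambda$ is killed by the ${\sf Recog}$ step up to negligible error, so I may assume $|\psi\rangle$ is of this form.

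Next I would analyze a single iteration of the inner loop. Fixing the sampled $\xv\gets\Ds^*$ and writing $g_\xv=\sum_i x_i g_i=\Am\cdot\xv$, the ${\sf Apply}$ map sends $\frac{1}{\sqrt2}(|0\rangle+|1\rangle)|\G_\lambda^{h'}*x_\lambda\rangle$ to $\frac{1}{\sqrt2}(|0\rangle|\G_\lambda^{h'}*x_\lambda\rangle+|1\rangle\,(-g_\xv)*|\G_\lambda^{h'}*x_\lambda\rangle)$, and the same substitution computation as in the proof of Theorem~\ref{thm:reject} shows $(-g_\xv)*|\G_\lambda^{h'}*x_\lambda\rangle=\chi(g_\xv,h')\,|\G_\lambda^{h'}*x_\lambda\rangle=e^{i2\pi g_\xv\cdot\phi(h')/N}|\G_\lambda^{h'}*x_\lambda\rangle$. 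Hence on the $h'$-component the $\Hs$ register becomes $\frac{1}{\sqrt2}(|0\rangle+e^{i2\pi\xv^T\Am^T\phi(h')/N}|1\rangle)$, and the measurement in basis $B_{\tv,\xv}$ returns $b_\xv=0$ with conditional probability $\frac14\|1+e^{i2\pi\xv^T(\tv-\Am^T\phi(h'))/N}\|^2=\cos^2\!\big(\pi\,\xv^T(\tv-\Am^T\phi(h'))/N\big)$. When $h'=h$, the exponent is $\xv^T\ev/N$ where $\ev$ is the (small, Gaussian-distributed) noise in $\tv$: since $\|\xv\|_\infty\le B$ and $\|\ev\|_\infty\le C/2 = N/16Bm$ except with negligible probability, we get $|\xv^T\ev|\le mB\cdot N/16Bm = N/16$, so $\cos^2(\pi\xv^T\ev/N)\ge\cos^2(\pi/16)\ge 7/8$ — actually I would want a cleaner constant, so I would instead use the tighter bound that $\ev$ is distributed as $\Ds_{N/\sigma}$ with $\sigma$ chosen large (so $N/\sigma$ is small relative to the relevant scale), giving $|\xv^T\ev|\le N/(16\cdot\text{poly})$ with overwhelming probability and hence the single-iteration acceptance probability on the honest component at least $1-2^{-\Omega(\lambda)}$ per iteration, so all $\lambda$ iterations keep $b_\xv=0$ with probability $1-2^{-\Omega(\lambda)}$ overall. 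For $h'\ne h$, averaging $\cos^2(\pi\xv^T(\tv-\Am^T\phi(h'))/N)$ over $\xv\gets\Ds^*$: since $\Am\cdot\xv$ is statistically close to uniform in $\G_\lambda$ (Assumption~\ref{assump:uniform}) and $\tv-\Am^T\phi(h')$ determines a nonzero element of $\Z_N$ when paired against $\G_\lambda$ (because $\phi(h)\neq\phi(h')$, using Lemma~\ref{lem:far}-style injectivity plus Lemma~\ref{lem:bigelements}), the phase $e^{i2\pi\xv^T(\tv-\Am^T\phi(h'))/N}$ is close to uniform on a nontrivial subgroup of the circle, so the expectation of $\cos^2$ is at most $\tfrac12+o(1)$, hence well below $7/8$; a Chernoff bound over the $\lambda$ independent iterations then makes the probability that at least a $7/8$-fraction of the $b_\xv$ equal $0$ at most $2^{-\Omega(\lambda)}$.

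Finally I would assemble these pieces: measuring $|\psi\rangle$ against the basis, with probability $\|\alpha_h\|^2$ we land on the honest component, on which all iterations accept except with probability $2^{-\Omega(\lambda)}$; with probability $\|\alpha_{h'}\|^2$ for some $h'\ne h$ we land on a bad component, which is accepted only with probability $2^{-\Omega(\lambda)}$; and cross terms need to be handled because the iterations do not actually collapse $|\psi\rangle$ onto a single basis vector at the start. To deal with this cleanly I would argue as in the (struck-out) iterated version of the Theorem~\ref{thm:reject} proof: maintain the invariant that after each iteration the coefficient $\alpha_{h'}$ is multiplied by $\frac{1+e^{i\theta_{h',\xv}}}{2}$ (up to renormalization), so after $\lambda$ iterations with a fixed transcript $(\xv_1,\dots,\xv_\lambda,b_1,\dots,b_\lambda)$ the surviving weight on component $h'$ is $\|\alpha_{h'}\|^2\prod_j\cos^2(\theta_{h',\xv_j}/2)$ (when all $b_j=0$); summing over transcripts and over $h'$, and using the per-iteration bounds above together with a Chernoff/Hoeffding estimate for the product concentrating near $1$ for $h'=h$ and decaying exponentially for $h'\ne h$, gives $\Pr[\ver(h,|\psi\rangle)=1]=\|\alpha_h\|^2(1-2^{-\Omega(\sqrt\lambda)})\pm 2^{-\Omega(\sqrt\lambda)}$, and $\|\alpha_h\|^2=\|\langle\psi|\G_\lambda^h*x_\lambda\rangle\|^2$. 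The main obstacle I anticipate is getting the error exponent right: the $7/8$ threshold and the averaging argument for $h'\ne h$ only give a constant gap between the good and bad single-iteration acceptance probabilities, and the $2^{-\Omega(\sqrt\lambda)}$ (rather than $2^{-\Omega(\lambda)}$) in the statement suggests the bottleneck is not the Chernoff bound but rather the Gaussian tail / statistical-closeness parameters (the $\sigma\ge 16Bm/\epsilon\cdot\omega(\sqrt{\log\lambda})$ choice), so I would need to track carefully how the closeness of $\Am\cdot\xv$ to uniform and the truncation errors of $|\Ds_{\sigma,B'}\rangle$ propagate, and confirm that $\sqrt\lambda$ (not $\lambda$) is the honest exponent one can afford.
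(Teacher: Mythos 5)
Your overall plan matches the paper's proof: decompose in the basis $\{|\G_\lambda^{h'}*x_\lambda\rangle\}$, compute the single-iteration acceptance probability $\tfrac12\bigl(1+\cos\bigl[2\pi(\xv^T\Am^T\phi(h'-h)+\xv^T\ev)/N\bigr]\bigr)$, bound it above $7/8$ for $h'=h$ and below $7/8$ for $h'\neq h$, and then apply concentration. The paper handles the extension from basis states to general $|\psi\rangle$ more tersely than you do (``linearity of the verifier''), which is justified because ${\sf Apply}$ has every $|\G_\lambda^{h'}*x_\lambda\rangle$ as an eigenvector, so the entire acceptance POVM is diagonal in that basis and acceptance probability is exactly $\sum_{h'}\|\alpha_{h'}\|^2 p_{h'}$. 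Your more careful tracking of the collapse is not wrong, just unnecessary once this observation is made.

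There is one genuine error. Your ``tighter bound'' for the honest component is not available: $\sigma$ is a \emph{polynomial} (the paper fixes $\sigma\geq 16Bm/\epsilon\cdot\omega(\sqrt{\log\lambda})$), so $|\xv^T\ev|$ is only guaranteed to be bounded by $N/16$ (or, with Gaussian tails, perhaps by $N/(16\cdot\mathrm{polylog})$), never by $N/(16\cdot 2^{\Omega(\lambda)})$. Consequently the per-iteration honest acceptance probability is a constant close to $1$ (the paper computes $\geq 1-\pi^2/256$), not $1-2^{-\Omega(\lambda)}$. You must therefore keep the $7/8$ threshold and get the $2^{-\Omega(\sqrt{\lambda})}$ from a Chernoff bound over the $\lambda$ repetitions, exactly as the paper does; there is no route that avoids the threshold comparison. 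Relatedly, your first, discarded estimate $\cos^2(\pi/16)\ge 7/8$ for the $h'=h$ case was already the correct one to use.

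Two smaller cautions. First, for $h'\neq h$, your claimed bound $\E_\xv[\cos^2]\leq\tfrac12+o(1)$ is cleaner than what the paper proves but requires care: the two terms $\xv^T\Am^T\phi(h-h')$ and $\xv^T\ev$ are both functions of the same $\xv$ and hence correlated, so you cannot simply factor $\E[e^{i2\pi g/N}]=0$ against a fixed phase. The paper sidesteps this by combining Lemma~\ref{lem:bigelements} (with probability $\geq 1/2-\negl$ the dominant term has absolute value $\geq N/4$) with the deterministic bound $|\xv^T\ev|\leq N/16$, yielding the weaker but safe estimate $\Pr[b_\xv=0]\leq 3/4+\pi/32<7/8$, which is all that is needed. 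Second, regarding your puzzlement about $\sqrt{\lambda}$ versus $\lambda$: the bottleneck is not the Gaussian truncation or the statistical closeness in Assumption~\ref{assump:uniform} (those contribute only $\negl(\lambda)$), so the exponent coming out of Chernoff over $\lambda$ i.i.d.\ trials with a constant gap around the $7/8$ threshold is in fact $\Omega(\lambda)$; the statement's $2^{-\Omega(\sqrt{\lambda})}$ is simply a weaker claim than what the concentration argument delivers, and your proof should not contort itself to explain the $\sqrt{\lambda}$.
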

\begin{proof}
	For simplicity, we analyze the case of $|\psi\rangle=|\G_\lambda^{h'}*x_\lambda\rangle$, which form a basis for superpositions over $\Xs_\lambda$. In this case, Theorem~\ref{thm:rejectalternate} states that $|\G_\lambda^h*x_\lambda\rangle$ is accepted with probability $1-2^{\Omega(\sqrt{\lambda})}$, while $|\G_\lambda^{h'}*x_\lambda\rangle$ for $h'\neq h$ is accepted with probability $2^{\Omega(\sqrt{\lambda})}$. Linearity of the verifier allows us to extend to all possible states.
	
	If we let $u=\Am\cdot\xv=\sum_i x_i g_i$, then by the same analysis as in Construction~\ref{constr:main}, we have that applying ${\sf Apply}$ to the state $|\G_\lambda^{h'}*x_\lambda\rangle$ results in the state
	\begin{align*}
		&\frac{1}{\sqrt{2}}\left(|0\rangle_\Hs+\chi(u,h')|1\rangle_\Hs\right)|\G_\lambda^{h'}*x_\lambda\rangle\\
		&=\frac{1}{\sqrt{2}}\left(|0\rangle_\Hs+e^{i2\pi u\cdot\phi(h')/N}|1\rangle_\Hs\right)|\G_\lambda^{h'}*x_\lambda\rangle\\
		&=\frac{1}{\sqrt{2}}\left(|0\rangle_\Hs+e^{i2\pi \xv^T\cdot\Am^T\cdot\phi(h')/N}|1\rangle_\Hs\right)|\G_\lambda^{h'}*x_\lambda\rangle\enspace .
	\end{align*}
	Conditioned on sampling $\xv$, the probability $\Pr[b_\xv=0]$ is the inner product squared of \[\left(|0\rangle_\Hs+e^{i2\pi \xv^T\cdot\Am^T\cdot\phi(h')/N}|1\rangle_\Hs\right)/\sqrt{2}\] with the basis state \[\left(|0\rangle_\Hs+e^{i2\pi \xv\cdot\tv/N}|1\rangle_\Hs\right)/\sqrt{2}\enspace .\]
    
    This probability therefore evaluates to:
	\begin{align*}
		\Pr[b_\xv=0]&=\frac{1}{4}\left\|1+e^{i2\pi (\xv^T\cdot\Am^T\cdot\phi(h')-\xv^T\cdot\tv)/N}\right\|^2\\
		&=\frac{1}{2}\left(1+\cos\left[2\pi (\xv^T\cdot\Am^T\cdot\phi(h')-\xv^T\cdot(\Am^T\cdot\phi(h)+\ev))/N\right]\right)\\
		&=\frac{1}{2}\left(1+\cos\left[2\pi (\xv^T\cdot\Am^T\cdot\phi(h'-h)+\xv^T\cdot\ev)/N\right]\right)\enspace .
	\end{align*}

	In the case $h=h'$, $\Pr[b_\xv=0]=\frac{1}{2}\left(1+\cos\left[2\pi\xv^T\cdot\ev/N\right]\right)$. We have that $|2\pi\xv^T\cdot\ev/N|\leq \pi /8$. Using the fact that $\cos(x)\geq 1-x^2/2$, we therefore have that $\Pr[b_\xv=0]\geq 1-\pi^2/256=0.9614\ldots=7/8+\Omega(1)$. Then via standard concentration inequalities, after $\lambda$ trials, except with probability $2^{-\Omega(\sqrt{\lambda})}$, at least $7/8$ of the $b_\xv$ will be 0. Therefore, $\ver$ accepts with probability $1-2^{-\Omega(\sqrt{\lambda})}$.
	
	On the other hand, if $h\neq h'$, then $\xv^T\Am^T$ is statistically close to uniform in $\G_\lambda$, and so $\xv^T\cdot\Am^T\cdot\phi(h'-h)$ is statistically close to uniform in a non-trivial subgroup $\G'$ of $\Z_N$. By Lemma~\ref{lem:bigelements} and our assumption that $N$ is even, at least half of the elements of $\Z_N$ are at least $N/4$ in absolute value. In particular, this means $\Pr[|\xv^T\cdot\Am^T\cdot\phi(h'-h)|\geq N/4]\geq 1/2-\negl$. On the other hand, $|\xv^T\cdot\ev|\leq N/16$ always. This means $\|\xv^T\cdot\Am^T\cdot\phi(h'-h)+\xv^T\cdot\ev\|\geq N/4-N/16$ with probability at least $1/2-\negl$. In this case, we can use that $\cos(\pi/2+x)\leq |x|$ to bound $\cos\left[2\pi (\xv^T\cdot\Am^T\cdot\phi(h'-h)+\xv^T\cdot\ev)/N\right]\leq 2\pi/16=\pi/8$, meaning $\Pr[b_\xv=0]\leq 1/2+\pi/16$. Averaging over all $\xv$, we therefore have that: $\Pr[b_\xv=0]\leq \frac{3}{4}+\pi/32+\negl=0.8481\ldots=7/8-\Omega(1)$. Then via standard concentration inequalities, after $\lambda$ trials, except with probability $2^{-\Omega(\sqrt{\lambda})}$, fewer than $7/8$ of the $b_\xv$ will be 0. Therefore, $\ver$ accepts with probability $2^{-\Omega(\sqrt{\lambda})}$.
\end{proof}

\subsection{Security}

Here, we state the security of Construction~\ref{constr:alt}. 

\paragraph{Assumptions.} We first need to define slight variants of our assumptions, in order to be consistent with the more limited structure of a REGA. For example, in the ordinary Discrete Log assumption (Assumption~\ref{def:dlog}), the challenger computes $y=g*x$ for a random $g$, and adversary produces $g$. But the adversary cannot even tell if it succeeded since it cannot compute the action of $g$ in general. Instead, the adversary is required not to compute $g$, but instead to compute any short $\xv$ such that $g=\sum_i x_i g_i$. The adversary can then check that it has a solution by computing the action of $g$ using its knowledge of $\xv$. We analogously update each of our assumptions to work with the limited ability to compute the group action on REGAs.

As above, let $\G_\lambda,\Xs_\lambda,*$ be a REGA, and $\Ts=(g_1,\dots,g_m)$ a set such that $*$ can be efficiently computed for $g_i$ and $g_i^{-1}$. Let $\Ds^*,B$ be as in Assumption~\ref{assump:uniform}.

\begin{assumption}\label{def:REGAqkgea} The \emph{REGA quantum knowledge of group element assumption} (REGA-Q-KGEA) holds on a group action $(\G,\Xs,*)$ if the following is true. For any quantum polynomial time (QPT) adversary $\As$ which performs no measurements except for its final output, there exists a polynomial $C$, a QPT extractor $\Es$ with outputs in $[-C,C]^m$, and negligible $\epsilon$ such that 
	\[\Pr\left[y\in\Xs\wedge y\neq g*x_\lambda:\substack{(y,|\psi\rangle)\gets\As(1^\lambda)\\\xv\gets\Es(y,|\psi\rangle)\\g\gets\sum_i x_i g_i}\right]\leq\epsilon(\lambda)\enspace .\]
\end{assumption}

As with the non-REGA Q-KGEA assumption, we expect the REGA-Q-KGEA assumption is likely false. Certainly it is false on group actions with oblivious sampling. However, we note that it is unclear if our attack from Theorem~\ref{thm:kgeaattack} can be adapted to REGAs. Nevertheless, to mitigate any risks associated with the plain REGA-Q-KGEA assumption, we can likewise define a \emph{modified} REGA KGEA assumption (REGA-Q-mKGEA), in the same spirit as Assumption~\ref{def:qmkgea}.

We next define our REGA analog of Assumption~\ref{def:dlogminimalcdh}.

\begin{assumption}\label{def:REGAdlogminimalcdh} We say that the \emph{REGA Discrete Log with a single minimal CDH query} assumption (REGA-DLog/1-minCDH) assumption holds if the following is true. For any QPT adversary $\As$ playing the following game, parameterized by $\lambda$, there is a negligible $\epsilon$ such that $\As$ wins with probability at most $\epsilon(\lambda)$:
	\begin{itemize}
		\item The challenger, on input $\lambda$, chooses a random $g\in\G_\lambda$. It sends $\lambda$ to $\As$
		\item $\As$ submits a superposition query $\sum_{y\in\Xs,z\in\{0,1\}^*}\alpha_{y,z}|y,z\rangle$. Here, $y$ is a set element that forms the query, and $z$ is the internal state of the adversary when making the query. The challenger responds with $\sum_{y\in\Xs,z\in\{0,1\}^*}\alpha_{y,z}|(-g)*y,z\rangle$. 
		\item The challenger sends $g*x$ to $\As$.
		\item $\As$ outputs a $\xv\in\Z^m$, encoded in unary. It wins if $g=\sum_i x_i g_i$.
	\end{itemize}
\end{assumption}

Note that the challenger in Assumption~\ref{def:REGAdlogminimalcdh} is inefficient on a REGA. However, under Assumption~\ref{assump:uniform}, the challenger can be made efficient by first sampling $\yv\gets\Ds^*$ and then computing $g=\sum_i y_i g_i$.

\begin{theorem}\label{thm:alt} Assuming REGA-DLog/1-minCDH (Assumption~\ref{def:REGAdlogminimalcdh}) and REGA-Q-KGEA (Assumption~\ref{def:REGAqkgea}) (or more generally, REGA-Q-mKGEA) both hold on a group action $(\G,\Xs,*)$, then Construction~\ref{constr:alt} is a quantum lightning scheme. Alternatively, if D2X/min (Assumption~\ref{def:D2X/min}) holds on a group action with $\Xs\subseteq\{0,1\}^m$, then Construction~\ref{constr:alt} is a quantum lightning scheme in the generic group action model $\GGAM_{\G,m'}$ with label length $m'$.
\end{theorem}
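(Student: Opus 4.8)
The plan is to reduce to the two security proofs already established for Construction~\ref{constr:main} --- Theorem~\ref{thm:main2} for the knowledge-assumption route and Theorem~\ref{thm:main} for the GGAM route --- with Theorem~\ref{thm:rejectalternate} and Lemma~\ref{lem:far} supplying the bridge needed to cope with the noisy serial numbers and the more restricted REGA verifier. Concretely, the first step is to upgrade the analysis behind Theorem~\ref{thm:rejectalternate} so that it applies to an \emph{arbitrary} serial number $\tv\in\Z_N^m$, since in the lightning game the adversary chooses $\tv$ and it need not have the honest form $\Am^T\cdot\phi(h)+\ev$ with a short $\ev$. Repeating the verifier computation for the state $|\G_\lambda^{h'}*x_\lambda\rangle$ gives $\Pr[b_\xv=0]=\tfrac12\bigl(1+\cos[2\pi\,\xv^T\cdot(\Am^T\cdot\phi(h')-\tv)/N]\bigr)$; if two distinct candidates $h_1,h_2$ were both accepted by $\ver(\tv,\cdot)$ with non-negligible probability, then $\xv^T\cdot\Am^T\cdot(\phi(h_1)-\phi(h_2))=(\Am\cdot\xv)^T\cdot\phi(h_1-h_2)$ would be concentrated near $0\bmod N$, yet $\Am\cdot\xv$ is statistically close to uniform in $\G$ and $h_1-h_2\neq 0$, so by Lemma~\ref{lem:bigelements} this quantity exceeds $N/4$ in absolute value with constant probability --- a contradiction. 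Hence for every $\tv$ there is at most one serial number $h=h(\tv)$ such that $\ver(\tv,\cdot)$ accepts with non-negligible probability, and, by the same concentration/linearity reasoning as in Theorem~\ref{thm:rejectalternate}, it accepts only states negligibly close to $|\G_\lambda^{h}*x_\lambda\rangle$, essentially projecting onto that state. Thus a lightning adversary outputting $\tv$ together with two banknotes passing $\ver(\tv,\cdot)$ yields, up to negligible error, two copies of $|\G_\lambda^{h(\tv)}*x_\lambda\rangle$.

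\emph{The knowledge-assumption route.} With this in hand the argument tracks the proof of Theorem~\ref{thm:main2} step for step. I would amplify the success probability to $1-2^{-\Theta(\lambda)}$ by repetition, apply REGA-Q-mKGEA to obtain an adversary $\As$ whose two outputs are essentially two copies of $|\G_\lambda^{h}*x_\lambda\rangle$, purify it, measure the second money register to get a uniformly random $g_2*x_\lambda$, and let $|\psi_{g_2*x_\lambda}\rangle$ be the collapsed state. The walking operator ${\sf Map}(g,\cdot)\colon y\mapsto(-g)*y$ in superposition is exactly the single minimal-CDH query of Assumption~\ref{def:REGAdlogminimalcdh}, and when $g$ is presented as a short combination $\xv$ (the extractor's output) it is also directly implementable on a REGA. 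Applying REGA-Q-KGEA --- or, more safely, REGA-Q-mKGEA --- to the algorithm ``build $|\psi\rangle$, measure $\Ms_2$'' yields an extractor $\Es$ recovering a short $\xv$ with $g_2=\sum_i x_i g_i$; a random self-reduction via ${\sf Map}$ makes this worst-case, and the Gentle Measurement Lemma lets us invoke $\Es$ without disturbing $|\psi_{g_2*x_\lambda}\rangle$. The REGA-DLog/1-minCDH breaker then builds $|\psi_{x_\lambda}\rangle$ (un-walking from $g_2$ using $\Es$), uses its one minimal-CDH query to reach $|\psi_{g*x_\lambda}\rangle$ for the challenger's secret $g$, receives $g*x_\lambda$, and runs $\Es$ to output a short $\xv$ with $g=\sum_i x_i g_i$.

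\emph{The GGAM route.} Here the argument mirrors the proof of Theorem~\ref{thm:main}. Given $u=a*x_\lambda$ from the D2X/min challenger, take $\Gamma(g)=(g*x_\lambda,\,g*u)$ --- so $c_1=c_2=1$, $y_1=x_\lambda$, $y_2=u$, $\Ys'=\Xs_\lambda^{\otimes2}\subseteq\{0,1\}^{2m}$, and Lemma~\ref{lem:ggamsim} applies once $m'\ge 2m+\omega(\log\lambda)$ --- simulate the generic oracle by $\Os'(\Pi(z_1,z_2),g)=\Pi(g*z_1,g*z_2)$, and run $\Bs$. On its output $(\tv,\$_1,\$_2)$, run the REGA verifier of Construction~\ref{constr:alt} on both banknotes with serial number $\tv$, answering its queries with $\Os'$, which is legitimate because that verifier only acts by short combinations, a special case of what $\Os'$ supports. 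By the characterization above, both runs accepting leaves two copies of $|\G_\lambda^{h(\tv)}*L(0)\rangle$. Apply to one copy the map $\ell\mapsto\Pi^{-1}(\ell)=(z_1,z_2)\mapsto\Pi(z_2,\,c*z_1)$ using the minimal oracle $M_c$: exactly as in Theorem~\ref{thm:main}, if $c=2a$ this carries $L(g)=\Pi(g*x_\lambda,(a+g)*x_\lambda)$ to $L(a+g)$ and preserves $|\G_\lambda^{h(\tv)}*L(0)\rangle$ up to a global phase, while if $c=b$ the image lies outside the range of $L$, producing an orthogonal state. A swap test then yields distinguishing advantage $\epsilon/2$, contradicting D2X/min.

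\emph{Main obstacle.} The part requiring real care is the first: extending Theorem~\ref{thm:rejectalternate} to adversarially chosen noisy serial numbers and confirming, via Lemmas~\ref{lem:far} and~\ref{lem:bigelements}, that every $\tv$ still pins down at most one candidate banknote and that the verifier behaves essentially like the projection onto it. Everything downstream is a faithful if bookkeeping-heavy transcription of Theorems~\ref{thm:main2} and~\ref{thm:main}; the one genuinely new ingredient is the observation that all the ``walk'' operations used in the reductions --- the challenger's minimal-CDH query, the extractor's short output, and the REGA verifier's own queries --- stay within the efficiently computable actions of a REGA.
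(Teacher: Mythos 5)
Your plan matches the paper's own (very compressed) proof sketch: first argue that any supposed banknotes accepted for the adversary's chosen serial number $\tv$ must, up to negligible error, be copies of a single $|\G_\lambda^{h(\tv)}*x_\lambda\rangle$, then transcribe the reductions of Theorems~\ref{thm:main} and~\ref{thm:main2}, observing that every walk the reductions perform (the REGA verifier's own queries, the random self-reduction, un-walking by the extractor's output, and the challenger's minimal-CDH query) is expressible via short combinations of the $g_i$ and hence efficiently computable on a REGA. This is exactly the content of the paper's sketch, and you are right that the paper's Theorem~\ref{thm:rejectalternate} as stated only covers honest serial numbers, so the extension to adversarial $\tv$ genuinely needs to be supplied.

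The one step that needs tightening is the uniqueness argument. You claim that if $h_1\neq h_2$ were both accepted with non-negligible probability then $(\Am\cdot\xv)^T\cdot\phi(h_1-h_2)$ would be ``concentrated near $0\bmod N$,'' contradicting Lemma~\ref{lem:bigelements}. But non-negligible acceptance under the $7/8$ threshold only yields $\E_\xv[\cos(2\pi\,\xv^T\cdot\dv_i/N)]\geq 3/4-o(1)$ for $\dv_i=\Am^T\cdot\phi(h_i)-\tv$; a direct Markov bound then gives at best $\Pr[|\xv^T\cdot\dv_i\bmod N|<N/4]\geq 1/2-o(1)$ for each $i$ \emph{separately}, which does not let you conclude that both events hold simultaneously with constant probability, so the appeal to ``about half the nontrivial subgroup exceeds $N/4$'' does not close the contradiction by itself. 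A clean fix: write $\cos a_1+\cos a_2=2\cos\tfrac{a_1+a_2}{2}\cos\tfrac{a_1-a_2}{2}$ with $a_i=2\pi\,\xv^T\cdot\dv_i/N$, bound $|\cos\tfrac{a_1+a_2}{2}|\leq 1$, and deduce $\E_\xv\bigl[\,|\cos(\pi\,(\Am\cdot\xv)^T\cdot\phi(h_1-h_2)/N)|\,\bigr]\geq 3/4-o(1)$; since $(\Am\cdot\xv)^T\cdot\phi(h_1-h_2)$ is statistically close to uniform in a nontrivial subgroup of $\Z_N$, and the expectation of $|\cos(\pi u/N)|$ over any such subgroup is at most $2/3$ (maximised at order $3$; it tends to $2/\pi$ as the order grows), this is a genuine contradiction. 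With that repair your proposal is sound and follows the paper's route.
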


We only sketch the proof. Like in the proof of Theorems~\ref{thm:main} and~\ref{thm:main2}, we can assume the adversary wins the quantum lightning experiment with probability $1-\negl(\lambda)$. In order for a supposed note $\$$ to be accepted relative to serial number $\tv$ with overwhelming probability, $\tv$ must have the form $\tv=\Am^T\cdot\phi(h)+\ev$ for ``short'' $\ev$, and $\$$ must be negligibly close to $|\G_\lambda^h*x_\lambda\rangle$. Therefore, a quantum lightning adversary outputs two copies of $|\G_\lambda^h*x_\lambda\rangle$ for some $h$. The security reduction of Theorem~\ref{thm:main} did not rely on knowing $h$, just that the adversary outputted two copies of $|\G_\lambda^h*x_\lambda\rangle$. Hence, a near-identical proof holds for Construction~\ref{constr:alt}. The only difference is that when the extractor $\Es$ outputs a group element, it instead outputs a small linear combination of the $g_i$ giving that group element, and then the DLog/1-minCDH adversary uses this small representation to compute the action by that group element.

\section{Further Discussion}\label{sec:discuss}

\subsection{Quantum Group Actions}

Here, we consider a generalization of group actions where set elements are replaced with quantum states.

A quantum (abelian) group action consists of a family of (abelian) groups $\G=(\G_\lambda)_\lambda$ (written additively), a family $\Xs=(\Xs_\lambda)_\lambda$ of sets $\Xs_\lambda$ of quantum states over a system $\Ms_\lambda$, and an operation $*$. We will require that the states in $\Xs_\lambda$ are orthogonal. $*$ is a quantum algorithm that takes as input a group element $g\in\G_\lambda$ and a quantum state $|\psi\rangle$ over $\Ms_\lambda$, and outputs another state over $\Ms_\lambda$. $*$ satisfies the following properties:
\begin{itemize}
	\item {\bf Identity:} If $0\in \G_\lambda$ is the identity element, then $|0\rangle*|\psi\rangle=|\psi\rangle$ for any $|\psi\rangle\in \Xs_\lambda$.
	\item {\bf Compatibility:} For all $g,h\in \G_\lambda$ and $|\psi\rangle\in \Xs_\lambda$, $(g+h)*|\psi\rangle=g*(h*|\psi\rangle)$.
\end{itemize}
We can also relax the above properties to only hold to within negligible error, and/or relax the orthogonality requirement to being near-orthogonal. We will additionally require the following properties:
\begin{itemize}
	\item {\bf Efficiently computable:} There is a pseudo-deterministic QPT procedure ${\sf Construct}$ which, on input $1^\lambda$, outputs a description of $\G_\lambda$ and an specific element $|\psi_\lambda\rangle\in\Xs_\lambda$. The operation $*$ is also computable by a QPT algorithm.
	\item {\bf Efficiently Recognizable:} There is a QPT procedure ${\sf Recog}$ which recognizes elements in $\Xs_\lambda$. That is, ${\sf Recog}(1^\lambda,\cdot)$ projects onto the span of the states in $\Xs_\lambda$.
	\item {\bf Regular:} For every $|\phi\rangle\in\Xs_\lambda$, there is exactly one $g\in\G_\lambda$ such that $|\phi\rangle=g*|\psi_\lambda\rangle$. 
\end{itemize}
Again, we can also relax the above properties to only hold to within negligible error.

\paragraph{Cryptographic group actions.} At a minimum, a cryptographically useful quantum group action will satisfy the following discrete log assumption:
\begin{assumption}\label{def:qdlog} The \emph{discrete log assumption} (DLog) holds on a quantum group action $(\G,\Xs,*)$ if, for all QPT adversaries $\As$, there exists a negligible $\lambda$ such that 
	\[\Pr[\As(g*|\psi_\lambda\rangle)=g:g\gets\G_\lambda]\leq\negl(\lambda)\enspace .\]
\end{assumption}

Note that if we do not insist on orthogonality of the states in $\Xs_\lambda$, then it is trivial to construct a quantum group action in which DLog holds: simply have all $|\psi\rangle\in\Xs_\lambda$ be identical, or negligibly close. Then it will be information-theoretically impossible to determine $g$. Orthogonality essentially says that the group action is classical, except that the basis for the set elements is potentially different than the computational basis.

\subsection{Quantum Group Actions From Lattices}

Here, we describe a simple quantum group action from lattices.

The group $\G_{\sf LWE,N,n,m,\sigma}$ will be set to $\Z_N^n$ for some integers $N,n$. We will fix a short wide matrix $\Am\in\Z_N^{n\times m}$; we can think of $\Am$ as being sampled randomly and included in a common reference string. Note that $\G$ is independent of $\sigma$, but we include it for notational consistency.

The set $\Xs_{\sf LWE,N,n,m,\sigma}$ will be the set of states $|\psi_\sv\rangle=\sum_{\ev\in\Z_N^n}\sqrt{\Ds_{\sigma,N/2}(\ev)}|\Am^T\cdot\sv+\ev\rangle$. In other words, we take the discrete Gaussian vector superposition of some width, and add the vector $\Am^T\cdot\sv$.

$\G_{\sf LWE,N,n,m,\sigma}$ acts on $\Xs_{\sf LWE,N,n,m,\sigma}$ in the following obvious way: $\rv*|\psi_\sv\rangle=|\psi_{\rv+\sv}\rangle$, which can be computed by simply adding $\Am^T\cdot\rv$ in superposition.

We have the following theorem:
\begin{theorem}\label{def:qdlogfromlwe} Let $\sigma,\sigma_0$ be non-negative real numbers such that $\sigma/\sigma_0$ is super-polynomial. Assuming the Learning with Errors problem is hard for noise distribution $\Ds_{\sigma_0}$, discrete logarithms are hard in the group action $(\G_{\sf LWE,N,n,m,\sigma},\Xs_{\sf LWE,N,n,m,\sigma},*)$.
\end{theorem}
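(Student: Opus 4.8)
The plan is to give a reduction from the search LWE problem (with noise $\Ds_{\sigma_0}$) to the discrete log problem on the group action $(\G_{\sf LWE,N,n,m,\sigma},\Xs_{\sf LWE,N,n,m,\sigma},*)$. Recall that a discrete log adversary $\As$ is given $\rv * |\psi_{\bf 0}\rangle = |\psi_\rv\rangle = \sum_{\ev} \sqrt{\Ds_{\sigma,N/2}(\ev)}\, |\Am^T\!\cdot\rv + \ev\rangle$ for uniform $\rv \in \Z_N^n$, and must output $\rv$. So the reduction must, given an LWE instance, manufacture a state of the form $|\psi_\rv\rangle$ whose underlying secret $\rv$ is exactly the LWE secret, feed it to $\As$, and output whatever $\As$ returns.

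First I would recall that a classical LWE sample is a pair $(\Am, \bv = \Am^T\!\cdot\sv + \ev_0 \bmod N)$ where $\ev_0 \gets \Ds_{\sigma_0}^m$ and $\sv$ is the (uniform) secret we wish to recover. The key step is a \emph{noise-flooding / smudging} argument: starting from the single classical vector $\bv$, the reduction prepares a fresh Gaussian superposition $|\Ds_{\sigma,N/2}\rangle^{\otimes m}$ in an ancilla register, adds $\bv$ into it in superposition, obtaining $\sum_{\ev'} \sqrt{\Ds_{\sigma,N/2}(\ev')}\,|\Am^T\!\cdot\sv + \ev_0 + \ev'\rangle$. Since $\sigma/\sigma_0$ is super-polynomial, the shifted discrete Gaussian $\Ds_{\sigma,N/2}(\cdot - \ev_0)$ is within negligible statistical distance of $\Ds_{\sigma,N/2}(\cdot)$ for any fixed $\ev_0$ of polynomial norm (and $\ev_0$ has polynomial norm except with negligible probability); hence this state is negligibly close to $|\psi_\sv\rangle$, exactly a valid set element with secret $\sv$. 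I would carry out this smudging bound carefully, invoking the standard tail bound on $\Ds_{\sigma_0}$ and the ratio bound on Gaussian masses, and also check that the truncation at $N/2$ contributes only negligible error (this is where $N \geq 2B' \geq \sigma \times \omega(\sqrt{\log\lambda})$ and the facts about truncated discrete Gaussians from the preliminaries are used). The reduction then runs $\As$ on this state; with non-negligible probability $\As$ outputs $\sv$, solving search LWE. Because $\Am$ here is the public matrix in the common reference string, and LWE is assumed hard for $\Am$ with noise $\Ds_{\sigma_0}$, this is a contradiction, establishing that DLog holds.

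The main obstacle I expect is the smudging step itself: one must verify that shifting a \emph{discrete} (not continuous) Gaussian by a fixed integer vector $\ev_0$ changes it negligibly in statistical distance, uniformly over all $\ev_0$ in the support of $\Ds_{\sigma_0}^m$ that can occur with non-negligible probability, and simultaneously that the $\ell^2$ distance between the resulting (sub-normalized, truncated) quantum state and the ideal $|\psi_\sv\rangle$ is negligible — statistical closeness of the underlying distributions gives $\ell^1$-closeness of amplitude-squared, which one converts to $\ell^2$-closeness of the states via the standard $\||\alpha\rangle - |\beta\rangle\| \leq \sqrt{2}\cdot\mathrm{TV}^{1/2}$-type inequality, and then argues that $\As$'s success probability on the real state and on the simulated state differ by at most this negligible amount. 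A secondary subtlety is ensuring $\sv$ is distributed identically in the LWE experiment (uniform over $\Z_N^n$) and in the DLog experiment, which holds by construction. No genuinely new idea beyond noise flooding is needed; the work is in bookkeeping the error terms.
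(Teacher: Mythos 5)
Your proposal matches the paper's proof essentially exactly: both reduce from search LWE by taking a classical LWE sample $\Am^T\!\cdot\sv+\ev_0$, superimposing a fresh discrete Gaussian $|\Ds_{\sigma,N/2}\rangle^{\otimes m}$ on top, invoking noise flooding (since $\sigma/\sigma_0$ is super-polynomial) to argue the resulting state is negligibly close to $|\psi_\sv\rangle$, and then running the DLog adversary to recover $\sv$. Your additional bookkeeping (tail bounds, truncation error, converting statistical distance of amplitude-squared distributions to trace/$\ell^2$ distance of states) is correct and merely fleshes out steps the paper leaves implicit.
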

\begin{proof}The learning with errors assumption states that it is hard to compute $\sv$ given $\Am^T\cdot\sv+\ev$ with $\ev$ sampled from $\Ds_{\sigma_0}$. We need to show that it is hard to compute $\sv$ given the analogous superposition over $\Am^T\cdot\sv+\ev$, where here $\ev$ comes from the Gaussian superposition $|\Ds_\sigma\rangle$. The idea is a simple application of noise flooding: given $\uv=\Am^T\cdot\sv+\ev$, compute the state $|\psi_\sv'\rangle:=\sum_{\ev'\in\Z_N^n}\sqrt{\Ds_{\sigma,N/2}(\ev')}|\Am^T\cdot\sv+\ev+\ev'\rangle$. Since $\sigma/\sigma_0$ is super-polynomial, $\ev+\ev'$ where $\ev'\gets \Ds_{\sigma,N/2}$ is negligibly close to a Gaussian centered at 0. Therefore, $|\psi_\sv'\rangle$ is negligibly close to $|\psi_\sv\rangle$. Plugging into a supposed DLog adversary then gives $\sv$, breaking LWE.
\end{proof}

Unfortunately, this LWE-based group action is missing a crucial feature: it is not possible to recognize states in $\Xs$. In particular, the states in $\Xs$ are indistinguishable from states of the form $\sum_{\ev\in\Z_N^n}\sqrt{\Ds_{\sigma,N/2}(\ev)}|\vv+\ev\rangle$, where $\vv$ is an arbitrary vector in $\Z_N^m$. As we will see in the next subsection, the inability to recognize $\Xs$ will prevent us from using this group action to instantiate our quantum money scheme.

\subsection{Relation to Quantum Money Approaches based on Lattices}

Here, we see that our quantum money scheme is conceptually related to a folklore approach to building quantum money from lattices. This approach has not been able to work; in our language, the reason is exactly due to the inability to recognize $\Xs_{\sf LWE,N,n,m,\sigma}$. 

The approach is the following. Let $\Am\in\Z_N^{n\times m}$ be a random short wide matrix over $\Z_n$. To mint a banknote, construct the discrete Gaussian superposition $|\Ds_\sigma\rangle^{\otimes m}$ in register $\Ms$. Then compute and measure $\Am\cdot \xv$ applied to $\Ms$. The result is a vector $\hv\in\Z_N^n$, which will be the serial number, and $\Ms$ collapses to a superposition $|\$_\hv\rangle\propto \sum_{\xv:\Am\cdot\xv=\hv} \sqrt{\Ds_\sigma(\xv)}|\xv\rangle$ of short vectors $\xv$ such that $\Am\cdot\xv=\hv$. This is the banknote. A simple argument shows that it is impossible to construct two copies of $|\$_\hv\rangle$ for the same $\hv$: given such a pair, measure each to get $\xv,\xv'$ such that $\Am\cdot\xv=\Am\cdot\xv'=\hv$. Then subtract to get a short vector $\xv-\xv'$ such that $\Am\cdot(\xv-\xv')=0^n$. We can conclude $\xv-\xv'$ is non-zero with overwhelming probability, since the measurement of $|\$_\hv\rangle$ has high entropy. Such a non-zero short kernel vector would solve the Short Integer Solution (SIS) problem, which is widely believed to be hard and is the foundation of lattice-based cryptography.

Unfortunately, the above approach is broken. The problem is that there is no way to actually verify banknotes. One can verify that a banknote has support on short vectors with $\Am\cdot\xv=\hv$, but it is impossible to verify that the banknote is in superposition. If one could solve the Learning with Errors (LWE) problem, it would be possible to verify banknotes as follows: first perform the QFT to the banknote state. If an honest banknote, the QFT will give a state negligibly close to
\begin{equation}\label{eqn:banknote}|\$_\hv'\rangle:=\frac{1}{N^{n/2}}\sum_{\sv,\ev\in\Z_N^n}\sqrt{\Ds_{N/\sigma}(\ev)}e^{i2\pi \hv\cdot\sv/N}|\Am^T\cdot\sv+\ev\rangle\enspace .\end{equation}
The second step is to simply apply the supposed LWE solver to this state in superposition, ensuring that the state has support on vectors of the form $\Am^T\cdot\sv+\ev$ for small $\ev$. 

Unfortunately, LWE is likely hard. In fact, it is quantumly equivalent to SIS~\cite{DBLP:journals/jacm/Regev09}, meaning if one could verify banknotes using an LWE solver, then SIS is easy. Not only does this mean we are reducing from an easy problem, but it would be possible to turn such a SIS algorithm into an attack.

Without the ability to verify that banknotes are in superposition, the attacker can simply measure a banknote to get $\xv$, and then pass off $|\xv\rangle$ as a fake banknote that will pass verification. Since $\xv$ is trivially copied, this would break security. Interestingly,~\cite{C:LiuZha19} prove that, no matter what efficient verification procedure is used, even if the verification diverged from the LWE-based approach above, this attack works.~\cite{EC:LiuMonZha23} extend this to a variety of potential schemes based on similar ideas, including a recent proposed instantiation of this approach by~\cite{KLS22}.

\medskip

We now see how the above approach is essentially equivalent to our construction of quantum money from group actions, instantiated over our LWE-based quantum group action. The inability to recognize $\Xs$ is the reason this instantiation is insecure, despite natural hardness assumptions presumably holding on the group action.

We consider the quantum group action $(\G_{\sf LWE,N,n,m,N/\sigma},\Xs_{\sf LWE,N,n,m,N/\sigma},*)$, where $\sigma$ is from the folklore construction above. When applied to $(\G_{\sf LWE,N,n,m,N/\sigma},\Xs_{\sf LWE,N,n,m,N/\sigma},*)$, a banknote in our scheme, up to negligibly error from truncating discrete Gaussians, is the state $|\$_\hv'\rangle$ from Equation~\ref{eqn:banknote} above, where the serial number is $\hv$. Thus, we see that our quantum money scheme is simply the folklore construction but moved to the Fourier domain. The attack on the folklore construction can therefore easily be mapped to an attack on our scheme: if the adversary is given $|\$_\hv'\rangle$, it measures in the Fourier domain (which is the primal domain for the folklore construction) to get a short vector $\xv$ such that $\Am\cdot\sv=\hv$. Then it switched back to the primal domain, giving the state
\[\frac{1}{N^{m/2}}\sum_\uv e^{i2\pi\ev\cdot\xv}|\xv\rangle\enspace .\]
This is a state that lies outside the span of $\Xs$. However, no efficient verification procedure can distinguish it from an honest banknote state.

\medskip

Two features distinguish isogeny-based group actions from the LWE-based action above. The first is the ability to recognize elements in $\Xs$. Suppose it were possible to recognize elements of $\Xs$ in the LWE-based action, and we had the verifier check to see if the banknote belonged to the span of the elements in $\Xs$. In the language of quantum group actions, this check would prevent the attacker from sending $\frac{1}{N^{m/2}}\sum_\uv e^{i2\pi\ev\cdot\xv}|\xv\rangle$, which lies outside the span of $\Xs$. In the language of the folklore construction, this check would correctly distinguish between an honest banknote and the easily clonable state $|\xv\rangle$ in the attack. If such a check were possible, the proof sketched above would work to base the security of the scheme on SIS. Unfortunately, such a check is computationally intractable under the decision LWE problem, which is equivalent to SIS and most likely hard.

The issue of recognizing set elements is also crucial in our security arguments. Indeed, the first step in our proof was to characterize the states accepted by the verifier, showing that only honest banknote states are accepted. This step in the proof fails in the LWE-based scheme, which would prevent the proof from going through. Thus, even though the scheme based on LWE is broken, it does not contradict our DLog/1-minCDH and Q-KGEA assumptions holding on the LWE-based group action.

The second difference, is that, with the LWE-based group action, taking the QFT of money states gives elements with meaningful structure: short vectors $\xv$ such that $\Am\cdot\xv=\hv$. This structure and its relation to the original money state are what enables the attack. In contrast, taking the QFT of money states over $\Xs$ coming from isogenies will give terms with no discernible structure.

We believe the above perspective adds to the confidence in our proposal. Indeed, in the LWE-based scheme, the key missing piece is recognizing set elements; if not for this missing piece the scheme \emph{could} be proven secure. By switching to group actions based on isogenies, we add the missing piece. The hope is that even though the source of hardness is now from hard problems on isogenies over elliptic curves instead of lattices, by adding the missing piece we can finally obtain a scheme.

\newpage

\printbibliography

@article{Wiesner83,
  address =       {New York, NY, USA},
  author =        {Wiesner, Stephen},
  journal =       {SIGACT News},
  month =         jan,
  number =        {1},
  pages =         {78--88},
  publisher =     {ACM},
  title =         {Conjugate Coding},
  volume =        {15},
  year =          {1983},
  doi =           {10.1145/1008908.1008920},
  issn =          {0163-5700},
  url =           {http://doi.acm.org/10.1145/1008908.1008920},
}

@inproceedings{CCC:Aaronson09,
  author       = {Scott Aaronson},
  title        = {Quantum Copy-Protection and Quantum Money},
  booktitle    = {Proceedings of the 24th Annual {IEEE} Conference on Computational
                  Complexity, {CCC} 2009, Paris, France, 15-18 July 2009},
  pages        = {229--242},
  publisher    = {{IEEE} Computer Society},
  year         = {2009},
  url          = {https://doi.org/10.1109/CCC.2009.42},
  doi          = {10.1109/CCC.2009.42},
  timestamp    = {Fri, 24 Mar 2023 00:04:21 +0100},
  biburl       = {https://dblp.org/rec/conf/coco/Aaronson09.bib},
  bibsource    = {dblp computer science bibliography, https://dblp.org}
}

@inproceedings{PKC:DHKKLR23,
  author       = {Julien Duman and
                  Dominik Hartmann and
                  Eike Kiltz and
                  Sabrina Kunzweiler and
                  Jonas Lehmann and
                  Doreen Riepel},
  editor       = {Alexandra Boldyreva and
                  Vladimir Kolesnikov},
  title        = {Generic Models for Group Actions},
  booktitle    = {Public-Key Cryptography - {PKC} 2023 - 26th {IACR} International Conference
                  on Practice and Theory of Public-Key Cryptography, Atlanta, GA, USA,
                  May 7-10, 2023, Proceedings, Part {I}},
  series       = {Lecture Notes in Computer Science},
  volume       = {13940},
  pages        = {406--435},
  publisher    = {Springer},
  year         = {2023},
  url          = {https://doi.org/10.1007/978-3-031-31368-4\_15},
  doi          = {10.1007/978-3-031-31368-4\_15},
  timestamp    = {Wed, 17 May 2023 21:55:41 +0200},
  biburl       = {https://dblp.org/rec/conf/pkc/DumanHKKLR23.bib},
  bibsource    = {dblp computer science bibliography, https://dblp.org}
}

@misc{EPRINT:Couveignes06,
  author =        {Jean-Marc Couveignes},
  howpublished =  {Cryptology ePrint Archive, Report 2006/291},
  title =         {Hard Homogeneous Spaces},
  year =          {2006},
  url =           {https://eprint.iacr.org/2006/291},
}

@misc{EPRINT:RosSto06,
  author =        {Alexander Rostovtsev and Anton Stolbunov},
  howpublished =  {Cryptology ePrint Archive, Report 2006/145},
  title =         {{Public-Key} {Cryptosystem} {Based} {On} {Isogenies}},
  year =          {2006},
  url =           {https://eprint.iacr.org/2006/145},
}

@inproceedings{AC:CLMPR18,
  author       = {Wouter Castryck and
                  Tanja Lange and
                  Chloe Martindale and
                  Lorenz Panny and
                  Joost Renes},
  editor       = {Thomas Peyrin and
                  Steven D. Galbraith},
  title        = {{CSIDH:} An Efficient Post-Quantum Commutative Group Action},
  booktitle    = {Advances in Cryptology - {ASIACRYPT} 2018 - 24th International Conference
                  on the Theory and Application of Cryptology and Information Security,
                  Brisbane, QLD, Australia, December 2-6, 2018, Proceedings, Part {III}},
  series       = {Lecture Notes in Computer Science},
  volume       = {11274},
  pages        = {395--427},
  publisher    = {Springer},
  year         = {2018},
  url          = {https://doi.org/10.1007/978-3-030-03332-3\_15},
  doi          = {10.1007/978-3-030-03332-3\_15},
  timestamp    = {Wed, 25 Sep 2019 18:10:03 +0200},
  biburl       = {https://dblp.org/rec/conf/asiacrypt/Castryck0MPR18.bib},
  bibsource    = {dblp computer science bibliography, https://dblp.org}
}

@inproceedings{AC:BeuKleVer19,
  author       = {Ward Beullens and
                  Thorsten Kleinjung and
                  Frederik Vercauteren},
  editor       = {Steven D. Galbraith and
                  Shiho Moriai},
  title        = {{CSI}-{FiSh}: Efficient Isogeny Based Signatures Through Class Group Computations},
  booktitle    = {Advances in Cryptology - {ASIACRYPT} 2019 - 25th International Conference
                  on the Theory and Application of Cryptology and Information Security,
                  Kobe, Japan, December 8-12, 2019, Proceedings, Part {I}},
  series       = {Lecture Notes in Computer Science},
  volume       = {11921},
  pages        = {227--247},
  publisher    = {Springer},
  year         = {2019},
  url          = {https://doi.org/10.1007/978-3-030-34578-5\_9},
  doi          = {10.1007/978-3-030-34578-5\_9},
  timestamp    = {Tue, 26 Nov 2019 14:13:27 +0100},
  biburl       = {https://dblp.org/rec/conf/asiacrypt/BeullensKV19.bib},
  bibsource    = {dblp computer science bibliography, https://dblp.org}
}

@inproceedings{PKC:DFKLMP23,
  author       = {Luca {De Feo} and
                  Tako Boris Fouotsa and
                  P{\'{e}}ter Kutas and
                  Antonin Leroux and
                  Simon{-}Philipp Merz and
                  Lorenz Panny and
                  Benjamin Wesolowski},
  editor       = {Alexandra Boldyreva and
                  Vladimir Kolesnikov},
  title        = {{SCALLOP:} Scaling the {CSI}-{FiSh}},
  booktitle    = {Public-Key Cryptography - {PKC} 2023 - 26th {IACR} International Conference
                  on Practice and Theory of Public-Key Cryptography, Atlanta, GA, USA,
                  May 7-10, 2023, Proceedings, Part {I}},
  series       = {Lecture Notes in Computer Science},
  volume       = {13940},
  pages        = {345--375},
  publisher    = {Springer},
  year         = {2023},
  url          = {https://doi.org/10.1007/978-3-031-31368-4\_13},
  doi          = {10.1007/978-3-031-31368-4\_13},
  timestamp    = {Mon, 05 Feb 2024 20:34:13 +0100},
  biburl       = {https://dblp.org/rec/conf/pkc/FeoFKLMPW23.bib},
  bibsource    = {dblp computer science bibliography, https://dblp.org}
}

@inproceedings{AC:ADMP20,
  author       = {Navid Alamati and
                  Luca {De Feo} and
                  Hart Montgomery and
                  Sikhar Patranabis},
  editor       = {Shiho Moriai and
                  Huaxiong Wang},
  title        = {Cryptographic Group Actions and Applications},
  booktitle    = {Advances in Cryptology - {ASIACRYPT} 2020 - 26th International Conference
                  on the Theory and Application of Cryptology and Information Security,
                  Daejeon, South Korea, December 7-11, 2020, Proceedings, Part {II}},
  series       = {Lecture Notes in Computer Science},
  volume       = {12492},
  pages        = {411--439},
  publisher    = {Springer},
  year         = {2020},
  url          = {https://doi.org/10.1007/978-3-030-64834-3\_14},
  doi          = {10.1007/978-3-030-64834-3\_14},
  timestamp    = {Thu, 10 Dec 2020 10:09:50 +0100},
  biburl       = {https://dblp.org/rec/conf/asiacrypt/AlamatiFMP20.bib},
  bibsource    = {dblp computer science bibliography, https://dblp.org}
}

@inproceedings{EC:Shoup97,
  author       = {Victor Shoup},
  editor       = {Walter Fumy},
  title        = {Lower Bounds for Discrete Logarithms and Related Problems},
  booktitle    = {Advances in Cryptology - {EUROCRYPT} '97, International Conference
                  on the Theory and Application of Cryptographic Techniques, Konstanz,
                  Germany, May 11-15, 1997, Proceeding},
  series       = {Lecture Notes in Computer Science},
  volume       = {1233},
  pages        = {256--266},
  publisher    = {Springer},
  year         = {1997},
  url          = {https://doi.org/10.1007/3-540-69053-0\_18},
  doi          = {10.1007/3-540-69053-0\_18},
  timestamp    = {Tue, 14 May 2019 10:00:54 +0200},
  biburl       = {https://dblp.org/rec/conf/eurocrypt/Shoup97.bib},
  bibsource    = {dblp computer science bibliography, https://dblp.org}
}

@inproceedings{IMA:Maurer05,
  author       = {Ueli M. Maurer},
  editor       = {Nigel P. Smart},
  title        = {Abstract Models of Computation in Cryptography},
  booktitle    = {Cryptography and Coding, 10th {IMA} International Conference, Cirencester,
                  UK, December 19-21, 2005, Proceedings},
  series       = {Lecture Notes in Computer Science},
  volume       = {3796},
  pages        = {1--12},
  publisher    = {Springer},
  year         = {2005},
  url          = {https://doi.org/10.1007/11586821\_1},
  doi          = {10.1007/11586821\_1},
  timestamp    = {Wed, 18 Mar 2020 16:01:01 +0100},
  biburl       = {https://dblp.org/rec/conf/ima/Maurer05.bib},
  bibsource    = {dblp computer science bibliography, https://dblp.org}
}

@inproceedings{EC:BonGuaZha23,
  author       = {Dan Boneh and
                  Jiaxin Guan and
                  Mark Zhandry},
  editor       = {Carmit Hazay and
                  Martijn Stam},
  title        = {A Lower Bound on the Length of Signatures Based on Group Actions and
                  Generic Isogenies},
  booktitle    = {Advances in Cryptology - {EUROCRYPT} 2023 - 42nd Annual International
                  Conference on the Theory and Applications of Cryptographic Techniques,
                  Lyon, France, April 23-27, 2023, Proceedings, Part {V}},
  series       = {Lecture Notes in Computer Science},
  volume       = {14008},
  pages        = {507--531},
  publisher    = {Springer},
  year         = {2023},
  url          = {https://doi.org/10.1007/978-3-031-30589-4\_18},
  doi          = {10.1007/978-3-031-30589-4\_18},
  timestamp    = {Sat, 29 Apr 2023 19:25:02 +0200},
  biburl       = {https://dblp.org/rec/conf/eurocrypt/BonehGZ23.bib},
  bibsource    = {dblp computer science bibliography, https://dblp.org}
}

@misc{EPRINT:OrsZan23,
  author =        {Emmanuela Orsini and Riccardo Zanotto},
  howpublished =  {Cryptology ePrint Archive, Report 2023/269},
  title =         {Simple Two-Round {OT} in the Explicit Isogeny Model},
  year =          {2023},
  url =           {https://eprint.iacr.org/2023/269},
}

@article{EttHoy00,
  author =        {Mark Ettinger and Peter Høyer},
  journal =       {Advances in Applied Mathematics},
  number =        {3},
  pages =         {239-251},
  title =         {On Quantum Algorithms for Noncommutative Hidden
                   Subgroups},
  volume =        {25},
  year =          {2000},
  doi =           {https://doi.org/10.1006/aama.2000.0699},
  issn =          {0196-8858},
  url =           {https://www.sciencedirect.com/science/article/pii/
                  S0196885800906997},
}

@inproceedings{EC:LiuMonZha23,
  author       = {Jiahui Liu and
                  Hart Montgomery and
                  Mark Zhandry},
  editor       = {Carmit Hazay and
                  Martijn Stam},
  title        = {Another Round of Breaking and Making Quantum Money: - How to Not Build
                  It from Lattices, and More},
  booktitle    = {Advances in Cryptology - {EUROCRYPT} 2023 - 42nd Annual International
                  Conference on the Theory and Applications of Cryptographic Techniques,
                  Lyon, France, April 23-27, 2023, Proceedings, Part {I}},
  series       = {Lecture Notes in Computer Science},
  volume       = {14004},
  pages        = {611--638},
  publisher    = {Springer},
  year         = {2023},
  url          = {https://doi.org/10.1007/978-3-031-30545-0\_21},
  doi          = {10.1007/978-3-031-30545-0\_21},
  timestamp    = {Thu, 03 Jul 2025 20:38:19 +0200},
  biburl       = {https://dblp.org/rec/conf/eurocrypt/LiuMZ23.bib},
  bibsource    = {dblp computer science bibliography, https://dblp.org}
}

@inproceedings{C:FucKilLos18,
  author       = {Georg Fuchsbauer and
                  Eike Kiltz and
                  Julian Loss},
  editor       = {Hovav Shacham and
                  Alexandra Boldyreva},
  title        = {The Algebraic Group Model and its Applications},
  booktitle    = {Advances in Cryptology - {CRYPTO} 2018 - 38th Annual International
                  Cryptology Conference, Santa Barbara, CA, USA, August 19-23, 2018,
                  Proceedings, Part {II}},
  series       = {Lecture Notes in Computer Science},
  volume       = {10992},
  pages        = {33--62},
  publisher    = {Springer},
  year         = {2018},
  url          = {https://doi.org/10.1007/978-3-319-96881-0\_2},
  doi          = {10.1007/978-3-319-96881-0\_2},
  timestamp    = {Wed, 25 Sep 2019 18:02:07 +0200},
  biburl       = {https://dblp.org/rec/conf/crypto/FuchsbauerKL18.bib},
  bibsource    = {dblp computer science bibliography, https://dblp.org}
}

@inproceedings{C:Zhandry22b,
  author       = {Mark Zhandry},
  editor       = {Yevgeniy Dodis and
                  Thomas Shrimpton},
  title        = {To Label, or Not To Label (in Generic Groups)},
  booktitle    = {Advances in Cryptology - {CRYPTO} 2022 - 42nd Annual International
                  Cryptology Conference, {CRYPTO} 2022, Santa Barbara, CA, USA, August
                  15-18, 2022, Proceedings, Part {III}},
  series       = {Lecture Notes in Computer Science},
  volume       = {13509},
  pages        = {66--96},
  publisher    = {Springer},
  year         = {2022},
  url          = {https://doi.org/10.1007/978-3-031-15982-4\_3},
  doi          = {10.1007/978-3-031-15982-4\_3},
  timestamp    = {Mon, 03 Mar 2025 21:01:05 +0100},
  biburl       = {https://dblp.org/rec/conf/crypto/Zhandry22a.bib},
  bibsource    = {dblp computer science bibliography, https://dblp.org}
}

@inproceedings{C:LiuZha19,
  author       = {Qipeng Liu and
                  Mark Zhandry},
  editor       = {Alexandra Boldyreva and
                  Daniele Micciancio},
  title        = {Revisiting Post-quantum {Fiat}-{Shamir}},
  booktitle    = {Advances in Cryptology - {CRYPTO} 2019 - 39th Annual International
                  Cryptology Conference, Santa Barbara, CA, USA, August 18-22, 2019,
                  Proceedings, Part {II}},
  series       = {Lecture Notes in Computer Science},
  volume       = {11693},
  pages        = {326--355},
  publisher    = {Springer},
  year         = {2019},
  url          = {https://doi.org/10.1007/978-3-030-26951-7\_12},
  doi          = {10.1007/978-3-030-26951-7\_12},
  timestamp    = {Mon, 03 Mar 2025 21:01:05 +0100},
  biburl       = {https://dblp.org/rec/conf/crypto/LiuZ19.bib},
  bibsource    = {dblp computer science bibliography, https://dblp.org}
}

@misc{Lutomirski10,
  author =        {Andrew Lutomirski},
  note =          {\url{https://arxiv.org/abs/1010.0256}},
  title =         {An online attack against Wiesner's quantum money},
  year =          {2010},
}

@misc{EPRINT:BehSat20,
  author =        {Amit Behera and Or Sattath},
  howpublished =  {Cryptology ePrint Archive, Report 2020/452},
  title =         {Almost Public Quantum Coins},
  year =          {2020},
  url =           {https://eprint.iacr.org/2020/452},
}

@inproceedings{AC:RobZha21,
  author       = {Bhaskar Roberts and
                  Mark Zhandry},
  editor       = {Mehdi Tibouchi and
                  Huaxiong Wang},
  title        = {Franchised Quantum Money},
  booktitle    = {Advances in Cryptology - {ASIACRYPT} 2021 - 27th International Conference
                  on the Theory and Application of Cryptology and Information Security,
                  Singapore, December 6-10, 2021, Proceedings, Part {I}},
  series       = {Lecture Notes in Computer Science},
  volume       = {13090},
  pages        = {549--574},
  publisher    = {Springer},
  year         = {2021},
  url          = {https://doi.org/10.1007/978-3-030-92062-3\_19},
  doi          = {10.1007/978-3-030-92062-3\_19},
  timestamp    = {Fri, 04 Jul 2025 22:05:02 +0200},
  biburl       = {https://dblp.org/rec/conf/asiacrypt/RobertsZ21.bib},
  bibsource    = {dblp computer science bibliography, https://dblp.org}
}

@inproceedings{ITCS:FGHLS12,
  author       = {Edward Farhi and
                  David Gosset and
                  Avinatan Hassidim and
                  Andrew Lutomirski and
                  Peter W. Shor},
  editor       = {Shafi Goldwasser},
  title        = {Quantum money from knots},
  booktitle    = {Innovations in Theoretical Computer Science 2012, Cambridge, MA, USA,
                  January 8-10, 2012},
  pages        = {276--289},
  publisher    = {{ACM}},
  year         = {2012},
  url          = {https://doi.org/10.1145/2090236.2090260},
  doi          = {10.1145/2090236.2090260},
  timestamp    = {Sun, 19 Jan 2025 13:39:03 +0100},
  biburl       = {https://dblp.org/rec/conf/innovations/FarhiGHLS12.bib},
  bibsource    = {dblp computer science bibliography, https://dblp.org}
}

@misc{Kane18,
  author =        {Kane, Daniel M.},
  note =          {https://arxiv.org/abs/1809.05925},
  publisher =     {arXiv},
  title =         {Quantum Money from Modular Forms},
  year =          {2018},
}

@misc{EPRINT:KanShaSil21,
  author =        {Daniel M. Kane and Shahed Sharif and
                   Alice Silverberg},
  howpublished =  {Cryptology ePrint Archive, Report 2021/1294},
  title =         {Quantum Money from Quaternion Algebras},
  year =          {2021},
  url =           {https://eprint.iacr.org/2021/1294},
}

@misc{KLS22,
  author =        {Khesin, Andrey Boris and Lu, Jonathan Z and
                   Shor, Peter W},
  note =          {\url{https://arxiv.org/abs/2207.13135v2}},
  title =         {Publicly verifiable quantum money from random
                   lattices},
  year =          {2022},
}

@inproceedings{ITCS:LAFGKH10,
  author       = {Andrew Lutomirski and
                  Scott Aaronson and
                  Edward Farhi and
                  David Gosset and
                  Jonathan A. Kelner and
                  Avinatan Hassidim and
                  Peter W. Shor},
  editor       = {Andrew Chi{-}Chih Yao},
  title        = {Breaking and Making Quantum Money: Toward a New Quantum Cryptographic
                  Protocol},
  booktitle    = {Innovations in Computer Science - {ICS} 2010, Tsinghua University,
                  Beijing, China, January 5-7, 2010. Proceedings},
  pages        = {20--31},
  publisher    = {Tsinghua University Press},
  year         = {2010},
  url          = {http://conference.iiis.tsinghua.edu.cn/ICS2010/content/papers/2.html},
  timestamp    = {Wed, 04 Sep 2019 15:35:43 +0200},
  biburl       = {https://dblp.org/rec/conf/innovations/LutomirskiAFGKHS10.bib},
  bibsource    = {dblp computer science bibliography, https://dblp.org}
}

@article{PDFHP19,
  author       = {Marta Conde Pena and
                  Ra{\'{u}}l Dur{\'{a}}n D{\'{\i}}az and
                  Jean{-}Charles Faug{\`{e}}re and
                  Luis Hern{\'{a}}ndez Encinas and
                  Ludovic Perret},
  title        = {Non-quantum cryptanalysis of the noisy version of {A}aronson-{C}hristiano's
                  quantum money scheme},
  journal      = {{IET} Inf. Secur.},
  volume       = {13},
  number       = {4},
  pages        = {362--366},
  year         = {2019},
  url          = {https://doi.org/10.1049/iet-ifs.2018.5307},
  doi          = {10.1049/IET-IFS.2018.5307},
  timestamp    = {Thu, 27 Aug 2020 13:16:28 +0200},
  biburl       = {https://dblp.org/rec/journals/iet-ifs/PenaDFEP19.bib},
  bibsource    = {dblp computer science bibliography, https://dblp.org}
}

@inproceedings{EC:Roberts21,
  author       = {Bhaskar Roberts},
  editor       = {Anne Canteaut and
                  Fran{\c{c}}ois{-}Xavier Standaert},
  title        = {Security Analysis of Quantum Lightning},
  booktitle    = {Advances in Cryptology - {EUROCRYPT} 2021 - 40th Annual International
                  Conference on the Theory and Applications of Cryptographic Techniques,
                  Zagreb, Croatia, October 17-21, 2021, Proceedings, Part {II}},
  series       = {Lecture Notes in Computer Science},
  volume       = {12697},
  pages        = {562--567},
  publisher    = {Springer},
  year         = {2021},
  url          = {https://doi.org/10.1007/978-3-030-77886-6\_19},
  doi          = {10.1007/978-3-030-77886-6\_19},
  timestamp    = {Fri, 04 Jul 2025 22:06:22 +0200},
  biburl       = {https://dblp.org/rec/conf/eurocrypt/Roberts21.bib},
  bibsource    = {dblp computer science bibliography, https://dblp.org}
}

@article{BenSat16,
  author       = {Shalev Ben{-}David and
                  Or Sattath},
  title        = {Quantum Tokens for Digital Signatures},
  journal      = {Quantum},
  volume       = {7},
  pages        = {901},
  year         = {2023},
  url          = {https://doi.org/10.22331/q-2023-01-19-901},
  doi          = {10.22331/Q-2023-01-19-901},
  timestamp    = {Tue, 02 Jan 2024 12:43:49 +0100},
  biburl       = {https://dblp.org/rec/journals/quantum/BenDavidS23.bib},
  bibsource    = {dblp computer science bibliography, https://dblp.org}
}

@inproceedings{TCC:GenGorHal15,
  author       = {Craig Gentry and
                  Sergey Gorbunov and
                  Shai Halevi},
  editor       = {Yevgeniy Dodis and
                  Jesper Buus Nielsen},
  title        = {Graph-Induced Multilinear Maps from Lattices},
  booktitle    = {Theory of Cryptography - 12th Theory of Cryptography Conference, {TCC}
                  2015, Warsaw, Poland, March 23-25, 2015, Proceedings, Part {II}},
  series       = {Lecture Notes in Computer Science},
  volume       = {9015},
  pages        = {498--527},
  publisher    = {Springer},
  year         = {2015},
  url          = {https://doi.org/10.1007/978-3-662-46497-7\_20},
  doi          = {10.1007/978-3-662-46497-7\_20},
  timestamp    = {Tue, 20 Aug 2019 15:27:43 +0200},
  biburl       = {https://dblp.org/rec/conf/tcc/GentryGH15.bib},
  bibsource    = {dblp computer science bibliography, https://dblp.org}
}

@inproceedings{TCC:BGMZ18,
  author       = {James Bartusek and
                  Jiaxin Guan and
                  Fermi Ma and
                  Mark Zhandry},
  editor       = {Amos Beimel and
                  Stefan Dziembowski},
  title        = {Return of {GGH15:} Provable Security Against Zeroizing Attacks},
  booktitle    = {Theory of Cryptography - 16th International Conference, {TCC} 2018,
                  Panaji, India, November 11-14, 2018, Proceedings, Part {II}},
  series       = {Lecture Notes in Computer Science},
  volume       = {11240},
  pages        = {544--574},
  publisher    = {Springer},
  year         = {2018},
  url          = {https://doi.org/10.1007/978-3-030-03810-6\_20},
  doi          = {10.1007/978-3-030-03810-6\_20},
  timestamp    = {Mon, 03 Mar 2025 21:23:03 +0100},
  biburl       = {https://dblp.org/rec/conf/tcc/BartusekGMZ18.bib},
  bibsource    = {dblp computer science bibliography, https://dblp.org}
}

@inproceedings{EPRINT:BDGM20b,
  author       = {Zvika Brakerski and
                  Nico D{\"{o}}ttling and
                  Sanjam Garg and
                  Giulio Malavolta},
  editor       = {Mikolaj Bojanczyk and
                  Emanuela Merelli and
                  David P. Woodruff},
  title        = {Factoring and Pairings Are Not Necessary for {IO:} Circular-Secure
                  {LWE} Suffices},
  booktitle    = {49th International Colloquium on Automata, Languages, and Programming,
                  {ICALP} 2022, July 4-8, 2022, Paris, France},
  series       = {LIPIcs},
  volume       = {229},
  pages        = {28:1--28:20},
  publisher    = {Schloss Dagstuhl - Leibniz-Zentrum f{\"{u}}r Informatik},
  year         = {2022},
  url          = {https://doi.org/10.4230/LIPIcs.ICALP.2022.28},
  doi          = {10.4230/LIPICS.ICALP.2022.28},
  timestamp    = {Wed, 21 Aug 2024 22:46:00 +0200},
  biburl       = {https://dblp.org/rec/conf/icalp/BrakerskiDGM22.bib},
  bibsource    = {dblp computer science bibliography, https://dblp.org}
}

@inproceedings{EC:WeeWic21,
  author       = {Hoeteck Wee and
                  Daniel Wichs},
  editor       = {Anne Canteaut and
                  Fran{\c{c}}ois{-}Xavier Standaert},
  title        = {Candidate Obfuscation via Oblivious {LWE} Sampling},
  booktitle    = {Advances in Cryptology - {EUROCRYPT} 2021 - 40th Annual International
                  Conference on the Theory and Applications of Cryptographic Techniques,
                  Zagreb, Croatia, October 17-21, 2021, Proceedings, Part {III}},
  series       = {Lecture Notes in Computer Science},
  volume       = {12698},
  pages        = {127--156},
  publisher    = {Springer},
  year         = {2021},
  url          = {https://doi.org/10.1007/978-3-030-77883-5\_5},
  doi          = {10.1007/978-3-030-77883-5\_5},
  timestamp    = {Wed, 16 Jun 2021 12:53:21 +0200},
  biburl       = {https://dblp.org/rec/conf/eurocrypt/WeeW21.bib},
  bibsource    = {dblp computer science bibliography, https://dblp.org}
}

@inproceedings{FOCS:Shor94,
  author       = {Peter W. Shor},
  title        = {Algorithms for Quantum Computation: Discrete Logarithms and Factoring},
  booktitle    = {35th Annual Symposium on Foundations of Computer Science, Santa Fe,
                  New Mexico, USA, November 20-22, 1994},
  pages        = {124--134},
  publisher    = {{IEEE} Computer Society},
  year         = {1994},
  url          = {https://doi.org/10.1109/SFCS.1994.365700},
  doi          = {10.1109/SFCS.1994.365700},
  timestamp    = {Tue, 08 Jul 2025 16:44:55 +0200},
  biburl       = {https://dblp.org/rec/conf/focs/Shor94.bib},
  bibsource    = {dblp computer science bibliography, https://dblp.org}
}

@article{CJS14,
  author       = {Andrew M. Childs and
                  David Jao and
                  Vladimir Soukharev},
  title        = {Constructing elliptic curve isogenies in quantum subexponential time},
  journal      = {J. Math. Cryptol.},
  volume       = {8},
  number       = {1},
  pages        = {1--29},
  year         = {2014},
  url          = {https://doi.org/10.1515/jmc-2012-0016},
  doi          = {10.1515/JMC-2012-0016},
  timestamp    = {Sun, 02 Oct 2022 15:41:49 +0200},
  biburl       = {https://dblp.org/rec/journals/jmc/ChildsJS14.bib},
  bibsource    = {dblp computer science bibliography, https://dblp.org}
}

@article{DJP14,
  author       = {Luca {De Feo} and
                  David Jao and
                  J{\'{e}}r{\^{o}}me Pl{\^{u}}t},
  title        = {Towards quantum-resistant cryptosystems from supersingular elliptic
                  curve isogenies},
  journal      = {J. Math. Cryptol.},
  volume       = {8},
  number       = {3},
  pages        = {209--247},
  year         = {2014},
  url          = {https://doi.org/10.1515/jmc-2012-0015},
  doi          = {10.1515/JMC-2012-0015},
  timestamp    = {Sun, 02 Oct 2022 15:41:49 +0200},
  biburl       = {https://dblp.org/rec/journals/jmc/FeoJP14.bib},
  bibsource    = {dblp computer science bibliography, https://dblp.org}
}

@article{CD20,
  author       = {Leonardo Col{\`{o}} and
                  David Kohel},
  title        = {Orienting supersingular isogeny graphs},
  journal      = {J. Math. Cryptol.},
  volume       = {14},
  number       = {1},
  pages        = {414--437},
  year         = {2020},
  url          = {https://doi.org/10.1515/jmc-2019-0034},
  doi          = {10.1515/JMC-2019-0034},
  timestamp    = {Sun, 06 Oct 2024 21:33:21 +0200},
  biburl       = {https://dblp.org/rec/journals/jmc/ColoK20.bib},
  bibsource    = {dblp computer science bibliography, https://dblp.org}
}

@inproceedings{PKC:DeFMey20,
  author       = {Luca {De Feo} and
                  Michael Meyer},
  editor       = {Aggelos Kiayias and
                  Markulf Kohlweiss and
                  Petros Wallden and
                  Vassilis Zikas},
  title        = {Threshold Schemes from Isogeny Assumptions},
  booktitle    = {Public-Key Cryptography - {PKC} 2020 - 23rd {IACR} International Conference
                  on Practice and Theory of Public-Key Cryptography, Edinburgh, UK,
                  May 4-7, 2020, Proceedings, Part {II}},
  series       = {Lecture Notes in Computer Science},
  volume       = {12111},
  pages        = {187--212},
  publisher    = {Springer},
  year         = {2020},
  url          = {https://doi.org/10.1007/978-3-030-45388-6\_7},
  doi          = {10.1007/978-3-030-45388-6\_7},
  timestamp    = {Sun, 04 Aug 2024 19:42:40 +0200},
  biburl       = {https://dblp.org/rec/conf/pkc/FeoM20.bib},
  bibsource    = {dblp computer science bibliography, https://dblp.org}
}

@inproceedings{EC:Peikert20,
  author       = {Chris Peikert},
  editor       = {Anne Canteaut and
                  Yuval Ishai},
  title        = {He Gives {C}-Sieves on the {CSIDH}},
  booktitle    = {Advances in Cryptology - {EUROCRYPT} 2020 - 39th Annual International
                  Conference on the Theory and Applications of Cryptographic Techniques,
                  Zagreb, Croatia, May 10-14, 2020, Proceedings, Part {II}},
  series       = {Lecture Notes in Computer Science},
  volume       = {12106},
  pages        = {463--492},
  publisher    = {Springer},
  year         = {2020},
  url          = {https://doi.org/10.1007/978-3-030-45724-2\_16},
  doi          = {10.1007/978-3-030-45724-2\_16},
  timestamp    = {Mon, 04 May 2020 14:35:02 +0200},
  biburl       = {https://dblp.org/rec/conf/eurocrypt/Peikert20.bib},
  bibsource    = {dblp computer science bibliography, https://dblp.org}
}

@inproceedings{EC:BonSch20,
  author       = {Xavier Bonnetain and
                  Andr{\'{e}} Schrottenloher},
  editor       = {Anne Canteaut and
                  Yuval Ishai},
  title        = {Quantum Security Analysis of {CSIDH}},
  booktitle    = {Advances in Cryptology - {EUROCRYPT} 2020 - 39th Annual International
                  Conference on the Theory and Applications of Cryptographic Techniques,
                  Zagreb, Croatia, May 10-14, 2020, Proceedings, Part {II}},
  series       = {Lecture Notes in Computer Science},
  volume       = {12106},
  pages        = {493--522},
  publisher    = {Springer},
  year         = {2020},
  url          = {https://doi.org/10.1007/978-3-030-45724-2\_17},
  doi          = {10.1007/978-3-030-45724-2\_17},
  timestamp    = {Mon, 04 May 2020 14:35:02 +0200},
  biburl       = {https://dblp.org/rec/conf/eurocrypt/BonnetainS20.bib},
  bibsource    = {dblp computer science bibliography, https://dblp.org}
}

@inproceedings{TCC:AlaMalRah22,
  author       = {Navid Alamati and
                  Giulio Malavolta and
                  Ahmadreza Rahimi},
  editor       = {Eike Kiltz and
                  Vinod Vaikuntanathan},
  title        = {Candidate Trapdoor Claw-Free Functions from Group Actions with Applications
                  to Quantum Protocols},
  booktitle    = {Theory of Cryptography - 20th International Conference, {TCC} 2022,
                  Chicago, IL, USA, November 7-10, 2022, Proceedings, Part {I}},
  series       = {Lecture Notes in Computer Science},
  volume       = {13747},
  pages        = {266--293},
  publisher    = {Springer},
  year         = {2022},
  url          = {https://doi.org/10.1007/978-3-031-22318-1\_10},
  doi          = {10.1007/978-3-031-22318-1\_10},
  timestamp    = {Mon, 05 Feb 2024 20:34:43 +0100},
  biburl       = {https://dblp.org/rec/conf/tcc/AlamatiMR22.bib},
  bibsource    = {dblp computer science bibliography, https://dblp.org}
}

@misc{EPRINT:MaiMar22,
  author =        {Luciano Maino and Chloe Martindale},
  howpublished =  {Cryptology ePrint Archive, Report 2022/1026},
  title =         {An attack on {SIDH} with arbitrary starting curve},
  year =          {2022},
  url =           {https://eprint.iacr.org/2022/1026},
}

@inproceedings{EC:CasDec23,
  author       = {Wouter Castryck and
                  Thomas Decru},
  editor       = {Carmit Hazay and
                  Martijn Stam},
  title        = {An Efficient Key Recovery Attack on {SIDH}},
  booktitle    = {Advances in Cryptology - {EUROCRYPT} 2023 - 42nd Annual International
                  Conference on the Theory and Applications of Cryptographic Techniques,
                  Lyon, France, April 23-27, 2023, Proceedings, Part {V}},
  series       = {Lecture Notes in Computer Science},
  volume       = {14008},
  pages        = {423--447},
  publisher    = {Springer},
  year         = {2023},
  url          = {https://doi.org/10.1007/978-3-031-30589-4\_15},
  doi          = {10.1007/978-3-031-30589-4\_15},
  timestamp    = {Sat, 29 Apr 2023 19:25:02 +0200},
  biburl       = {https://dblp.org/rec/conf/eurocrypt/CastryckD23.bib},
  bibsource    = {dblp computer science bibliography, https://dblp.org}
}

@inproceedings{EC:Robert23,
  author       = {Damien Robert},
  editor       = {Carmit Hazay and
                  Martijn Stam},
  title        = {Breaking {SIDH} in Polynomial Time},
  booktitle    = {Advances in Cryptology - {EUROCRYPT} 2023 - 42nd Annual International
                  Conference on the Theory and Applications of Cryptographic Techniques,
                  Lyon, France, April 23-27, 2023, Proceedings, Part {V}},
  series       = {Lecture Notes in Computer Science},
  volume       = {14008},
  pages        = {472--503},
  publisher    = {Springer},
  year         = {2023},
  url          = {https://doi.org/10.1007/978-3-031-30589-4\_17},
  doi          = {10.1007/978-3-031-30589-4\_17},
  timestamp    = {Thu, 20 Apr 2023 11:35:02 +0200},
  biburl       = {https://dblp.org/rec/conf/eurocrypt/Robert23.bib},
  bibsource    = {dblp computer science bibliography, https://dblp.org}
}

@misc{Panny23,
  author =        {Lorenz Panny},
  note =          {\url{https://yx7.cc/blah/2023-04-14.html}},
  title =         {{CSI}‑{FiSh} really isn't polynomial‑time},
  year =          {2023},
}

@inproceedings{Stickel05,
  author       = {Eberhard Stickel},
  title        = {A New Method for Exchanging Secret Keys},
  booktitle    = {Third International Conference on Information Technology and Applications
                  {(ICITA} 2005), 4-7 July 2005, Sydney, Australia},
  pages        = {426--430},
  publisher    = {{IEEE} Computer Society},
  year         = {2005},
  url          = {https://doi.org/10.1109/ICITA.2005.33},
  doi          = {10.1109/ICITA.2005.33},
  timestamp    = {Fri, 24 Mar 2023 00:04:17 +0100},
  biburl       = {https://dblp.org/rec/conf/icita/Stickel05.bib},
  bibsource    = {dblp computer science bibliography, https://dblp.org}
}

@inproceedings{Shpilrain08,
  address =       {Berlin, Heidelberg},
  author =        {Shpilrain, Vladimir},
  booktitle =     {Computer Science -- Theory and Applications},
  editor =        {Hirsch, Edward A. and Razborov, Alexander A. and
                   Semenov, Alexei and Slissenko, Anatol},
  pages =         {283--288},
  publisher =     {Springer Berlin Heidelberg},
  title =         {Cryptanalysis of {S}tickel's Key Exchange Scheme},
  year =          {2008},
}

@inproceedings{AC:MonSha24,
  author       = {Hart Montgomery and
                  Shahed Sharif},
  editor       = {Kai{-}Min Chung and
                  Yu Sasaki},
  title        = {Quantum Money from Class Group Actions on Elliptic Curves},
  booktitle    = {Advances in Cryptology - {ASIACRYPT} 2024 - 30th International Conference
                  on the Theory and Application of Cryptology and Information Security,
                  Kolkata, India, December 9-13, 2024, Proceedings, Part {IX}},
  series       = {Lecture Notes in Computer Science},
  volume       = {15492},
  pages        = {33--64},
  publisher    = {Springer},
  year         = {2024},
  url          = {https://doi.org/10.1007/978-981-96-0947-5\_2},
  doi          = {10.1007/978-981-96-0947-5\_2},
  timestamp    = {Wed, 08 Jan 2025 21:12:46 +0100},
  biburl       = {https://dblp.org/rec/conf/asiacrypt/MontgomeryS24.bib},
  bibsource    = {dblp computer science bibliography, https://dblp.org}
}

@inproceedings{C:MutZha25,
  author =        {Saachi Mutreja and Mark Zhandry},
  booktitle =     {CRYPTO 2025},
  title =         {Quantum State Group Actions},
  year =          {2025},
  note =          {to appear}
}

@misc{EPRINT:Doliskani25,
  author =        {Jake Doliskani},
  howpublished =  {Cryptology ePrint Archive, Report 2025/092},
  title =         {Public-Key Quantum Money From Standard Assumptions
                   (In The Generic Model)},
  year =          {2025},
  url =           {https://eprint.iacr.org/2025/092},
}

@misc{DolMirMou25,
  author       = {Jake Doliskani and
                  Morteza Mirzaei and
                  Ali Mousavi},
  title        = {Public-Key Quantum Money and Fast Real Transforms},
  journal      = {CoRR},
  volume       = {abs/2503.18890},
  year         = {2025},
  url          = {https://doi.org/10.48550/arXiv.2503.18890},
  doi          = {10.48550/ARXIV.2503.18890},
  eprinttype    = {arXiv},
  eprint       = {2503.18890},
  timestamp    = {Thu, 01 May 2025 20:28:39 +0200},
  biburl       = {https://dblp.org/rec/journals/corr/abs-2503-18890.bib},
  bibsource    = {dblp computer science bibliography, https://dblp.org}
}

@inproceedings{STOC:BosNehZha25,
  author       = {John Bostanci and
                  Barak Nehoran and
                  Mark Zhandry},
  editor       = {Michal Kouck{\'{y}} and
                  Nikhil Bansal},
  title        = {A General Quantum Duality for Representations of Groups with Applications
                  to Quantum Money, Lightning, and Fire},
  booktitle    = {Proceedings of the 57th Annual {ACM} Symposium on Theory of Computing,
                  {STOC} 2025, Prague, Czechia, June 23-27, 2025},
  pages        = {201--212},
  publisher    = {{ACM}},
  year         = {2025},
  url          = {https://doi.org/10.1145/3717823.3718195},
  doi          = {10.1145/3717823.3718195},
  timestamp    = {Sun, 06 Jul 2025 13:23:42 +0200},
  biburl       = {https://dblp.org/rec/conf/stoc/BostanciNZ25.bib},
  bibsource    = {dblp computer science bibliography, https://dblp.org}
}

@inproceedings{VIETCRYPT:Rogaway06,
  author       = {Phillip Rogaway},
  editor       = {Phong Q. Nguyen},
  title        = {Formalizing Human Ignorance},
  booktitle    = {Progressin Cryptology - {VIETCRYPT} 2006, First International Conference
                  on Cryptology in Vietnam, Hanoi, Vietnam, September 25-28, 2006, Revised
                  Selected Papers},
  series       = {Lecture Notes in Computer Science},
  volume       = {4341},
  pages        = {211--228},
  publisher    = {Springer},
  year         = {2006},
  url          = {https://doi.org/10.1007/11958239\_14},
  doi          = {10.1007/11958239\_14},
  timestamp    = {Tue, 14 May 2019 10:00:55 +0200},
  biburl       = {https://dblp.org/rec/conf/vietcrypt/Rogaway06.bib},
  bibsource    = {dblp computer science bibliography, https://dblp.org}
}

@article{CheMos01,
  author       = {Kevin K. H. Cheung and
                  Michele Mosca},
  title        = {Decomposing finite {A}belian groups},
  journal      = {Quantum Inf. Comput.},
  volume       = {1},
  number       = {3},
  pages        = {26--32},
  year         = {2001},
  url          = {https://doi.org/10.26421/QIC1.3-2},
  doi          = {10.26421/QIC1.3-2},
  timestamp    = {Fri, 30 Apr 2021 11:06:00 +0200},
  biburl       = {https://dblp.org/rec/journals/qic/CheungM01.bib},
  bibsource    = {dblp computer science bibliography, https://dblp.org}
}

@inproceedings{PROVSEC:JagSch08,
  author       = {Tibor Jager and
                  J{\"{o}}rg Schwenk},
  editor       = {Joonsang Baek and
                  Feng Bao and
                  Kefei Chen and
                  Xuejia Lai},
  title        = {On the Equivalence of Generic Group Models},
  booktitle    = {Provable Security, Second International Conference, ProvSec 2008,
                  Shanghai, China, October 30 - November 1, 2008. Proceedings},
  series       = {Lecture Notes in Computer Science},
  volume       = {5324},
  pages        = {200--209},
  publisher    = {Springer},
  year         = {2008},
  url          = {https://doi.org/10.1007/978-3-540-88733-1\_14},
  doi          = {10.1007/978-3-540-88733-1\_14},
  timestamp    = {Tue, 14 May 2019 10:00:45 +0200},
  biburl       = {https://dblp.org/rec/conf/provsec/JagerS08.bib},
  bibsource    = {dblp computer science bibliography, https://dblp.org}
}

@inproceedings{C:Zhandry19,
  author       = {Mark Zhandry},
  editor       = {Alexandra Boldyreva and
                  Daniele Micciancio},
  title        = {How to Record Quantum Queries, and Applications to Quantum Indifferentiability},
  booktitle    = {Advances in Cryptology - {CRYPTO} 2019 - 39th Annual International
                  Cryptology Conference, Santa Barbara, CA, USA, August 18-22, 2019,
                  Proceedings, Part {II}},
  series       = {Lecture Notes in Computer Science},
  volume       = {11693},
  pages        = {239--268},
  publisher    = {Springer},
  year         = {2019},
  url          = {https://doi.org/10.1007/978-3-030-26951-7\_9},
  doi          = {10.1007/978-3-030-26951-7\_9},
  timestamp    = {Mon, 03 Mar 2025 21:01:05 +0100},
  biburl       = {https://dblp.org/rec/conf/crypto/Zhandry19.bib},
  bibsource    = {dblp computer science bibliography, https://dblp.org}
}

@article{BBBV97,
  author       = {Charles H. Bennett and
                  Ethan Bernstein and
                  Gilles Brassard and
                  Umesh V. Vazirani},
  title        = {Strengths and Weaknesses of Quantum Computing},
  journal      = {{SIAM} J. Comput.},
  volume       = {26},
  number       = {5},
  pages        = {1510--1523},
  year         = {1997},
  url          = {https://doi.org/10.1137/S0097539796300933},
  doi          = {10.1137/S0097539796300933},
  timestamp    = {Wed, 14 Nov 2018 10:45:06 +0100},
  biburl       = {https://dblp.org/rec/journals/siamcomp/BennettBBV97.bib},
  bibsource    = {dblp computer science bibliography, https://dblp.org}
}

@inproceedings{AC:Zhandry21,
  author       = {Mark Zhandry},
  editor       = {Mehdi Tibouchi and
                  Huaxiong Wang},
  title        = {Redeeming Reset Indifferentiability and Applications to Post-quantum
                  Security},
  booktitle    = {Advances in Cryptology - {ASIACRYPT} 2021 - 27th International Conference
                  on the Theory and Application of Cryptology and Information Security,
                  Singapore, December 6-10, 2021, Proceedings, Part {I}},
  series       = {Lecture Notes in Computer Science},
  volume       = {13090},
  pages        = {518--548},
  publisher    = {Springer},
  year         = {2021},
  url          = {https://doi.org/10.1007/978-3-030-92062-3\_18},
  doi          = {10.1007/978-3-030-92062-3\_18},
  timestamp    = {Mon, 03 Mar 2025 20:57:57 +0100},
  biburl       = {https://dblp.org/rec/conf/asiacrypt/Zhandry21.bib},
  bibsource    = {dblp computer science bibliography, https://dblp.org}
}

@inproceedings{C:Damgard91,
  author       = {Ivan Damg{\aa}rd},
  editor       = {Joan Feigenbaum},
  title        = {Towards Practical Public Key Systems Secure Against Chosen Ciphertext
                  Attacks},
  booktitle    = {Advances in Cryptology - {CRYPTO} '91, 11th Annual International Cryptology
                  Conference, Santa Barbara, California, USA, August 11-15, 1991, Proceedings},
  series       = {Lecture Notes in Computer Science},
  volume       = {576},
  pages        = {445--456},
  publisher    = {Springer},
  year         = {1991},
  url          = {https://doi.org/10.1007/3-540-46766-1\_36},
  doi          = {10.1007/3-540-46766-1\_36},
  timestamp    = {Fri, 17 Jul 2020 16:12:45 +0200},
  biburl       = {https://dblp.org/rec/conf/crypto/Damgard91.bib},
  bibsource    = {dblp computer science bibliography, https://dblp.org}
}

@article{Winter99,
  author       = {Andreas J. Winter},
  title        = {Coding theorem and strong converse for quantum channels},
  journal      = {{IEEE} Trans. Inf. Theory},
  volume       = {45},
  number       = {7},
  pages        = {2481--2485},
  year         = {1999},
  url          = {https://doi.org/10.1109/18.796385},
  doi          = {10.1109/18.796385},
  timestamp    = {Tue, 10 Mar 2020 10:49:17 +0100},
  biburl       = {https://dblp.org/rec/journals/tit/Winter99.bib},
  bibsource    = {dblp computer science bibliography, https://dblp.org}
}

@inproceedings{EC:DeFGal19,
  author       = {Luca {De Feo} and
                  Steven D. Galbraith},
  editor       = {Yuval Ishai and
                  Vincent Rijmen},
  title        = {SeaSign: Compact Isogeny Signatures from Class Group Actions},
  booktitle    = {Advances in Cryptology - {EUROCRYPT} 2019 - 38th Annual International
                  Conference on the Theory and Applications of Cryptographic Techniques,
                  Darmstadt, Germany, May 19-23, 2019, Proceedings, Part {III}},
  series       = {Lecture Notes in Computer Science},
  volume       = {11478},
  pages        = {759--789},
  publisher    = {Springer},
  year         = {2019},
  url          = {https://doi.org/10.1007/978-3-030-17659-4\_26},
  doi          = {10.1007/978-3-030-17659-4\_26},
  timestamp    = {Fri, 31 May 2019 09:51:30 +0200},
  biburl       = {https://dblp.org/rec/conf/eurocrypt/FeoG19.bib},
  bibsource    = {dblp computer science bibliography, https://dblp.org}
}

@misc{EPRINT:NOTT20,
  author =        {Kohei Nakagawa and Hiroshi Onuki and Atsushi Takayasu and
                   Tsuyoshi Takagi},
  howpublished =  {Cryptology ePrint Archive, Report 2020/181},
  title =         {{${L_1}$}-Norm Ball for {CSIDH}: Optimal Strategy for
                   Choosing the Secret Key Space},
  year =          {2020},
  url =           {https://eprint.iacr.org/2020/181},
}

@inproceedings{DBLP:conf/innovations/Zhandry24,
  author       = {Mark Zhandry},
  editor       = {Venkatesan Guruswami},
  title        = {Quantum Money from {A}belian Group Actions},
  booktitle    = {15th Innovations in Theoretical Computer Science Conference, {ITCS}
                  2024, January 30 to February 2, 2024, Berkeley, CA, {USA}},
  series       = {LIPIcs},
  volume       = {287},
  pages        = {101:1--101:23},
  publisher    = {Schloss Dagstuhl - Leibniz-Zentrum f{\"{u}}r Informatik},
  year         = {2024},
  url          = {https://doi.org/10.4230/LIPIcs.ITCS.2024.101},
  doi          = {10.4230/LIPICS.ITCS.2024.101},
  timestamp    = {Wed, 21 Aug 2024 22:46:00 +0200},
  biburl       = {https://dblp.org/rec/conf/innovations/Zhandry24.bib},
  bibsource    = {dblp computer science bibliography, https://dblp.org}
}

@article{DBLP:journals/toc/AaronsonC13,
  author       = {Scott Aaronson and
                  Paul F. Christiano},
  title        = {Quantum Money from Hidden Subspaces},
  journal      = {Theory Comput.},
  volume       = {9},
  pages        = {349--401},
  year         = {2013},
  url          = {https://doi.org/10.4086/toc.2013.v009a009},
  doi          = {10.4086/TOC.2013.V009A009},
  timestamp    = {Thu, 01 Apr 2021 19:06:50 +0200},
  biburl       = {https://dblp.org/rec/journals/toc/AaronsonC13.bib},
  bibsource    = {dblp computer science bibliography, https://dblp.org}
}

@article{DBLP:journals/jacm/BrakerskiCMVV21,
  author       = {Zvika Brakerski and
                  Paul F. Christiano and
                  Urmila Mahadev and
                  Umesh V. Vazirani and
                  Thomas Vidick},
  title        = {A Cryptographic Test of Quantumness and Certifiable Randomness from
                  a Single Quantum Device},
  journal      = {J. {ACM}},
  volume       = {68},
  number       = {5},
  pages        = {31:1--31:47},
  year         = {2021},
  url          = {https://doi.org/10.1145/3441309},
  doi          = {10.1145/3441309},
  timestamp    = {Mon, 28 Aug 2023 21:30:20 +0200},
  biburl       = {https://dblp.org/rec/journals/jacm/BrakerskiCMVV21.bib},
  bibsource    = {dblp computer science bibliography, https://dblp.org}
}

@inproceedings{DBLP:conf/crypto/HhanYY24,
  author       = {Minki Hhan and
                  Takashi Yamakawa and
                  Aaram Yun},
  editor       = {Leonid Reyzin and
                  Douglas Stebila},
  title        = {Quantum Complexity for Discrete Logarithms and Related Problems},
  booktitle    = {Advances in Cryptology - {CRYPTO} 2024 - 44th Annual International
                  Cryptology Conference, Santa Barbara, CA, USA, August 18-22, 2024,
                  Proceedings, Part {VI}},
  series       = {Lecture Notes in Computer Science},
  volume       = {14925},
  pages        = {3--36},
  publisher    = {Springer},
  year         = {2024},
  url          = {https://doi.org/10.1007/978-3-031-68391-6\_1},
  doi          = {10.1007/978-3-031-68391-6\_1},
  timestamp    = {Wed, 28 Aug 2024 21:54:02 +0200},
  biburl       = {https://dblp.org/rec/conf/crypto/HhanYY24.bib},
  bibsource    = {dblp computer science bibliography, https://dblp.org}
}

@article{DBLP:journals/cacm/JainLS24,
  author       = {Aayush Jain and
                  Huijia Lin and
                  Amit Sahai},
  title        = {Indistinguishability Obfuscation from Well-Founded Assumptions},
  journal      = {Commun. {ACM}},
  volume       = {67},
  number       = {3},
  pages        = {97--105},
  year         = {2024},
  url          = {https://doi.org/10.1145/3611095},
  doi          = {10.1145/3611095},
  timestamp    = {Sun, 19 Jan 2025 14:03:21 +0100},
  biburl       = {https://dblp.org/rec/journals/cacm/JainLS24.bib},
  bibsource    = {dblp computer science bibliography, https://dblp.org}
}

@article{DBLP:journals/joc/MontgomeryZ24,
  author       = {Hart Montgomery and
                  Mark Zhandry},
  title        = {Full Quantum Equivalence of Group Action {DLog} and {CDH}, and More},
  journal      = {J. Cryptol.},
  volume       = {37},
  number       = {4},
  pages        = {39},
  year         = {2024},
  url          = {https://doi.org/10.1007/s00145-024-09521-6},
  doi          = {10.1007/S00145-024-09521-6},
  timestamp    = {Tue, 22 Oct 2024 21:08:59 +0200},
  biburl       = {https://dblp.org/rec/journals/joc/MontgomeryZ24.bib},
  bibsource    = {dblp computer science bibliography, https://dblp.org}
}

@article{DBLP:journals/jacm/Regev09,
  author       = {Oded Regev},
  title        = {On lattices, learning with errors, random linear codes, and cryptography},
  journal      = {J. {ACM}},
  volume       = {56},
  number       = {6},
  pages        = {34:1--34:40},
  year         = {2009},
  url          = {https://doi.org/10.1145/1568318.1568324},
  doi          = {10.1145/1568318.1568324},
  timestamp    = {Sun, 19 Jan 2025 14:36:53 +0100},
  biburl       = {https://dblp.org/rec/journals/jacm/Regev09.bib},
  bibsource    = {dblp computer science bibliography, https://dblp.org}
}

@article{DBLP:journals/jacm/YamakawaZ24,
  author       = {Takashi Yamakawa and
                  Mark Zhandry},
  title        = {Verifiable Quantum Advantage without Structure},
  journal      = {J. {ACM}},
  volume       = {71},
  number       = {3},
  pages        = {20},
  year         = {2024},
  url          = {https://doi.org/10.1145/3658665},
  doi          = {10.1145/3658665},
  timestamp    = {Sun, 19 Jan 2025 14:36:51 +0100},
  biburl       = {https://dblp.org/rec/journals/jacm/YamakawaZ24.bib},
  bibsource    = {dblp computer science bibliography, https://dblp.org}
}

@article{DBLP:journals/joc/Zhandry21,
  author       = {Mark Zhandry},
  title        = {Quantum Lightning Never Strikes the Same State Twice. Or: Quantum
                  Money from Cryptographic Assumptions},
  journal      = {J. Cryptol.},
  volume       = {34},
  number       = {1},
  pages        = {6},
  year         = {2021},
  url          = {https://doi.org/10.1007/s00145-020-09372-x},
  doi          = {10.1007/S00145-020-09372-X},
  timestamp    = {Mon, 03 Mar 2025 22:17:18 +0100},
  biburl       = {https://dblp.org/rec/journals/joc/Zhandry21.bib},
  bibsource    = {dblp computer science bibliography, https://dblp.org}
}


\end{document}